\newcommand{\proofend}{\hfill~$\Box$\\}
\newcommand{\ignore}[1]{}
\newcommand{\EdgeIsolated}[1]{$#1$-edge-cut}
\newcommand{\EdgeIsolatedIn}[1]{$#1$-edge-in}
\newcommand{\EdgeIsolatedOut}[1]{$#1$-edge-out}
\newcommand{\VertexIsolatedIn}[1]{$#1$-vertex-in}
\newcommand{\VertexIsolatedOut}[1]{$#1$-vertex-out}
\newcommand{\fEdgeIsoOut}[1]{{#1}EOut}
\newcommand{\fVertexIsoOut}[1]{{#1}VOut}
\begin{document}
\title{\bf Faster Algorithms for Computing Maximal $2$-Connected Subgraphs in Sparse Directed Graphs\footnote{A preliminary version of this paper appeared in~\cite{ChechikHILP17}}}

\author{
Shiri Chechik\footnote{Work partially done while visiting the Universit\`a di Roma ``Tor Vergata''. {This research was supported by the ISRAEL SCIENCE FOUNDATION
	(grant No. 1528/15).}}\\Tel Aviv University, Israel \\\texttt{schechik@cs.tau.ac.il}
\and
Thomas Dueholm Hansen\footnote{Work partially done while visiting the Universit\`a di Roma ``Tor Vergata''. Supported by the Carlsberg Foundation, grant no. CF14-0617.}\\Aarhus University, Denmark\\\texttt{tdh@cs.au.dk}
\and
Giuseppe F. Italiano\footnote{Partially supported by MIUR, the Italian Ministry of Education, University and Research, under Project AMANDA
(Algorithmics for MAssive and Networked DAta).}
\\Universit\`a di Roma ``Tor Vergata'', Italy\\\texttt{giuseppe.italiano@uniroma2.it}
\and
Veronika Loitzenbauer\footnote{Work partially done while visiting the Universit\`a di Roma ``Tor Vergata'' and the University of Michigan. 
The research leading to these results has received
funding from the Marshall Plan Foundation and the European Research Council under the
European Union’s Seventh Framework Programme (FP/2007-2013) / ERC Grant Agreement
no. 340506.}\\University of Vienna, Vienna, Austria\\\texttt{vl@cs.univie.ac.at}
\and
Nikos Parotsidis\\Universit\`a di Roma ``Tor Vergata'', Italy\\\texttt{nikos.parotsidis@uniroma2.it}}
\date{}

\maketitle

\begin{abstract}
Connectivity related concepts are of fundamental interest in graph
theory. The area has received extensive attention over four decades, but
many problems remain unsolved, especially for directed graphs.
A directed graph is $2$-edge-connected (resp., $2$-vertex-connected) if
the removal of any edge (resp., vertex) leaves the graph strongly
connected. In this paper we present improved algorithms for computing the maximal
$2$-edge- and $2$-vertex-connected subgraphs of a given directed graph. These 
problems were first studied more than 35 years ago, with $\widetilde{O}(mn)$ time
algorithms for graphs with $m$ edges and $n$ vertices being known since the late 
1980s. In contrast, the same problems for undirected graphs are known to be 
solvable in linear time.
Henzinger et al.\ [ICALP 2015] recently introduced $O(n^2)$ time algorithms for the 
directed case, thus improving the running times for dense graphs. Our new algorithms 
run in time $O(m^{3/2})$, which further improves the running times for sparse graphs.

The notion of $2$-connectivity naturally generalizes to $k$-connectivity for $k>2$.
For constant values of $k$, we extend one of our algorithms to compute 
the maximal $k$-edge-connected in time~$O(m^{3/2} \log{n})$,
improving again for sparse graphs the best known algorithm by
Henzinger et al.\ [ICALP 2015] 
that runs in $O(n^2 \log n)$ time. \end{abstract}

\makeatletter{}\section{Introduction}
Connectivity is one of the most well-studied notions in graph theory.
The literature covers many different aspects of connectivity related problems.
In this paper we study the problem of computing the maximal 
$k$-connected subgraphs of directed graphs.

\smallskip\noindent\textbf{Problem definition and related concepts.}
\emph{Strong connectivity.}
Let $G=(V,E)$ be a directed graph (digraph) with $m = \lvert E \rvert$ edges
and $n = \lvert V \rvert$ vertices.
The digraph $G$ is said to be \emph{strongly connected} if there is a directed path from
each vertex to every other vertex. 
The \emph{strongly connected components} (SCCs)
of~$G$ are its maximal strongly connected subgraphs.
Two vertices $u,v \in V$ are \emph{strongly connected} if they belong to the same
strongly connected component of~$G$.

\emph{2-edge connectivity.}
An edge of $G$ is a \emph{strong bridge} if its removal increases the number of
strongly connected components. Let $G$ be a strongly connected graph. We say that $G$ 
is $2$-edge-connected if it has no strong
bridges. Two vertices $v$ and $w$ are $2$-edge-connected if there are two
edge-disjoint paths from $v$ to $w$ and two edge-disjoint paths from $w$ to $v$.
A $2$-edge-connected component of $G$ is a maximal subset of vertices such that any
pair of distinct vertices is $2$-edge-connected. 
For a set of vertices~$C \subseteq V$ its induced subgraph $G[C]$ is  
a {\em maximal $2$-edge-connected subgraph} of $G$ if 
$G[C]$ is a $2$-edge-connected graph and no superset of $C$ has this property.
The $2$-edge-connected components of $G$ might be very different from the 
maximal $2$-edge-connected
subgraphs of $G$ because the two edge-disjoint paths between a pair
of vertices of a $2$-edge-connected component might use vertices that are
not in the $2$-edge-connected component. (See Figure~\ref{figure:notions} for an example.)

\begin{figure*}[t!]
	\begin{center}
		\includegraphics[trim={0cm 6.8cm 0cm 0cm}, clip=true, width=1.0\textwidth]{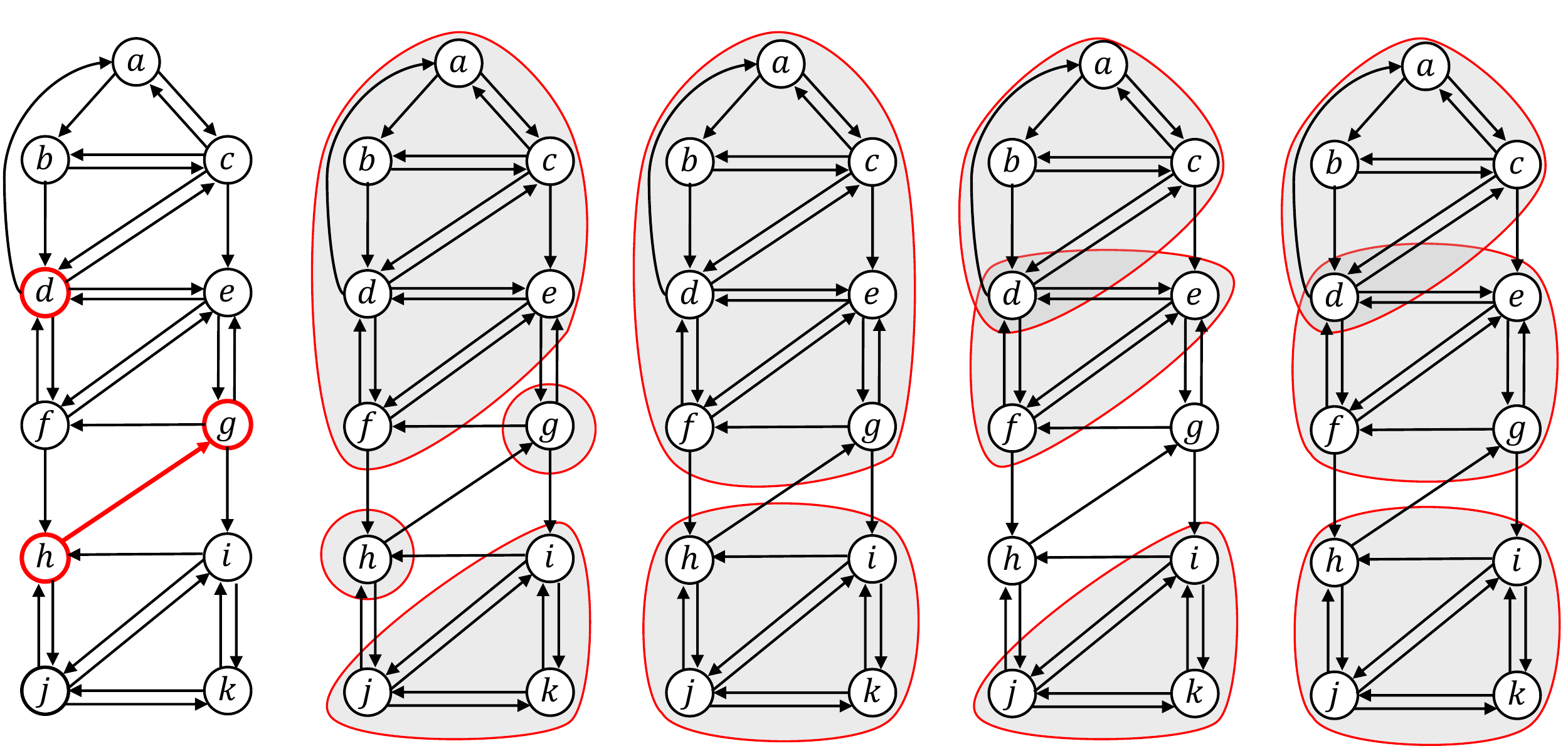}
	\end{center}	
		\hspace{1cm} (a) $G$ \hspace{1.7cm}  (b) $2\mathit{ECS}(G)$  \hspace{1.3cm}
		(c) $2\mathit{ECC}(G)$ \hspace{1.25cm}  (d) $2\mathit{VCS}(G)$ \hspace{1.35cm}  (e) $2\mathit{VCC}(G)$
		\caption{(a) A strongly connected digraph $G$; strong articulation points and strong bridges are shown in red. (b) The $2$-edge-connected subgraphs of $G$. (c) The $2$-edge-connected components of $G$. (d) The $2$-vertex-connected subgraphs of $G$.  (e) The $2$-vertex-connected components of $G$.
		}
		\label{figure:notions}
\end{figure*}

\emph{2-vertex connectivity.}
Analogous definitions can be given for $2$-vertex connectivity.
In particular, a vertex is a \emph{strong articulation point} if its removal
increases the number of strongly connected components of $G$.
Let $G$ be a strongly connected graph. The graph $G$ is $2$-vertex-connected if it has at least
three vertices and no strong articulation points.
Note that the condition on the minimum number of vertices disallows for degenerate
$2$-vertex-connected graphs consisting of two mutually adjacent vertices (i.e., two
vertices $v$ and $w$ and the two edges $(v,w)$ and $(w,v)$).
Two vertices $v$ and $w$ are $2$-vertex-connected if there are two internally
vertex-disjoint paths from $v$ to $w$ and two internally
vertex-disjoint paths from $w$ to $v$, i.e., the paths meet at $v$ and $w$ but not in-between (see also \cite{2VCB}).
A $2$-vertex-connected component of $G$ is a maximal subset of vertices such
that any distinct pair of vertices is $2$-vertex-connected.
For a set of vertices~$C \subseteq V$ its induced subgraph $G[C]$ is  
a {\em maximal $2$-vertex-connected subgraph} of $G$ if 
$G[C]$ is a $2$-vertex-connected graph and no superset of $C$ has this property. Note that the
$2$-vertex-connected components of $G$ might be very different from the maximal
$2$-vertex-connected subgraphs of $G$.

\emph{$k$-connectivity.}
The notions of $2$-edge and $2$-vertex connectivity extend naturally to $k$-edge and $k$-vertex connectivity.
Given a directed graph $G=(V,E)$, a set of edges $S$ is an \emph{edge cut} of size~$|S|$ if its removal increases the number of strongly connected components of~$G$.
A strongly connected graph
is $k$-edge-connected if it has no edge cut of size less than~$k$.
Two vertices $v$ and $w$ are $k$-edge-connected if there are $k$ edge-disjoint paths from $v$ to $w$ and $k$ edge-disjoint paths from $w$ to~$v$.
A $k$-edge-connected component of $G$ is a maximal subset of vertices such that any pair of distinct vertices is $k$-edge-connected.
For a set of vertices~$C \subseteq V$ its induced subgraph $G[C]$ is  
a {\em maximal $k$-edge-connected subgraph} of $G$ if 
$G[C]$ is a $k$-edge-connected graph and no superset of $C$ has this property.
A set of vertices $S$ is a \emph{vertex cut} of size $|S|$ if its removal increases the number of strongly connected components of $G$.
A strongly connected graph $G$ is $k$-vertex-connected if it has at least $k+1$ vertices and no vertex cut of size less than $k$.
Two vertices $v$ and $w$ are $k$-vertex-connected if there are $k$ internally vertex-disjoint paths from $v$ to $w$ and $k$ internally vertex-disjoint paths from $w$ to $v$.
A $k$-vertex-connected component of $G$ is a maximal subset of vertices such that any distinct pair of vertices is $k$-vertex-connected.
For a set of vertices~$C \subseteq V$ its induced subgraph $G[C]$ is  
a {\em maximal $k$-vertex-connected subgraph} of $G$ if 
$G[C]$ is a $k$-vertex-connected graph and no superset of $C$ has this property.

\emph{Undirected graphs.}
In undirected graphs a set of edges $S$ is an \emph{edge cut} of size $|S|$
if its removal increases the number of \emph{connected components} of the graph. 
An undirected connected graph is $k$-edge-connected if it has no edge cut of size 
less than $k$. The definitions of a vertex cut and of a $k$-vertex-connected graph
are analogous. The remaining definitions follow immediately from the definitions
for directed graphs.

Throughout the paper, we usually omit the word \emph{maximal} when referring to maximal 
$k$-edge- or $k$-vertex-connected subgraphs.

\smallskip\noindent\textbf{Our results.}
In this paper we present $O(m^{3/2})$ time algorithms
for computing the maximal $2$-edge-connected subgraphs and the maximal
$2$-vertex-connected subgraphs of a given directed graph with $m$~edges
and $n$~vertices. 
This is an improvement over the existing $O(n^2)$ time algorithms~\cite{2CC:HenzingerKL15}
whenever $m$ is $o(n^{4/3})$. 
The algorithm for $2$-edge-connected subgraphs is extended
to compute the maximal $k$-edge-connected subgraphs 
for any constant~$k \ge 2$ and 
runs in time~$O(m^{3/2} \log n)$, improving over the existing $O(n^2 \log n)$ time algorithm~\cite{2CC:HenzingerKL15}.
The maximal $k$-edge-connected (and $k$-vertex-connected)
subgraphs are defined for undirected graphs as they are for directed graphs.
We also show how to adjust
the algorithm
to compute the maximal $k$-edge-connected
subgraphs for undirected graphs in time $O((m+n\log n)\sqrt{n})$,
where $k$ is again viewed as a constant.
For the special case where $k=3$, the running time for computing 
the $3$-edge-connected 
subgraphs on undirected graphs is $O(m\sqrt{n})$.

\smallskip\noindent\textbf{Related work.}
In the literature the terms
``components'' and ``blocks'' have both been used to mean either $k$-connected components,
as defined above, or the maximal (induced) $k$-connected subgraphs; 
therefore we explicitly
use the term subgraphs for the latter in order to avoid further confusion.

\emph{Undirected graphs.}
It has been known for over 40 years how to compute the $2$-edge- and 
$2$-vertex-connected components of undirected graphs in linear time \cite{Tarjan72}.
While the $2$-edge-connected (resp., $2$-vertex-connected) components are equal
to the $2$-edge-connected (resp., $2$-vertex-connected) subgraphs in undirected graphs, 
this is no longer the case for $k>2$.
The first algorithm for computing the $3$-vertex-connected components in linear (in the number of edges)
time was by Hopcroft and Tarjan \cite{3-connectivity:ht}.
Later, Galil and Italiano~\cite{Galil:1991} reduced the computation of the 
$3$-edge-connected components to $3$-vertex-connected components, thus obtaining
a linear time algorithm for this case as well.
Kanevsky and Ramachandran~\cite{kanevsky1991improved} showed how to test whether a graph is $4$-vertex-connected in $O(n^2)$ time.
Over 20 years ago, Nagamochi and Watanabe~\cite{kECC:NW} presented an algorithm for computing the $k$-edge-connected components for $k>3$ in $O(m + k^2 n^2)$ time.
The best known algorithm for this problem runs in expected $\widetilde{O}(m+nk^3)$ time and was presented by Hariharan et al.~\cite{hariharan2007efficient}.
Their algorithm additionally computes a partial version of the Gomory-Hu tree~\cite{gomory1961multi}, that represents the edge-connectivity of the pairs whose edge-connectivity is less than $k$; the $k$-edge-connected components are contracted into singleton vertices in the tree.
Karger~\cite{Karger00} showed how to determine with high probability whether 
an undirected graph is $k$-edge-connected in $\widetilde{O}(m)$ time.
In a recent breakthrough, Kawarabayashi and Thorup~\cite{KawarabayashiT15} 
presented a deterministic algorithm with similar time bounds; 
Henzinger et al.~\cite{HenzingerRW17} improve the running time even beyond the randomized algorithm.
There is no study that explicitly considers the computation of the $k$-edge-connected
or the $k$-vertex-connected subgraphs of undirected graphs, however, the problem
can be reduced to the problem on directed graphs in a straightforward manner.
Furthermore, for undirected graphs the running time (which is implied 
by~\cite{gabow1995matroid}, see below) of the basic algorithm
for $k$-edge-connected subgraphs for constant~$k$ can be reduced to $O(n^2 \log n)$
by additionally maintaining a sparse certificate~\cite{EppsteinGIN97, 
NagamochiI92, Thorup07}.

\emph{$k$-connected components in digraphs.}
Very recently Georgiadis et al.~\cite{2ECB,2VCB} showed that the $2$-edge-connected
and the $2$-vertex-connected components of a directed graph can be computed in linear time.
Nagamochi and Watanabe~\cite{kECC:NW} gave an $O(k m n)$ time algorithm for computing
the $k$-edge-connected components in directed graphs.

\emph{$k$-edge-connected subgraphs in digraphs.}
A simple algorithm for computing the maximal $2$-edge-connected subgraphs is to
remove at least one strong bridge of a strongly connected component of the graph
and repeat on the resulting graph.
It is known since 1976 how to compute a strong bridge \cite{st:t} in $O(m+n\log n)$
time, and since 1985 in $O(m)$ time~\cite{GabowT85}, resulting in an $O(mn)$ time
algorithm for computing the $2$-edge-connected subgraphs of a directed graph.
Recently, Italiano et al.~\cite{Italiano2012} gave a linear time algorithm for
computing \emph{all} strong bridges of a directed graph in $O(m)$ time, of which
there can be $O(n)$ many.
A similar idea can be used to compute the $k$-edge-connected subgraphs.
In this case, in each iteration we remove the minimum edge cut of each strongly
connected component of the graph, if its size does not exceed $k-1$.
Since an edge cut of size $k$ can be computed in time $O(k m \log n)$
\cite{gabow1995matroid}, and in each iteration we disconnect at least one pair of
vertices, this algorithm runs in $O(kmn\log n)$ time.
Recently, Henzinger et al.~\cite{2CC:HenzingerKL15} presented an $O(n^2)$ time
algorithm for computing the $2$-edge-connected subgraphs of a directed graph
and an $O(n^2 \log n)$ time algorithm for the $k$-edge-connected subgraphs
for any constant~$k$. Their algorithm uses a sparsification technique introduced in
\cite{ChatterjeeH14, Henzinger1999} that can be used, under appropriate structural
properties, to replace a factor of $m$ in the running time of an algorithm by $n$.

\emph{$k$-vertex-connected subgraphs in digraphs.}
$2$-vertex-connected subgraphs were first studied in 1980 by Erusalimskii and Svetlov~\cite{erusalimskii1980bijoin}, but they did not analyze the running time of their algorithm.
Very recently, Jaberi \cite{2VCC:Jaberi2015} showed that their algorithm runs in $O(m^2 n)$ time and presented an $O(m n)$ time algorithm.
Prior to Jaberi, Makino \cite{Makino1988} gave an algorithm for computing the maximal
$k$-vertex-connected subgraphs of a directed graph in time $O(n \cdot \mathcal{S})$,
where $\mathcal{S}$ is the running time for computing a single vertex cut of size at
most  $k-1$.
Since one strong articulation point~\cite{Georgiadis10}, or even all the strong
articulation points~\cite{Italiano2012}, can be computed in linear time,
Makino's algorithm can be implemented so as to compute 
the $2$-vertex-connected subgraphs of a directed graph
in time $O(mn)$. Combined with Gabow's algorithm for identifying $k$-vertex
cuts~\cite{gabow2006using}, Makino's algorithm yields a running time of 
$O(m n \cdot (n + \min\{k^{5/2}, k n^{3/4}\}))$ for $k$-vertex-connected
subgraphs; an $O(k m n^2)$ time algorithm is already implied by combining it with~\cite{even1975algorithm}.
The recent algorithm of Henzinger et al.~\cite{2CC:HenzingerKL15} computes
the $2$-vertex-connected subgraphs in time~$O(n^2)$ and extends to the 
$k$-vertex-connected subgraphs for constant $k$ with a running time of $O(n^3)$.

\smallskip\noindent\textbf{Key Ideas.}
We next outline the main ideas behind our approach.
The basic algorithm for $2$-edge-connected subgraphs can be seen as maintaining
a partition of the vertices that is iteratively refined by identifying parts that
cannot be in the same $2$-edge-connected subgraph, which are then separated 
from each other
in the maintained partition. In the basic algorithm these parts are identified
by computing bridges and SCCs. The main technical
contribution of this work is a subroutine that can identify a ``small'' part
that can be separated from the rest of the graph by local depth-first searches
that, starting from one given vertex, explore only the edges in this small part
and a proportional number of edges outside of it.

For $2$-edge-connected subgraphs we call the subgraphs identified in this way
\emph{\EdgeIsolatedOut{1}} and \emph{\EdgeIsolatedIn{1}}
components\footnote{A similar notion called $2$-isolated set
was introduced in~\cite{2CC:HenzingerKL15}.}.
A \EdgeIsolatedOut{k} (resp., \EdgeIsolatedIn{k}) component of a vertex~$u$ is a subgraph (induced by some set of vertices) that contains~$u$ and has at most $k$ 
edges from (resp., to) the subgraph to (resp., from) the rest of the graph.
We start the searches for
these subgraphs from all vertices that have lost edges
since the last time bridges and SCCs were computed
and only recompute bridges and SCCs when no \EdgeIsolatedOut{1} or
\EdgeIsolatedIn{1} component with at most $\sqrt{m}$ edges exists\footnote{A similar
overall algorithmic structure was used in~\cite[Appendix~B]{2CC:HenzingerKL15} and,
for a different problem, e.g., in~\cite{ChatterjeeH14}.}.

The intuition for the local depth-first searches for edge connectivity
can be better understood in terms of maximum flow in uncapacitated graphs.
Assume there is a \EdgeIsolatedOut{1} component of a vertex~$u$. Since this
subgraph has at most one outgoing edge to the rest of the graph, the
vertex~$u$ can send at most one unit of flow to any vertex outside
of the subgraph. Thus if we find a path along which we can send one unit of flow
to some vertex outside of the subgraph and then look at the residual graph
given this flow, then there is no edge from the subgraph to the rest of the graph
in the residual graph. We find such a flow using depth-first search and then
use a second search to explore the subgraph that is still reachable from $u$
in the residual graph.

Finding $k-1$ paths to send flow out of a \EdgeIsolatedOut{(k-1)} component
is more difficult for $k > 2$. We show that one can exploit the properties of
depth-first search to find a set of $O(k)$ paths of which at least one of them
leaves the \EdgeIsolatedOut{(k-1)} component. As we have to do this for $k$ many
searches, each conducted in the residual graph after the previous search,
this yields an exponential dependence on $k$. For any constant $k > 2$
we compute the $k$-edge-connected subgraphs in time $O(m^{3/2} \log n)$ time,
where the additional factor of $\log n$ compared to $k=2$ is due to the 
increased cost of computing cuts with at most $k-1$ edges.

The notion of a \EdgeIsolatedOut{k} (resp., \EdgeIsolatedIn{k}) component of a vertex~$u$ is adjusted to vertex connectivity as follows.
A \VertexIsolatedOut{k} (resp., \VertexIsolatedIn{k}) component $S$ of a vertex~$u$ is a subgraph that contains~$u$ and at most $k$ vertices in the subgraph have 
edges from (resp., to) the subgraph to (resp., from) the rest of the graph.

For vertex connectivity some additional difficulties arise. First,
the $2$-vertex-connected subgraphs partition the edges rather than the vertices
(apart from degenerate cases),
i.e., when we find a strong articulation point and run our algorithm recursively on the subgraphs that it separates, the strong articulation point is included 
in each of these subgraphs.
Second, the intuition of flows and residual graphs cannot be applied directly;
instead, we let one depth-first search ``block'' specific vertices (those 
whose DFS subtree is adjacent to many edges) and let a second search ``unblock'' 
vertices such that it can explore the \VertexIsolatedOut{1} component but not 
the remaining graph.

It seems that the algorithm for computing $k$-edge-connected subgraphs can 
be extended to $k$-vertex-connected subgraphs by using the connection between 
flows and vertex connectivity shown in~\cite{EvenT75}. Additional details will 
appear in the full version of the paper.

\smallskip\noindent\textbf{Outline.}
After the preliminaries in Section~\ref{sec:preliminaries},
we present our algorithm for $2$-edge-connected
subgraphs in Section~\ref{sec:2edge}. 
We then describe the algorithm for $2$-vertex-connected subgraphs in Section~\ref{sec:2vertex}.
Finally, in Section~\ref{sec:kedge}, we state the results about $k$-edge-connectivity. 
 
\makeatletter{}
\section{Preliminaries}
\label{sec:preliminaries}
For a directed graph $G$ we denote by $V(G)$ its set of vertices and by $E(G)$
its set of edges. The reverse graph of a directed graph $G =(V,E)$, denoted by 
$G^R=(V,E^R)$, is the directed graph that results from $G$ after reversing the
direction of all edges. By $G \setminus S$ and $G \setminus Q$ we denote the 
graph~$G$ after the deletion of a set~$S$ of vertices and after the deletion of 
a set~$Q$ of edges, respectively.
We refer to the subgraph of~$G$ induced by the set of vertices~$S$ as $G[S]$.
Let $H$ be a strongly connected graph, or a strongly connected component of 
some larger graph.
We say that deleting a set of edges~$Q$ (resp., set of vertices~$S$) \emph{disconnects}
$H$, if $H \setminus Q$ (resp., $H\setminus S$) is not strongly connected.
Given a set of vertices~$C$, we say that a set of edges~$Q$ 
(resp., a set of vertices~$S$) disconnects $C$ from the rest of the 
graph if there is no pair of vertices $(x,y) \in C \times (V\setminus C)$
that are strongly connected in $G \setminus Q$ (resp., $G \setminus S$).
For the sake of simplicity, we write $S \subseteq G$, instead of $S \subseteq V(G)$,  to denote that a set of 
vertices~$S$ is a subset of the vertices of a graph~$G$.
We similarly write $Q \subseteq G$ instead of $Q\subseteq E(G)$, where $Q$ is a 
subset of the edges of the graph $G$.
Furthermore, we write $v\in G$ and $e\in G$ instead of $v \in V(G)$ and $e\in E(G)$, respectively.

We use the term tree to refer to a rooted tree with edges directed away from the root.
Given a tree~$T$, a vertex $u$ is an ancestor (resp., descendant)
of a vertex~$v$ if there is a directed path from $u$ to $v$ (resp., from $v$ to $u$)
in $T$. We denote by $T[u,v]$ the path from $u$ to $v$ in $T$.
We use $T(u)$ to denote the set of vertices that are descendants of $u$ in $T$.

There is a natural connection between edge cuts and maximum flow in unweighted graphs.
The maximum flow that can be sent from a source vertex $s$ to a target vertex $t$ in directed graphs with uncapacitated edges is equal to the number of edge-disjoint paths directed from $s$ to~$t$.
Therefore, the existence of a cut consisting of $k$ edges directed from a set of vertices $A$ to a set of vertices $B$ implies that the maximum flow that can be pushed from any vertex in $A$ to any vertex in $B$ is at most $k$.
Throughout the paper we implicitly use this connection between edge cuts and max flow.
We further assume that the reader is familiar with depth-first search (DFS), see, e.g., \cite{Tarjan72}.
 
\makeatletter{}\section{Maximal $2$-edge-connected subgraphs of a digraph}
\label{sec:2edge}

In this section we first show how to identify \EdgeIsolatedOut{1} components
that contain at most $\Delta$ edges in time proportional to $\Delta$.
Applied to the reverse graph, the same algorithm finds \EdgeIsolatedIn{1} components.
We then use this subroutine with $\Delta = \sqrt{m}$ to obtain an $O(m^{3/2})$
algorithm for computing the maximal $2$-edge-connected subgraphs of a given
directed graph.

\subsection{\EdgeIsolatedOut{1} and \EdgeIsolatedIn{1} components}
\label{sec:edgeIsolatedOne}

\begin{Definition}
Let $G = (V,E)$ be a digraph and $u \in V$ be a vertex. A \EdgeIsolatedOut{k} component of $u$ is a minimal subgraph $S$ of $G$ that contains $u$ and has at most $k$ outgoing edges to $G\setminus S$.
\end{Definition}

We similarly define a \EdgeIsolatedIn{k} component of $u$.

\begin{Definition}
	Let $G = (V,E)$ be a digraph and $u \in V$ be a vertex. A \EdgeIsolatedIn{k} component of $u$ is a minimal subgraph $S$ of $G$ that contains $u$ and has at most $k$ incoming edges from $G\setminus S$.
\end{Definition}

See Figure~\ref{figure:1-edge-out} for an example of a \EdgeIsolated{k} component and Figure~\ref{figure:1-edge-in} for an example of a \EdgeIsolatedIn{k} with $k=1$.
Note that $u$ may have more than one \EdgeIsolatedOut{k} (resp., \EdgeIsolatedIn{k}) component.
Also note that for $k' < k$, every \EdgeIsolatedOut{k'} component of $u$ is a \EdgeIsolatedOut{k} component of $u$ as well.
For the case when $k=1$, the outgoing (resp., incoming)
edge of a \EdgeIsolatedOut{1} (resp., \EdgeIsolatedIn{1}) component~$S$ is either
a strong bridge or an edge between strongly connected components of the graph.
Moreover, each $2$-edge-connected subgraph is either completely contained in~$S$
or in $G \setminus S$ (see also~\cite{2CC:HenzingerKL15}).

\begin{figure}[t!]
	\begin{center}
		\includegraphics[trim={0cm 6.8cm 20cm 0cm}, clip=true, width=.18\textwidth]{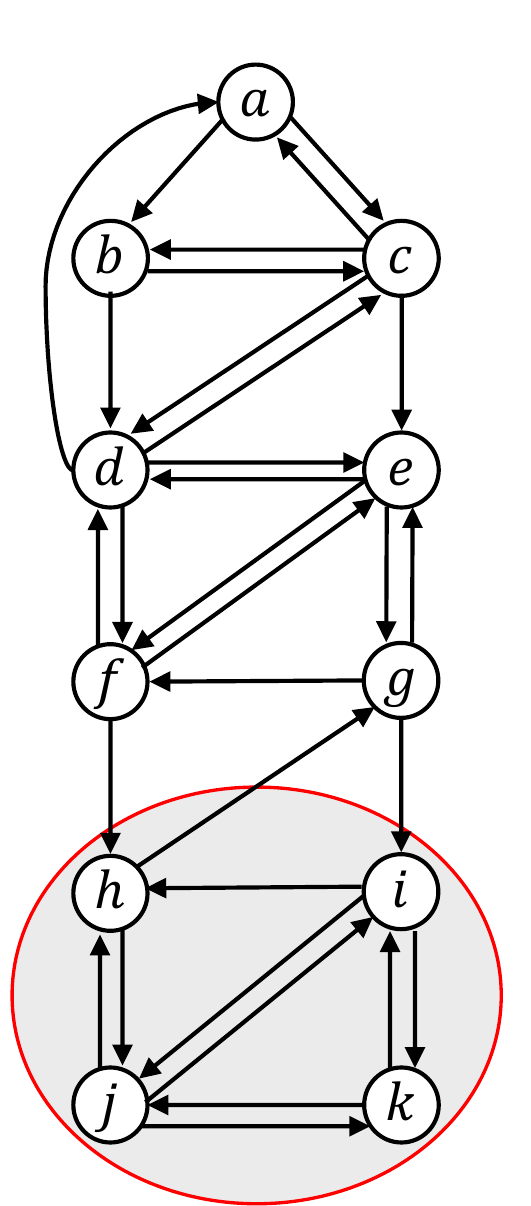}
	\end{center}
	\caption{An example of a \EdgeIsolatedOut{1} component of $j$.}
	\label{figure:1-edge-out}
\end{figure}

We next present an algorithm that takes as input a graph $G$, a vertex $u \in V(G)$, and a parameter $\Delta < m/2$, and that spends time at most $O(\Delta)$ to search for a \EdgeIsolatedOut{1} component of~$u$ in $G$. The algorithm may fail to find such a component, and we therefore prove the following guarantees about its outcome:
\begin{itemize}
\item
If $u$ has a \EdgeIsolatedOut{1} component with at most $\Delta$
edges, then the algorithm returns a \EdgeIsolatedOut{1} component for $u$ with at most
$2 \Delta$ edges.
\item
  If every \EdgeIsolatedOut{1} component of $u$ has more than $\Delta$ edges, then the algorithm may return a \EdgeIsolatedOut{1} component for $u$ with at most
$2 \Delta$ edges, but it may also return the empty set (i.e., fail to find a \EdgeIsolatedOut{1} component for $u$).
\end{itemize}
Note that by using exponential search in $\Delta$, the algorithm can find a \EdgeIsolatedOut{1} component for a given vertex $u$ in time that is linear in the number of edges of the smallest \EdgeIsolatedOut{1} component that contains $u$. For our purpose, however, it suffices to distinguish between small and large \EdgeIsolatedOut{1} components and only use one fixed choice of $\Delta$ (see Section~\ref{sec:2-edge-connected-strong-components}). We use the algorithm to quickly find a small \EdgeIsolatedOut{1} component~$S$, given a vertex~$u$ in~$S$.

\begin{figure}[t!]
	\begin{center}
		\includegraphics[trim={0cm 6.8cm 20cm 0cm}, clip=true, width=.18\textwidth]{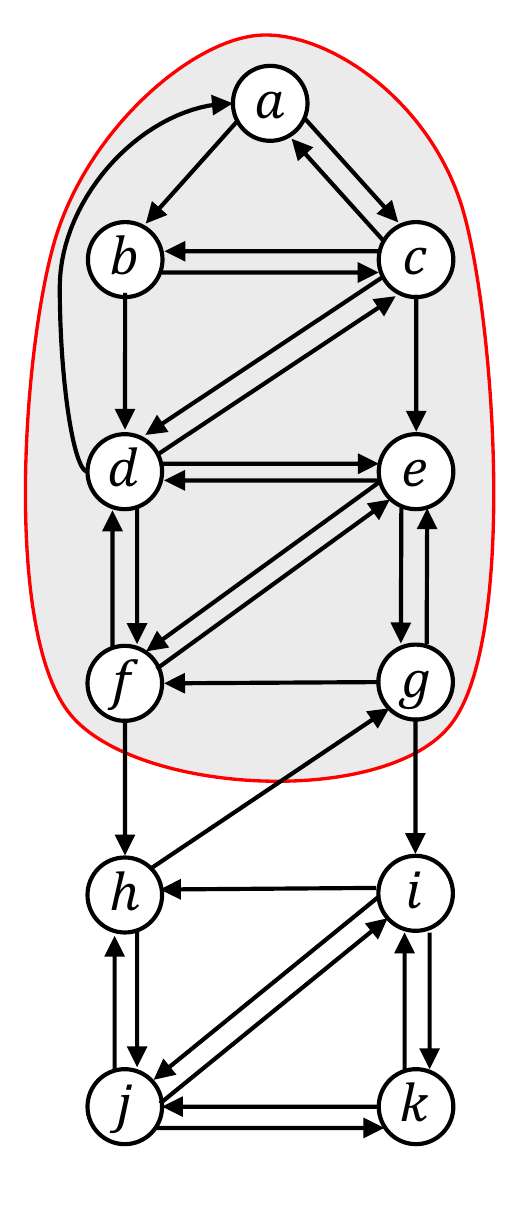}
	\end{center}	
	\caption{An example of a \EdgeIsolatedIn{1} component of $f$.}
	\label{figure:1-edge-in}
\end{figure}

For the rest of this section, we assume that the starting vertex $u$ can reach at
least $2\Delta+1$ edges.
Notice that if $u$ cannot reach $2\Delta+1$ edges, then the reachable subgraph from
$u$ defines a \EdgeIsolatedOut{0} component of $u$ containing at most $2\Delta$ edges.
In this case, the algorithm returns
this \EdgeIsolatedOut{0} component. We use exactly the same algorithm executed on the reverse graph for
\EdgeIsolatedIn{1} components and therefore only describe the algorithm
for \EdgeIsolatedOut{1} components.
First, we provide the following supporting lemmas. 

\begin{lemma}
	\label{lemma:2EISO-paths}
	Let $(x,y)$ be the outgoing edge of a \EdgeIsolatedOut{1} component $\fEdgeIsoOut{1}(u)$ of a vertex $u$.
	Then $u$ has a path to every vertex $v\in \fEdgeIsoOut{1}(u)$ that is contained entirely within the subgraph $\fEdgeIsoOut{1}(u)$.
	Moreover, $u$ has two edge-disjoint paths to $x$ within $\fEdgeIsoOut{1}(u)$.
\end{lemma}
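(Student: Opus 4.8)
The plan is to derive both statements from the \emph{minimality} in the definition of $\fEdgeIsoOut{1}(u)$: no proper (induced) subgraph of $\fEdgeIsoOut{1}(u)$ that still contains $u$ can have at most one outgoing edge to the rest of the graph. Write $S=\fEdgeIsoOut{1}(u)$ and recall that its unique outgoing edge is $(x,y)$ with $x\in S$ and $y\notin S$.

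For the first claim (reachability within $S$), I would let $R\subseteq V(S)$ be the set of vertices reachable from $u$ by directed paths that stay inside $S$, so $u\in R$, and assume for contradiction that $R\neq V(S)$, hence $R\subsetneq V(S)$. Every edge of $G$ leaving $R$ has its tail in $R\subseteq V(S)$ and its head either in $V(S)\setminus R$ or in $V\setminus V(S)$. An edge of the first kind cannot exist, since appending it to a path from $u$ to its tail (which lies inside $S$) would make its head reachable from $u$ within $S$, i.e.\ put it in $R$. An edge of the second kind is, by definition, an outgoing edge of $S$, so it must be $(x,y)$, and there is at most one such edge (and only if $x\in R$). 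Hence $R$ is a vertex set with $u\in R$, $R\subsetneq V(S)$, and at most one outgoing edge, contradicting the minimality of $S$. Therefore $R=V(S)$, which is exactly the first statement; in particular $x$ is reachable from $u$ within $S$.

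For the second claim, I would argue by contradiction using the max-flow/min-cut correspondence in the uncapacitated graph $G[S]$ that is already invoked in the preliminaries. If there are fewer than two edge-disjoint $u\to x$ paths inside $S$, then, since by the first part there is at least one such path, the maximum $u\to x$ flow in $G[S]$ is exactly one, so there is a single edge $e$ of $G[S]$ whose removal leaves $x$ unreachable from $u$ in $G[S]$. Now let $R'\subseteq V(S)$ be the set of vertices reachable from $u$ in $G[S]\setminus\{e\}$; then $u\in R'$ and $x\notin R'$, so $R'\subsetneq V(S)$. Exactly as above, an edge of $G$ leaving $R'$ goes either to $V(S)\setminus R'$ or to $V\setminus V(S)$: the latter is impossible because such an edge would be the outgoing edge $(x,y)$ of $S$, which would require $x\in R'$; and an edge of the former kind, if different from $e$, could be appended to a $u$-to-tail path avoiding $e$, forcing its head into $R'$, so the only possible such edge is $e$ itself. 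Hence $R'$ has $u\in R'$, $R'\subsetneq V(S)$, and at most one outgoing edge, again contradicting the minimality of $S$. Therefore $u$ has two edge-disjoint paths to $x$ within $S$.

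I do not expect a genuine obstacle here; the proof is essentially self-contained, using only the minimality clause of the definition and the standard edge-version of Menger / max-flow–min-cut for directed uncapacitated graphs. The one point that requires a little care is checking in both parts that the "trimmed" set ($R$, respectively $R'$) is a \emph{proper} subset of $V(S)$ — immediate in the second part because it omits $x$ — and that its outgoing edges are correctly bounded by reusing the fact that $S$ itself has only the single outgoing edge $(x,y)$.
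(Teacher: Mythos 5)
Your proposal is correct and follows essentially the same route as the paper's proof: both parts reduce to the minimality clause by exhibiting a proper subset containing $u$ with at most one outgoing edge (your reachable set $R$, resp.\ $R'$, is just the complement of the unreachable set $C$ the paper uses), and your explicit appeal to max-flow/min-cut to obtain the single cut edge $e$ is the same step the paper phrases as ``all $u$-to-$x$ paths share a common edge $(w,z)$.'' No gaps.
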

 \begin{proof}
We begin by showing that $u$ has a path to every vertex $v\in \fEdgeIsoOut{1}(u)$ that is contained entirely within the subgraph $\fEdgeIsoOut{1}(u)$.
	Assume for the sake of contradiction that there is a set of vertices $C \subset 
	\fEdgeIsoOut{1}(u)$ such that the vertices of~$C$ are unreachable from $u$ in 
	$\fEdgeIsoOut{1}(u)$.
	Then there is no edge $(w,z)$ with $w\in \fEdgeIsoOut{1}(u) \setminus C$ and 
	$z \in C$ and thus the only possible outgoing edge from $\fEdgeIsoOut{1}(u) \setminus C$ is $(x,y)$.
	Thus, $\fEdgeIsoOut{1}(u) \setminus C$ is a \EdgeIsolatedOut{1} component of $u$, which contradicts the minimality of $\fEdgeIsoOut{1}(u)$.
	
	We now show that $u$ has two edge-disjoint paths to $x$ in $\fEdgeIsoOut{1}(u)$.
	First, we note that all simple paths from $u$ to $x$ contain only vertices in $\fEdgeIsoOut{1}(u)$ since there is no edge $(x',y') \not= (x,y)$ leaving $\fEdgeIsoOut{1}(u)$.
	Assume, for the sake of contradiction, that all paths from $u$ to $x$ in $\fEdgeIsoOut{1}(u)$ share a common edge $(w,z)$. 
	Then, $u$ does not have a path to $z$ in $\fEdgeIsoOut{1}(u) \setminus (w,z)$.
	Let $C \subset \fEdgeIsoOut{1}(u)$ be the set of vertices that become unreachable from $u$ in $\fEdgeIsoOut{1}(u) \setminus (w,z)$. (Notice that $|C| \geq 1$ since $z \in C$.) 
	Clearly, there is no edge $(w',z')$ such that $w' \in V(\fEdgeIsoOut{1}(u)) \setminus C$ and $z' \in C$.
	Hence, the only outgoing edge from $\fEdgeIsoOut{1}(u) \setminus C$ is $(w,z)$.
	Thus, $\fEdgeIsoOut{1}(u) \setminus C$ is a \EdgeIsolatedOut{1} component of $u$, which again contradicts the minimality of $\fEdgeIsoOut{1}(u)$.\proofend
 \end{proof}

Our algorithm starts a DFS traversal $F_1$ from $u$. 
We charge to a visited vertex its outgoing edges that were discovered by~$F_1$.
We stop $F_1$ when the number of traversed edges reaches $2\Delta+1$.
Let $T$ be the DFS tree constructed by the DFS traversal.
We define the weight of a vertex $v$, denoted by $w(v)$, to be the total number of
edges charged to the descendants of $v$ in $T$ (including $v$).
Assume $u$ is contained in a \EdgeIsolatedOut{1} component $C$ with at most 
$\Delta$ edges. Then the DFS has to leave $C$ via its only outgoing edge in order to 
reach more than $\Delta$ edges.
Note that for any vertex $v \ne u$ whose DFS subtree only explores edges inside $C$
we have $w(v) < \Delta$. The following two lemmata show that the vertices with 
$w(v) \ge \Delta$ form a path from $u$ to a vertex outside of $C$ that 
we can then use to block the outgoing edge such that the second traversal
explores exactly $C$.

\begin{lemma}
	Let $\fEdgeIsoOut{1}(u)$ be a \EdgeIsolatedOut{1} component of $u$ with 
	outgoing edge $(x,y)$ such that $|E(\fEdgeIsoOut{1}(u))| \leq \Delta$, 
	and let $T$ be a DFS tree of a DFS traversal from $u$ that visits
	$2\Delta+1$ edges. Then $w(v) \ge \Delta$ for each
        vertex $v$ on the path from $u$ to $y$ in $T$, i.e., $v \in T[u,y]$, and $w(v)<\Delta$ for each $v \in \fEdgeIsoOut{1}(u) \setminus T[u,x]$.
	\label{lemma:2EISO-reversed-edges}
\end{lemma}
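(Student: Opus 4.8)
The plan is to prove the two claims separately, exploiting the structure of DFS trees and the fact that $\fEdgeIsoOut{1}(u)$ has exactly one outgoing edge $(x,y)$.

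First I would handle the claim $w(v)\ge\Delta$ for every $v\in T[u,y]$. The key observation is that the DFS traversal from $u$ visits $2\Delta+1$ edges, but by Lemma~\ref{lemma:2EISO-paths}, before the DFS can leave $\fEdgeIsoOut{1}(u)$ it must do so through the edge $(x,y)$ (since $(x,y)$ is the unique outgoing edge, and all simple paths from $u$ within $\fEdgeIsoOut{1}(u)$ stay inside it until they use $(x,y)$). More precisely, any edge explored by the DFS that lies outside $\fEdgeIsoOut{1}(u)$ must be charged to a vertex that was discovered after the DFS entered $y$, hence to a descendant of $y$ in $T$ — because the only way for the search to reach a vertex outside $\fEdgeIsoOut{1}(u)$ is via $(x,y)$, so all such vertices are in $T(y)$. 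Since $|E(\fEdgeIsoOut{1}(u))|\le\Delta$, at most $\Delta$ of the $2\Delta+1$ traversed edges lie inside $\fEdgeIsoOut{1}(u)$; therefore at least $\Delta+1$ traversed edges lie outside, and all of these are charged to vertices in $T(y)$. Consequently $w(y)\ge\Delta+1>\Delta$. For any $v\in T[u,y]$, $v$ is an ancestor of $y$, so $T(y)\subseteq T(v)$, giving $w(v)\ge w(y)\ge\Delta$.

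Next I would handle the claim $w(v)<\Delta$ for every $v\in\fEdgeIsoOut{1}(u)\setminus T[u,x]$. Fix such a $v$. The idea is that the DFS subtree $T(v)$ rooted at $v$ explores only edges that stay inside $\fEdgeIsoOut{1}(u)$, so $w(v)\le|E(\fEdgeIsoOut{1}(u))|\le\Delta$; I then need to rule out equality. To see that $T(v)$ stays inside $\fEdgeIsoOut{1}(u)$: the only way for the search to escape $\fEdgeIsoOut{1}(u)$ is through $(x,y)$, and once the search traverses $(x,y)$ the vertex $y$ and all later-discovered outside vertices become descendants of $y$ in $T$; moreover $x$ itself lies on $T[u,y]$, hence $x$ is an ancestor of $y$. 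If $v\notin T[u,x]$, then either $v$ is discovered before $x$ in a branch disjoint from $T[u,x]$, or $v$ is a proper descendant of $x$. In the first case $T(v)$ is entirely disjoint from $T(x)\supseteq\{x,y\}$, so the search never traverses $(x,y)$ inside $T(v)$ and stays in $\fEdgeIsoOut{1}(u)$; thus $w(v)\le|E(\fEdgeIsoOut{1}(u))|$. For the strict inequality, note that $u\notin T(v)$ (as $v\neq u$ and $v$ is not an ancestor of $u$), and $u$ has outgoing edges charged to it that are not counted in $w(v)$, or alternatively the edge on $T[u,v]$ entering $v$'s parent region ensures at least one edge of $\fEdgeIsoOut{1}(u)$ is charged outside $T(v)$; either way $w(v)<|E(\fEdgeIsoOut{1}(u))|\le\Delta$. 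The second case, $v$ a proper descendant of $x$: here I must check that the search does not leave via $(x,y)$ before entering $v$. Since the DFS explores edges out of $x$ one at a time and $v$ is discovered as a descendant of $x$, if the search had already traversed $(x,y)$ then $y$ would be an ancestor of $v$ or in a sibling subtree — but $y\notin\fEdgeIsoOut{1}(u)$ while $v\in\fEdgeIsoOut{1}(u)$ and the path $T[x,v]$ stays inside $\fEdgeIsoOut{1}(u)$ (no edge leaves except $(x,y)$), so $y$ cannot be on $T[x,v]$; thus $(x,y)$ is traversed only after $T(v)$ is fully explored, and again $T(v)\subseteq\fEdgeIsoOut{1}(u)$, giving the bound.

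The main obstacle is the careful bookkeeping about \emph{when} the DFS traverses the outgoing edge $(x,y)$ relative to the discovery of a vertex $v$, and correspondingly which edges get charged to descendants of $v$ versus to $T[u,x]$. I expect the cleanest way to organize this is a single lemma: \emph{every edge traversed by the DFS that does not belong to $E(\fEdgeIsoOut{1}(u))$ is charged to a descendant of $y$}, equivalently to a vertex in $T[u,y]\setminus T[u,x]$ is impossible and all outside-edges live in $T(y)$. Once that is established, the first claim is immediate by the counting argument above, and the second claim follows since $T(v)\cap T(y)=\emptyset$ forces all edges charged in $T(v)$ to lie in $\fEdgeIsoOut{1}(u)$, and a parity/nonemptiness argument ($u$'s edges lie outside $T(v)$) upgrades $\le\Delta$ to $<\Delta$. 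I would also make explicit the convention from the text that "charging to a vertex" means the edge is charged to its tail, so that $w(v)=\sum_{z\in T(v)}(\text{edges charged to }z)$ counts exactly the $F_1$-discovered outgoing edges of vertices in $T(v)$.
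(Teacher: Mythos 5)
Your overall strategy is the same as the paper's: a counting argument giving $w(y)\ge\Delta$ that propagates to all ancestors of $y$, and an argument that for $v\in \fEdgeIsoOut{1}(u)\setminus T[u,x]$ the subtraversal from $v$ is trapped inside $\fEdgeIsoOut{1}(u)$. The first claim is fine apart from a harmless off-by-one (the edge $(x,y)$ is charged to $x\notin T(y)$, so the count gives $w(y)\ge\Delta$ rather than $\ge\Delta+1$; $\Delta$ is all the lemma needs). The genuine problem is in your second claim, in the case where $v$ is a proper descendant of $x$. There you assert that ``the path $T[x,v]$ stays inside $\fEdgeIsoOut{1}(u)$'' and conclude that ``$(x,y)$ is traversed only after $T(v)$ is fully explored.'' That step fails: the DFS may traverse $(x,y)$, explore part of the outside, and then re-enter $\fEdgeIsoOut{1}(u)$ through an \emph{incoming} edge whose head is a not-yet-visited vertex of $\fEdgeIsoOut{1}(u)$. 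Such a vertex $v$ lies in $\fEdgeIsoOut{1}(u)\setminus T[u,x]$ but is a descendant of $y$, so $T[x,v]$ passes through $y$ and $(x,y)$ is traversed \emph{before} $v$ is discovered. The conclusion $T(v)\subseteq \fEdgeIsoOut{1}(u)$ is still true in that situation, but for the opposite reason, and that is exactly the dichotomy the paper uses: either $v$ is visited after $(x,y)$ has been traversed, in which case the subtraversal from $v$ cannot use the unique outgoing edge again, or $v$ is visited before, in which case its subtraversal cannot reach $x$ (as $x\notin T(v)$). Your argument covers only the second branch; you must add the first.

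A second, smaller issue is the upgrade from $w(v)\le\Delta$ to the strict $w(v)<\Delta$, which your proof leaves as an ``either way'' between two candidates. The candidate ``the tree edge into $v$ is an edge of $\fEdgeIsoOut{1}(u)$ charged outside $T(v)$'' fails precisely in the re-entry scenario above, where $v$'s tree parent may lie outside $\fEdgeIsoOut{1}(u)$, so that edge is not one of the $\le\Delta$ internal edges. The candidate via $u$ does work, but it needs Lemma~\ref{lemma:2EISO-paths} (so that $u$ has an internal outgoing edge; note $u\notin T(v)$ and, if $u=x$, the edge $(x,y)$ alone does not suffice) together with the remark that this internal edge is either never traversed or charged to $u\notin T(v)$ --- in both cases at most $\Delta-1$ internal edges can be charged to $T(v)$. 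To be fair, the paper's own proof also passes from ``cannot include more than $\Delta$ edges'' to ``$<\Delta$'' without comment; but since the strict inequality is what later keeps the incoming edges of $\fEdgeIsoOut{1}(u)$ off the heavy path in Lemma~\ref{lemma:determin-1-edge-out-component}, it deserves the explicit sentence.
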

\begin{proof}
Since $|E(\fEdgeIsoOut{1}(u))| \leq \Delta$, the DFS traversal has to 
visit vertices outside of $\fEdgeIsoOut{1}(u)$ to reach more than $\Delta$
edges. The DFS traversal can leave $\fEdgeIsoOut{1}(u)$ only by using 
the edge $(x,y)$ and it can do so only once. As the DFS traversal visits at least 
$2\Delta+1$ edges, it visits at least $\Delta$ edges in the 
subtraversal from $y$ (i.e., subsequent to exploring the DFS tree edge $(x,y)$).
Therefore, for each $v \in T[u,y]$ it holds that $w(v)\geq \Delta$.
Moreover, any subtraversal that does not visit vertices outside of
$\fEdgeIsoOut{1}(u)$ cannot include more than $\Delta$ edges.
None of the subtraversals from vertices $v \in \fEdgeIsoOut{1}(u) \setminus T[u,y]$
can visit vertices outside of $\fEdgeIsoOut{1}(u)$, since these vertices are 
either visited after the edge $(x,y)$, or their subtraversal can not visit $x$ 
(i.e., the vertex~$v$ cannot reach $x$ without using backedges w.r.t.\ the DFS tree).
Thus, for each vertex $v \in \fEdgeIsoOut{1}(u) \setminus T[u,x]$, it holds that 
$w(v)<\Delta$.\proofend
\end{proof}

\begin{lemma}
	\label{lemma:reversed-path}
	Let $F$ be a DFS traversal that visited $2\Delta+1$ edges and let $T$ be the
	DFS tree generated by~$F$.
	The edges $e = (x,y) \in T$ with $w(y) \geq \Delta$ form a path in $T$.
\end{lemma}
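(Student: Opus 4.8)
The plan is to exploit the recursive (nested) structure of depth-first search to show that, for every vertex $v$, the edges traversed by $F$ that are charged to descendants of $v$ occupy a \emph{contiguous block} of length exactly $w(v)$ in the order in which $F$ traverses its $2\Delta+1$ edges. The point is that there is a time interval during the execution of $F$ --- from the moment $v$ is discovered until the moment $F$ backtracks out of $v$ --- during which the current vertex is always a descendant of $v$ in $T$, and outside of which the current vertex is never a descendant of $v$ (in a DFS tree a vertex is discovered after its parent, so no proper descendant of $v$ becomes active before $v$ itself, and all of $T(v)$ is finished once $F$ backtracks out of $v$). Since every traversed edge is charged to its tail, which is exactly the current vertex when the edge is scanned, the traversed edges charged to $T(v)$ are precisely those scanned during this interval; there are $w(v)$ of them and, listed in traversal order, they form a contiguous block, which I denote $I_v$.

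First I would record two consequences. (i) If $y$ is a child of $v$ in $T$, then $I_y\subseteq I_v$, so $w(y)\le w(v)$; hence $w$ is non-increasing along every root-to-leaf path of $T$, and $A=\{v\in T : w(v)\ge\Delta\}$ is closed under taking ancestors. Since $F$ traverses $2\Delta+1\ge\Delta$ edges and all of them are charged within $T(r)$ for the root $r$ of $T$, we have $r\in A$, so $A$ is a connected subtree of $T$ containing $r$. (ii) The tree edges $(x,y)$ with $w(y)\ge\Delta$ are exactly the tree edges whose child endpoint lies in $A$; because $A$ is an ancestor-closed subtree containing the root, it then suffices to prove that $A$ is a path, i.e.\ that no vertex of $A$ has two children in $A$ --- the required set of tree edges is then the edge set of that path.

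The only real obstacle is the following counting argument, which I expect to be the heart of the proof. Suppose for contradiction that a vertex $v$ has two children $y_1,y_2$ in $A$, with $y_1$ explored before $y_2$ by $F$. The blocks $I_{y_1}$ and $I_{y_2}$ are disjoint (as $T(y_1)\cap T(y_2)=\emptyset$), $I_{y_1}$ entirely precedes $I_{y_2}$ in traversal order, and $|I_{y_1}|=w(y_1)\ge\Delta$ and $|I_{y_2}|=w(y_2)\ge\Delta$. Moreover the tree edge $(v,y_1)$ is traversed strictly before $I_{y_1}$ (it must be scanned for $y_1$ to be discovered, and it is charged to $v\notin T(y_1)$), and the tree edge $(v,y_2)$ is traversed strictly between $I_{y_1}$ and $I_{y_2}$ (it is scanned at $v$ after $F$ backtracks from $y_1$ and before $F$ descends into $y_2$, and it is charged to $v$, so it lies in neither block). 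These are $1+|I_{y_1}|+1+|I_{y_2}|\ge 2+2\Delta$ distinct traversed edges, contradicting the hypothesis that $F$ traverses exactly $2\Delta+1$ edges. Hence every vertex of $A$ has at most one child in $A$, $A$ is a path descending from $r$, and the tree edges $(x,y)$ with $w(y)\ge\Delta$ form a path in $T$.
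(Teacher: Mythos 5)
Your proof is correct and rests on the same counting argument as the paper's: two incomparable heavy subtrees, together with the two tree edges entering them (each charged to a vertex outside both subtrees), account for at least $2\Delta+2$ distinct traversed edges, contradicting the total of $2\Delta+1$. You are somewhat more explicit than the paper in that you also verify, via the monotonicity of $w$ along root-to-leaf paths, that the heavy edges form a contiguous path from the root rather than merely lying pairwise on a common root-to-leaf path --- a step the paper leaves implicit.
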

\begin{proof}
	Assume by contradiction that there are two distinct tree edges $e_1=(x_1,y_1)$ and 
	$e_2=(x_2,y_2)$ with $w(y_1) \geq \Delta$ and $w(y_2)\geq \Delta$ 
	that do not have an ancestor-descendant relation in $T$ (i.e., $y_1$ is not an 
	ancestor of $x_2$ and $y_2$ is not an ancestor of $x_1$).
	Since also the edges $(x_1,y_1)$ and $(x_2,y_2)$ are visited by $T$, this 
	contradicts the fact that the traversal visited $2\Delta+1$ edges.
	Therefore, all edges $e = (x,y) \in T$ with $w(y) \geq \Delta$ form a path in $T$.\proofend
\end{proof}

After the execution of the first DFS $F_1$, by Lemma~\ref{lemma:reversed-path},
there is a path $P$ of $T$ such that we have $w(y) \geq \Delta$ for every edge
$e=(x,y)$ of $P$. We call this path the \emph{heavy path} of $F_1$, and the edges
contained in the heavy path the \emph{heavy edges} of $F_1$.
Note that (1) the heavy path has to leave a \EdgeIsolatedOut{1} component of $u$
with at most $\Delta$ edges for the search to reach more than $\Delta$ edges
and (2) the heavy path cannot enter the component again after 
leaving it because the subtree of any incoming edge of the component cannot
contain $\Delta$ or more edges as the only outgoing edge of the component
was already used.
We construct the residual graph $G'$ formed from~$G$ by reversing the
direction of the heavy edges of~$F_1$.
The residual graph will be used as follows.
If there exists a \EdgeIsolatedOut{1} component $\fEdgeIsoOut{1}(u)$ of $u$ containing
at most $\Delta$ edges, then the heavy path~$P$ can be interpreted as sending
one unit of flow out of $\fEdgeIsoOut{1}(u)$ and in the residual graph with
respect to this flow no additional unit of flow can be sent out of
$\fEdgeIsoOut{1}(u)$. That means that no other search from~$u$ is able to have an
outgoing path from $\fEdgeIsoOut{1}(u)$.
Next, we execute a second traversal $F_2$ from $u$ (not necessarily a depth-first search) on $G'$.
We show that if there exists a \EdgeIsolatedOut{1} component $\fEdgeIsoOut{1}(u)$ of $u$ containing
at most $\Delta$ edges, this second traversal has two main properties: (\textit{i}) it never visits edges outside of $G'[V(\fEdgeIsoOut{1}(u))]$, and (\textit{ii}) it visits all the edges in $G'[V(\fEdgeIsoOut{1}(u))]$.
Whenever $F_2$ traverses more than $\Delta$ edges, we terminate the search and conclude that any \EdgeIsolatedOut{1} component of $u$ contains more than $\Delta$ edges.
	
\begin{lemma}
	Let $G'$ be the residual graph obtained from $G$ by reversing the direction of the
	heavy edges of $F_1$.
	The traversal $F_2$ reaches at most $\Delta$ edges in $G'$ if and only if there
	exists a \EdgeIsolatedOut{1} component $\fEdgeIsoOut{1}(u)$ of $u$ containing at
	most $\Delta$ edges.
	Moreover, if $F_2$ traverses at most $\Delta$ edges, then the subgraph in $G$
	induced by the vertices traversed by $F_2$ defines $\fEdgeIsoOut{1}(u)$.
	\label{lemma:determin-1-edge-out-component}
\end{lemma}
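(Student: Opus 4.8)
The plan is to study the set $R$ of vertices reachable from $u$ in the residual graph $G'$. Observe that whenever $F_2$ does not abort (i.e.\ it traverses at most $\Delta$ edges), $R$ is exactly the set of vertices $F_2$ visits: since $R$ is closed under outgoing edges of $G'$, every out-edge of a vertex of $R$ has its head in $R$, so the edges $F_2$ scans are precisely those with tail in $R$, namely $E(G'[R])$, and $|E(G'[R])| = |E(G[R])|$ because reversing heavy edges does not change the number of edges with both endpoints in $R$. The main structural fact I will use repeatedly is: \emph{every edge of $G$ leaving $R$ must be a heavy edge}. Indeed, if an edge $(a,b)$ with $a\in R$, $b\notin R$ were not heavy, it would still be present in $G'$, contradicting that $R$ is closed under $G'$-reachability. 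Since by Lemma~\ref{lemma:reversed-path} the heavy edges form a path $P=(p_0,\dots,p_s)$ in $T$, and in $G'$ the reversal of $P$ is the directed path $p_s\to\cdots\to p_0$, the set $R\cap\{p_0,\dots,p_s\}$ is a prefix $\{p_0,\dots,p_q\}$ (if $p_j\in R$, then $p_{j-1},\dots,p_0\in R$). Hence at most one heavy edge, namely $(p_q,p_{q+1})$ when $q<s$, can leave $R$, so \emph{$R$ has at most one outgoing edge in $G$}.

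For the ``only if'' direction, suppose $F_2$ traverses at most $\Delta$ edges. Then $|E(G[R])|=|E(G'[R])|\le\Delta$ and, by the above, $G[R]$ has at most one outgoing edge; taking a minimal subset $R'\subseteq R$ with $u\in R'$ and at most one outgoing edge yields a \EdgeIsolatedOut{1} component $G[R']$ of $u$ with at most $|E(G[R])|\le\Delta$ edges.

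For the ``if'' direction, let $\fEdgeIsoOut{1}(u)$ be a \EdgeIsolatedOut{1} component of $u$ with at most $\Delta$ edges and outgoing edge $(x,y)$. First I would pin down which edges get reversed near $\fEdgeIsoOut{1}(u)$. Since $\fEdgeIsoOut{1}(u)$ has at most $\Delta<2\Delta+1$ edges, $F_1$ must leave it, necessarily via $(x,y)$, and since every path from $u$ stays inside $\fEdgeIsoOut{1}(u)$ until $(x,y)$ is used, $y$ is first reached through $(x,y)$, so $(x,y)\in T$; moreover the sub-traversal from $y$ contains more than $\Delta$ edges, so $w(y)\ge\Delta$ and $(x,y)$ is heavy. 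Combining this with Lemma~\ref{lemma:2EISO-reversed-edges} ($w(v)\ge\Delta$ for $v\in T[u,y]$ and $w(v)<\Delta$ for $v\in\fEdgeIsoOut{1}(u)\setminus T[u,x]$) gives: the heavy edges with both endpoints in $\fEdgeIsoOut{1}(u)$ are exactly the edges of the tree path $T[u,x]$, no heavy edge enters $\fEdgeIsoOut{1}(u)$ from outside, and the unique heavy edge leaving it is $(x,y)$. Consequently $G'$ has no edge leaving $V(\fEdgeIsoOut{1}(u))$ at all (reversing $(x,y)$ turns it into an incoming edge, no new outgoing edge is created, and $(x,y)$ was the only outgoing edge of $\fEdgeIsoOut{1}(u)$ in $G$). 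Hence $R\subseteq V(\fEdgeIsoOut{1}(u))$, which is property (i): $F_2$ never leaves $G'[V(\fEdgeIsoOut{1}(u))]$.

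It remains to show $R=V(\fEdgeIsoOut{1}(u))$, which also gives the ``moreover'' part once combined with the ``only if'' direction (applied to the component produced there, forcing it to be $R$). Suppose $R':=R\cap V(\fEdgeIsoOut{1}(u))\subsetneq V(\fEdgeIsoOut{1}(u))$, and note $u\in R'$. As in the first paragraph, every edge of $G$ leaving $R'$ must be heavy; writing $T[u,x]=(v_0=u,v_1,\dots,v_\ell=x)$ and using the reversal of these edges in $G'$, the set $R'\cap\{v_0,\dots,v_\ell\}$ is a prefix $\{v_0,\dots,v_p\}$, and a short case analysis on whether $p=\ell$ (equivalently whether $x\in R'$, hence whether $(x,y)$ leaves $R'$) shows that $R'$ has exactly one outgoing edge in $G$. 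This contradicts the minimality of $\fEdgeIsoOut{1}(u)$, so $R=V(\fEdgeIsoOut{1}(u))$; then $F_2$ traverses $|E(G'[R])|=|E(\fEdgeIsoOut{1}(u))|\le\Delta$ edges, completing the ``if'' direction, and the subgraph of $G$ induced by the vertices $F_2$ visits is $G[V(\fEdgeIsoOut{1}(u))]=\fEdgeIsoOut{1}(u)$.

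\textbf{Main obstacle.} The delicate part is the ``if'' direction, specifically showing that reversing the heavy path does not trap $u$ and cut off a piece of $\fEdgeIsoOut{1}(u)$ (property (ii)). This is exactly where minimality of $\fEdgeIsoOut{1}(u)$ is essential: any sub-piece unreachable from $u$ in $G'$ would itself be a strictly smaller vertex set with at most one outgoing edge. Making this precise requires knowing exactly which edges inside $\fEdgeIsoOut{1}(u)$ are reversed (the tree path $T[u,x]$), which relies on the weight bounds of Lemma~\ref{lemma:2EISO-reversed-edges}; intuitively, the heavy path is a single unit of flow out of $\fEdgeIsoOut{1}(u)$, so in the residual graph reachability from $u$ can only be rerouted, not destroyed.
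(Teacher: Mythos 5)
Your proof is correct, and its overall architecture --- analyzing reachability from $u$ in the residual graph $G'$ obtained by reversing the heavy path, with Lemma~\ref{lemma:2EISO-reversed-edges} and Lemma~\ref{lemma:reversed-path} doing the structural work --- is the same as the paper's. Where you genuinely diverge is the completeness step of the ``if'' direction, i.e., showing that $F_2$ reaches all of $\fEdgeIsoOut{1}(u)$. The paper argues about the set $C$ of vertices of the component that $F_2$ misses: it shows the heavy path must enter $C$ and never return to $\fEdgeIsoOut{1}(u)\setminus C$, so only one edge incident to $C$ is reversed, and derives a contradiction with Lemma~\ref{lemma:2EISO-paths} (the two edge-disjoint paths from $u$ to $x$). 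You instead argue about the reachable set $R$ itself: every $G$-edge leaving $R$ must be heavy, $R$ meets the heavy path in a prefix, hence $R$ has exactly one outgoing edge in $G$, and $R\subsetneq V(\fEdgeIsoOut{1}(u))$ would contradict minimality directly. This is cleaner --- it never invokes Lemma~\ref{lemma:2EISO-paths} --- and the same prefix observation simultaneously yields the ``only if'' direction, whereas the paper proves that direction by a separate argument about reversed edges incoming to the traversed set. Your bootstrap for the ``moreover'' part (extract a minimal $R'\subseteq R$, then feed $R'$ back into the ``if''-direction analysis to force $R'=R$) is a legitimate reorganization of the paper's standalone minimality argument; just be aware that both it and your ``only if'' step silently use the standing assumption that $u$ reaches at least $2\Delta+1$ edges to rule out the zero-outgoing-edge case, an assumption the paper's proof relies on equally implicitly.
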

\begin{proof}
	Let us first assume that there exists a \EdgeIsolatedOut{1} component 
	$\fEdgeIsoOut{1}(u)$ of $u$ that contains at most $\Delta$ edges and has one 
	outgoing edge $(x,y)$.
	By Lemma \ref{lemma:2EISO-reversed-edges}, the edge $(x,y)$ is reversed in the 
	residual graph $G'$. Moreover, the lemma implies that 
	no incoming edge to $\fEdgeIsoOut{1}(u)$ is 
	reversed in $G'$ because each incoming edge $(v, z)$ either has $w(z) < \Delta$
	or $z \in T[u, x]$; in the latter case $(v, z)$ cannot be a DFS tree edge
	as $T[u, x]$ is contained in $\fEdgeIsoOut{1}(u)$, thus $v \not\in T[u, x]$, and 
	hence $(v, z)$ being a DFS tree edge would generate a cycle in the DFS tree.
	Thus, $G'[V(\fEdgeIsoOut{1}(u))]$ has no outgoing edges to $G'[V(G) \setminus V(\fEdgeIsoOut{1}(u))]$.
	Therefore, $F_2$ cannot visit more than $\Delta$ edges.
	We now show that $F_2$ visits all vertices in $G'[V(\fEdgeIsoOut{1}(u))]$ using only paths internal to $G'[V(\fEdgeIsoOut{1}(u))]$.
	Notice that this does not trivially follow from Lemma \ref{lemma:2EISO-paths} since we are operating on the residual graph $G'$, where the direction of some edges of $\fEdgeIsoOut{1}(u)$ is reversed.
	Assume by contradiction that $u$ cannot visit all vertices in $G'[V(\fEdgeIsoOut{1}(u))]$.
	Then, there is a set of vertices $C \subset V(\fEdgeIsoOut{1}(u))$ that has no 
	incoming edge from $V(\fEdgeIsoOut{1}(u)) \setminus C$ in the residual graph $G'$.
	By Lemma~\ref{lemma:reversed-path} the edges that are reversed in the residual 
	graph~$G'$ form a path~$P$ in the DFS tree of $F_1$.
	The path~$P$ contains an incoming edge to $C$ in $G$ since otherwise 
	$\fEdgeIsoOut{1}(u) \setminus C$ is a \EdgeIsolatedOut{1} component of $u$, 
	contradicting the minimality of $\fEdgeIsoOut{1}(u)$.
	Since $C$ has no incoming edges from $\fEdgeIsoOut{1}(u) \setminus C$ in $G'$, 
	we have that $P$ has no outgoing edges from $C$ to $\fEdgeIsoOut{1}(u) \setminus C$.
	Therefore $T[u, y] \in P$ implies $x \in C$.
	Since $P$ does not enter $\fEdgeIsoOut{1}(u)$ after leaving through $(x, y)$,
	only one edge incident to $C$ was reversed in $G'$. As there is
	no edge incident to $C$ in $G'$, this is a contradiction to 
	Lemma~\ref{lemma:2EISO-paths}, which says that $u$ has two edge-disjoint paths 
	to~$x$. Hence no such set~$C$ exists and $F_2$ traverses all vertices of 
	$\fEdgeIsoOut{1}(u)$.

	Now we show the opposite direction.
	Assume that $F_2$ visits at most $\Delta$ edges in the residual graphs. 
	We will show that there exists a \EdgeIsolatedOut{1} component $\fEdgeIsoOut{1}(u)$ 
	of $u$ that contains at most $\Delta$ edges and 
	that is given by the subgraph induced  
	by the vertices traversed by $F_2$.
	Let $C$ be the subgraph that $F_2$ traversed in the residual graph.
	Then $C$ has no outgoing edges in $G'$, since otherwise their neighbors would 
	 also be traversed by $F_2$.
	Since $F_1$ visited $2\Delta+1$ edges, there is at least one edge $e^*$ incoming to $C$ in $G'$ that was reversed.
	Note that there cannot exist more than one incoming edge to $C$  in $G'$ that was reversed after $F_1$, since that would imply the existence of an outgoing edge from $C$ since the set of reversed edges forms a path by Lemma~\ref{lemma:reversed-path}.
	Hence $u$ has no path to any of the vertices in $V \setminus C$ in the residual graph~$G'$, and has only one outgoing edge in the original graph~$G$.
	Therefore, after restoring the reversed edges, $C$ forms a \EdgeIsolatedOut{1}
	component of $u$ that contains at most $\Delta$ edges, with the only outgoing 
	edge being~$e^*$.
	Notice that the vertices of $C$ were all traversed by $F_2$.
	It remains to show that there is no \EdgeIsolatedOut{1} component $\fEdgeIsoOut{1}'(u)$ of $u$ with one outgoing edge $(x',y')$ and such that $\fEdgeIsoOut{1}'(u) \subset \fEdgeIsoOut{1}(u)$.
	Assume by contradiction that there exists such a component.
	By Lemma~\ref{lemma:2EISO-reversed-edges} the traversal $F_1$ reversed $(x',y')$, and there is no other outgoing edge from $\fEdgeIsoOut{1}'(u)$ in the residual graph. 
	Therefore, $F_2$ cannot visit vertices outside $\fEdgeIsoOut{1}'(u)$.
	A contradiction to the fact that $F_2$ visited all the edges and vertices in $\fEdgeIsoOut{1}(u)$.\proofend
	\end{proof}
	
	\begin{procedure}[t!]
		\DontPrintSemicolon
		\KwIn{Digraph $G=(V,E)$, a vertex $u$, and an integer $\Delta$}
		\KwOut{Either a \EdgeIsolatedOut{1} component of $u$ with at most $2\Delta$ edges
		or $\emptyset$; if $\emptyset$ is returned, then every \EdgeIsolatedOut{1} component that contains $u$ has more than $\Delta$ edges}
		\BlankLine
		
		Execute DFS $F_1$ from $u$ for up to $2\Delta+1$ edges\;
		Let $S_1$ be the vertices reached by $F_1$
		
		\If{ $F_1$ cannot reach $2\Delta+1$ edges}{\Return $G[S_1]$ as \EdgeIsolatedOut{1} component of $u$}
		\Else{

			Let $P$ be the heavy path of $F_1$\;
			Let $G'$ be $G$ after reversing the direction of the edges of~$P$
			
			Execute DFS $F_2$ from~$u$ on $G'$ for up to $\Delta+1$ edges\;
			Let $S_2$ be the vertices reached by $F_2$
			
			\If{$F_2$ cannot reach $\Delta+1$ edges} {
				\Return $G[S_2]$ as \EdgeIsolatedOut{1} component of $u$
			}
			\Else{\Return $\emptyset$}
		}
		\caption{1EdgeOut($G$, $u$, $\Delta$)}
		\label{proc:1EdgeOut}
	\end{procedure}
	
	Recall that we assumed in the beginning of this section that $u$ reaches at least $2\Delta+1$ edges. If this is not satisfied, we return the set of reachable
	vertices from $u$, which is a \EdgeIsolatedOut{0} component of $u$ with 
	at most $2\Delta$ edges. Otherwise the first DFS search $F_1$
	is able to visit $2\Delta+1$ edges. 
	After the execution of the second traversal $F_2$ on the residual graph $G'$, we can answer whether there exists a \EdgeIsolatedOut{1} component of $u$ with at most $\Delta$ edges, as shown in Lemma \ref{lemma:determin-1-edge-out-component}.
	The pseudocode of our algorithm is illustrated in Procedure~\ref{proc:1EdgeOut}.
	The following lemma summarizes the result of this section.
	
	\begin{lemma}
		Procedure~\ref{proc:1EdgeOut} computes a \EdgeIsolatedOut{1} \textup{(}resp.,
		\EdgeIsolatedIn{1}\textup{)} component of $u$ with at most $2\Delta$ edges or
		decides that there is no \EdgeIsolatedOut{1} \textup{(}resp.,
		\EdgeIsolatedIn{1}\textup{)} component of $u$ with at most $\Delta$ edges.
		Moreover, Procedure~\ref{proc:1EdgeOut} runs in $O(\Delta)$ time.
	\end{lemma}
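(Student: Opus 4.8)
The plan is to read both claims off the three return statements of Procedure~\ref{proc:1EdgeOut}, matching each to the structural lemmas already proved in this section, and then to charge the work of each phase to~$\Delta$. It suffices to handle \EdgeIsolatedOut{1} components in~$G$: the \EdgeIsolatedIn{1} statement is the same claim run on~$G^R$, which has the same number of edges and so still satisfies $\Delta < m/2$.

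For correctness there are two branches. If $F_1$ never reaches $2\Delta+1$ edges, then it halts after exploring everything reachable from~$u$, so the returned vertex set~$S_1$ is reachability-closed; hence $G[S_1]$ has no outgoing edge and (since $F_1$ discovered at most $2\Delta$ edges, namely exactly those with tail in~$S_1$) at most $2\Delta$ edges, i.e.\ it is a \EdgeIsolatedOut{0}, and therefore a \EdgeIsolatedOut{1}, component of~$u$ with at most $2\Delta$ edges --- this is the first disjunct of the lemma. Otherwise $F_1$ discovers exactly $2\Delta+1$ edges, which is precisely the hypothesis under which Lemmas~\ref{lemma:reversed-path} and~\ref{lemma:determin-1-edge-out-component} were proved. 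By Lemma~\ref{lemma:reversed-path} the heavy edges of~$F_1$ form a path of~$T$, so the residual graph~$G'$ is well defined; by Lemma~\ref{lemma:determin-1-edge-out-component}, $F_2$ reaches at most $\Delta$ edges in~$G'$ if and only if $u$ has a \EdgeIsolatedOut{1} component with at most $\Delta$ edges, in which case $G[S_2]$ is such a component. Hence: if $F_2$ stops before $\Delta+1$ edges, the procedure returns the \EdgeIsolatedOut{1} component $G[S_2]$, which has at most $\Delta\le 2\Delta$ edges; if $F_2$ reaches $\Delta+1$ edges, Lemma~\ref{lemma:determin-1-edge-out-component} gives that every \EdgeIsolatedOut{1} component of~$u$ has more than $\Delta$ edges, so returning~$\emptyset$ is correct. (Lemmas~\ref{lemma:2EISO-paths} and~\ref{lemma:2EISO-reversed-edges} enter only inside the proof of Lemma~\ref{lemma:determin-1-edge-out-component} and are not reused here.) This settles the first sentence.

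For the running time, $F_1$ discovers at most $2\Delta+1$ edges and thus visits at most $2\Delta+2$ vertices, and aborting the DFS once this edge budget is spent costs $O(\Delta)$; the vertex weights $w(\cdot)$ and the heavy path~$P$ (with at most $2\Delta+1$ edges) are then obtained by a single post-order scan of~$T$ in $O(\Delta)$ time. We never materialize~$G'$: since $P$ is a path, each vertex is incident to at most two heavy edges, so with $O(\Delta)$ preprocessing we tabulate, per endpoint of a heavy edge, the at most one out-edge to suppress and the at most one reversed edge to add, after which listing the out-neighbors of a vertex in~$G'$ costs only $O(1)$ over listing them in~$G$. Running $F_2$ on this implicit~$G'$ with an $O(\Delta)$ cap on discovered edges then also costs $O(\Delta)$: when $F_2$ is truncated it has discovered only $O(\Delta)$ edges by construction, and when it returns~$G[S_2]$, Lemma~\ref{lemma:determin-1-edge-out-component} ensures that $G'[S_2]$ has at most $\Delta$ edges and that no edge of~$G'$ leaves~$S_2$, so $F_2$ examines only $O(\Delta)$ edges before halting on its own. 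Summing the phases yields $O(\Delta)$.

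\textbf{Main obstacle.} The real content --- that the single residual search~$F_2$ recovers a small \EdgeIsolatedOut{1} component exactly when one exists and otherwise exceeds its budget --- is already packaged in Lemma~\ref{lemma:determin-1-edge-out-component}, so this concluding lemma is mostly bookkeeping. The one thing that genuinely needs care is the time bound: handling the residual graph~$G'$ implicitly in $O(\Delta)$, and checking that the number of edges $F_2$ \emph{examines} (not just the tree edges it traverses) stays $O(\Delta)$ both when it succeeds and when it overflows.
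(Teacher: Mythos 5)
Your proof is correct and follows essentially the same route as the paper, which presents this lemma as a summary of the preceding discussion: the degenerate case where $F_1$ cannot reach $2\Delta+1$ edges yields a \EdgeIsolatedOut{0} component directly, and otherwise Lemmas~\ref{lemma:reversed-path} and~\ref{lemma:determin-1-edge-out-component} give both directions of the correctness claim, with the running time following from the edge caps on $F_1$ and $F_2$. Your explicit handling of the residual graph $G'$ as an implicit $O(\Delta)$-time modification rather than a full copy is a detail the paper leaves unstated, but it is the intended reading and does not change the argument.
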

	
	\subsection{Computing the $2$-edge-connected subgraphs.}
	\label{sec:2-edge-connected-strong-components}
	
	Let $G = (V,E)$ be a digraph.
	A straightforward algorithm for computing the $2$-edge-connected subgraphs is to recursively remove, from~$G$, one strong bridge of each strongly connected component of $G$ until no strong bridges can be found.
	In each recursive call at least one vertex becomes disconnected from the rest of the graph.
	Since computing the strongly connected components and one strong bridge (or all strong bridges) of a digraph can be done in linear time, this simple algorithm runs in $O(m n)$ time. 	
	\begin{algorithm}[t!]
		\DontPrintSemicolon
		\KwIn{		A strongly connected digraph $G=(V,E)$ and
		a list of vertices $L$ (initially $L = V$)}
		\KwOut{The $2$-edge-connected subgraphs of $G$}
		\BlankLine
		Let $m_0$ be number of edges of initial graph\;
		\If{$G$ has no strong bridge\label{alg-line:Bridges}}{
			\Return $\{G\}$ as $2$-edge-connected subgraph
		\label{alg-line:output-2ec}
		}		
		\While{$L \not = \emptyset$ and $G$ has more than $2 \sqrt{m_0}$ edges}
		{
			Extract a vertex $u$ from $L$
			
			$S \gets $\ref{proc:1EdgeOut}($G$, $u$, $\sqrt{m_0}$) \label{alg-line:search-components1}\;
			$S^R \gets $\ref{proc:1EdgeOut}($G^R$, $u$, $\sqrt{m_0}$)
			\label{alg-line:search-components2}\;
			
			If either $S$ or $S^R$ is not empty, remove from $G$ all edges incident
			to one non-empty set of $S$ and $S^R$ and add their endpoints to $L$
			\label{alg-line:found-component}
		}
		
		Compute SCCs $C_1, \ldots, C_c$ of $G$\label{alg-line:sccs1}\;
		$U \gets \emptyset$\;
		\ForEach{$C_i, 1 \leq i \leq c$}{
			Remove one strong bridge from $C_i$ (if one exists)\label{alg-line:strong-bridges-in-sccs}\;
			Recompute SCCs and delete the edges between them\label{alg-line:sccsdel}\;
			\ForEach{SCC $C'$}{
				Initialize $L'$ with the vertices of $C'$ that are endpoints of newly
				deleted edges\;
				$U \gets U \cup 2ECS(C',L')$\label{alg-line:recursion-big-sets}\;
			}
		}
		\Return $U$
		\caption{\textsf{$2ECS(G, L)$ }}
		\label{alg:2ECSC}
	\end{algorithm}

	In our algorithm we build on the simple algorithm described above.
	The high-level idea of our approach
	is to (a) find subgraphs with at most $\sqrt{m}$ edges that are not 2-edge-connected 
	to the rest of the graph in total time $O(m \sqrt{m})$
	and by this (b) limit the maximum recursion depth to $\sqrt{m}$
	by only making recursive calls
	when large subgraphs will be disconnected from each other or the remaining graph
	has at most $O(\sqrt{m})$ edges.
	This is done as follows.
	We use the terms small and large components to refer to subgraphs that
	contain at most and more than $\sqrt{m}$ edges, respectively.
	We first identify all the small components that can be disconnected from the
	rest of the graph by a single edge deletion.
	In each recursive call of the algorithm we maintain a list $L$ of vertices
	for which we want to identify small \EdgeIsolatedOut{1} and \EdgeIsolatedIn{1}
	components. Initially, we set the list $L$ to contain all vertices in order to
	find all small components that can be separated by at most one edge.
		We search for such small subgraphs using the algorithm from
	Section~\ref{sec:edgeIsolatedOne}.
	We compute \EdgeIsolatedIn{1} components by executing $1EdgeOut(G^R,u,\sqrt{m})$,
	where $G^R$ is the reverse graph of $G$.
	Whenever we find a small \EdgeIsolatedOut{1} or \EdgeIsolatedIn{1} component, we
	remove all its incident edges and search for more small \EdgeIsolatedOut{1} or
	\EdgeIsolatedIn{1} components in the remaining graph.
	We do that by inserting the endpoints of the deleted edges into the list $L$.
	If, on the other hand, we cannot find new small components, we conclude that
	either the remaining graph is 2-edge-connected or 
	there are at least two large sets of vertices that will get disconnected by
	either recomputing SCCs or by the removal of a strong bridge. 	In a final phase of each recursive call we compute the SCCs of the graph and for each SCC we remove one
	strong bridge and then recursively call the algorithm on every resulting SCC.
	Before each recursive call, we initialize the lists $L$ to contain the vertices
	that lost an edge during the last phase of the parent recursive call.
	We keep this list in order to restrict the total number of searches for small separable components to $O(m+n)$ since, after initially adding all vertices to the list of the initial call, we only add the endpoints of deleted edges into the lists (which are $O(m)$ many).
	Algorithm~\ref{alg:2ECSC} contains the pseudocode of our algorithm.
	
	The following is a key property that allows us to find small sets that are not strongly connected to the rest of the graph, or that can be disconnected by deleting a single edge, or to conclude that there are no such small sets.
	Every new \EdgeIsolatedOut{1} component that appears in the graph throughout the
	algorithm must have lost an outgoing edge. Respectively, every new
	\EdgeIsolatedIn{1} component that appears must have lost an incoming edge.
	Therefore, we use the list $L$ to keep track of the vertices that have lost an edge
	and for each such vertex $u$ we search for new small \EdgeIsolatedOut{1} or
	\EdgeIsolatedIn{1} components of $u$.
	If no such small components exist in a set of vertices $C$, then we know 
	that either $C$ is a $2$-edge-connected subgraph or
	either recomputing SCCs or the deletion of any
	strong bridge disconnects at least two large components.
	These properties are summarized in the following lemma.

\begin{lemma}
	Let $C$ be a set of vertices of~$G$.
	Every \EdgeIsolatedOut{1} or \EdgeIsolatedIn{1} component \textup{(}of some
		vertex $u\in C$\textup{)} in $G[C]$ that is not such a component in~$G$
		must contain an endpoint of an edge incident to $G[C]$.
	Moreover, if there is no \EdgeIsolatedOut{1} or
		\EdgeIsolatedIn{1} component containing at most $\Delta$ edges
		for any vertex $u \in C$ in $G[C]$, then one of the following holds:
	\begin{enumerate}[\textup{(}a\textup{)}]
		\item $G[C]$ is a $2$-edge-connected subgraph of $G$.\label{sublem:2ec}
		\item There are two sets $A,B\subset C$ with
		$|E(G[A])|, |E(G[B])| > \Delta$ such that $A$ and $B$ are
		in different strongly connected components of $G[C]$.
		\label{sublem:largesccs}
		\item For each strong bridge of $G[C]$ there are two sets
		$A, B \subset C$ with $|E(G[A])|, |E(G[B])| > \Delta$
		that get disconnected by the deletion of the strong bridge.
		\label{sublem:largebridge}
	\end{enumerate}
	\label{lemma:large-set-will-break}
\end{lemma}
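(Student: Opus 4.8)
The plan is to prove the two assertions separately, with the first one being essentially immediate and the second one following from a case analysis driven by the structure of strong bridges and SCCs in $G[C]$. For the first claim, suppose $S$ is a \EdgeIsolatedOut{1} (or \EdgeIsolatedIn{1}) component of a vertex $u \in C$ in $G[C]$ but not in $G$. By definition $S$ has at most one outgoing edge to $G[C] \setminus S$ within $G[C]$, but more than one outgoing edge to $G \setminus S$ within $G$. Hence $S$ must have an outgoing edge in $G$ whose head lies outside $V(C)$, i.e.\ $S$ contains the tail of an edge incident to $G[C]$; the \EdgeIsolatedIn{1} case is symmetric with ``head'' and ``tail'' exchanged. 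This gives the first sentence of the lemma.

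For the main part, assume there is no \EdgeIsolatedOut{1} or \EdgeIsolatedIn{1} component with at most $\Delta$ edges for any $u \in C$ in $G[C]$. First I would dispose of the case where $G[C]$ is not strongly connected: pick any SCC-decomposition of $G[C]$ and consider the condensation (a DAG). A source SCC $X$ of the condensation has no incoming edges from the rest of $G[C]$, so $G[X]$ is a \EdgeIsolatedIn{0} (hence \EdgeIsolatedIn{1}) component of any of its vertices; since by assumption it has more than $\Delta$ edges, $X$ is a ``large'' set. Likewise a sink SCC $Y$ is a \EdgeIsolatedOut{0} component and is large; since $G[C]$ is not strongly connected, $X \neq Y$ and they lie in different SCCs, giving case~(\ref{sublem:largesccs}). (If $G[C]$ has a single SCC but more than one vertex in the condensation this cannot happen; if it is a single vertex it has no edges and the whole statement is vacuous, so we may assume from now on that $G[C]$ is strongly connected.)

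Now suppose $G[C]$ is strongly connected. If it has no strong bridge, then by definition $G[C]$ is $2$-edge-connected, and since it is a subgraph of $G$ it is a $2$-edge-connected subgraph of $G$, which is case~(\ref{sublem:2ec}). Otherwise, let $e = (x,y)$ be an arbitrary strong bridge of $G[C]$. Deleting $e$ breaks $G[C]$ into at least two SCCs; as above, take a source SCC $X$ and a sink SCC $Y$ of the condensation of $G[C] \setminus e$. I claim $X$ and $Y$ are both large. The edge $e$ is the \emph{only} edge leaving $Y$ in $G[C] \setminus e$ together with possibly $e$ itself reconnecting things, so more precisely: in $G[C]$, the sink SCC $Y$ has $e$ as its only outgoing edge (any other outgoing edge of $Y$ would place $Y$ non-sink in $G[C]\setminus e$). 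Hence $G[Y]$ is a \EdgeIsolatedOut{1} component (in $G[C]$) of each of its vertices, so by assumption it has more than $\Delta$ edges. Symmetrically the source SCC $X$ of $G[C]\setminus e$ has $e$ as its only incoming edge in $G[C]$, so $G[X]$ is a \EdgeIsolatedIn{1} component with more than $\Delta$ edges. Finally $X \neq Y$: if $X = Y$ then this single SCC is both a source and a sink of the condensation of $G[C] \setminus e$, forcing $G[C] \setminus e$ to be strongly connected, contradicting that $e$ is a strong bridge. The sets $A := X$ and $B := Y$ are disconnected by the deletion of $e$, so case~(\ref{sublem:largebridge}) holds, and this is true for every strong bridge of $G[C]$.

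The main obstacle I anticipate is the bookkeeping in the last paragraph around ``source/sink SCC after deleting a strong bridge'': one must be careful that a source (resp.\ sink) SCC of $G[C] \setminus e$ genuinely has $e$ as its \emph{only} incoming (resp.\ outgoing) edge \emph{in $G[C]$}, which requires noting that $e = (x,y)$ with $x$ and $y$ in different SCCs of $G[C]\setminus e$ forces $y$ to lie in the source side and $x$ in the sink side of any chain of SCCs separated by $e$; and that there is exactly one such incoming/outgoing edge because a strong bridge, by definition, lies on every path between the two sides. Everything else is a routine application of the definitions together with the observation that a \EdgeIsolatedOut{0} (resp.\ \EdgeIsolatedIn{0}) component is in particular a \EdgeIsolatedOut{1} (resp.\ \EdgeIsolatedIn{1}) component, so the size lower bound $\Delta$ applies to it.
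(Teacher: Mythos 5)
Your proof is correct and follows essentially the same route as the paper's: the same contradiction argument for the first claim (if the component had no edge leaving $C$, it would already be such a component in $G$), and the same case analysis for the second, via source/sink SCCs of the condensation of $G[C]$, respectively of $G[C]$ with a strong bridge removed, where the strong bridge is the unique incoming edge of the source side and the unique outgoing edge of the sink side. The only nitpick is that a source (resp.\ sink) SCC need not itself \emph{be} a minimal $1$-edge-in (resp.\ $1$-edge-out) component --- it only \emph{contains} one --- but since that contained component already has more than $\Delta$ edges by hypothesis, the size bound on the whole SCC is unaffected (the paper writes ``is or contains'' for exactly this reason).
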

\begin{proof}
 		We first show that every \EdgeIsolatedOut{1} component~$\fEdgeIsoOut{1}(u)$
 		of some vertex $u \in C$ that is no \EdgeIsolatedOut{1} component in $G$ must 
 		contain a vertex $x \in \fEdgeIsoOut{1}(u)$ such that there is an edge $(x,y)$ with $y \not\in C$.
 		Assume, by contradiction, that $\fEdgeIsoOut{1}(u)$ exists but there is 
 		no such edge $(x,y)$ in $G$ with $x\in \fEdgeIsoOut{1}(u)$ and $y \notin C$.
 		In this case we have that the very same component $\fEdgeIsoOut{1}(u)$ is a 
 		\EdgeIsolatedOut{1} component of $u$ in $G$.
 		The same argument on the reverse graph shows that every new \EdgeIsolatedIn{1} 
		component (of some vertex $u\in C$) in $G[C]$ must contain an endpoint of an 
 		edge incident to $G[C]$ in $G$.
 		
 		We now turn to the second part of the lemma.
 		If $G[C]$ is strongly connected and does not contain a strong bridge,
 		then $G[C]$ is $2$-edge-connected and thus (\ref{sublem:2ec}) holds.
 		If $G[C]$ is not strongly connected, then it contains (at least)
 		two disjoint sets $A, B \subset C$ such that both $G[A]$ and $G[B]$
 		are strongly connected components of $G[C]$ and 
 		$G[A]$ has no outgoing edge in $G[C]$ (i.e., $G[A]$ is a sink in the DAG 
 		of SCCs of $G[C]$) and $G[B]$ has no incoming edge in $G[C]$ (i.e., $G[B]$ is a 
 		source in the DAG of SCCs of $G[C]$).
 		That is, in $G[C]$ we have that $G[A]$ is or contains a \EdgeIsolatedOut{1} 
 		component (and is a \EdgeIsolatedOut{0} component) of some $u \in C$ 
 		and $G[B]$ contains a \EdgeIsolatedIn{1} component of some $u' \in C$. 
		Both can have the same property in $G$ or contains (resp.\ be) new such 
		components in $G[C]$ compared to $G$.
 		In any case it contradicts the assumptions if one of them has at most $\Delta$ 
 		edges and otherwise statement (\ref{sublem:largesccs}) holds.
 		If $G[C]$ is strongly connected and contains a strong bridge $e^*$, an 
 		analogous argument can be made for two disjoint sets $A, B \subset C$
 		by considering the DAG of SCCs of $G[C] \setminus e^*$. In this case $e^*$
 		is the only incoming edge of $B$ and the only outgoing edge of $A$ in $G[C]$.
 		Thus we have that case~(\ref{sublem:largebridge})
 		holds if the assumptions of the lemma are satisfied.
\proofend
 	\end{proof}
	
	\begin{lemma}\label{lemma:time2ecs}
		Algorithm~$2ECS$ runs in $O(m \sqrt{m})$ time.
	\end{lemma}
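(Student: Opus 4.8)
The plan is to decompose the running time of Algorithm~\ref{alg:2ECSC} into (i) the cost of all invocations of Procedure~\ref{proc:1EdgeOut} inside the \texttt{while} loops, together with the associated edge deletions and list updates, and (ii) the remaining work of each call: the strong-bridge test at line~\ref{alg-line:Bridges}, the SCC computation at line~\ref{alg-line:sccs1}, and, for each resulting SCC, the removal of one strong bridge, the recomputation of SCCs, and the deletion of inter-SCC edges. I would bound (i) by $O((m+n)\sqrt{m})$ by a global count of edge deletions, and bound (ii) by showing that the recursion has depth $O(\sqrt{m})$ while each recursion level does $O(m)$ work. Since the input is strongly connected we have $m\ge n$, so the two bounds combine to $O(m\sqrt{m})$. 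Throughout I write $m_0=m$ for the number of edges of the original graph, which is the fixed value of $\Delta$ used in all calls to Procedure~\ref{proc:1EdgeOut} and in the \texttt{while}-loop threshold $2\sqrt{m_0}$.

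For part (i): the edge set only ever shrinks over the course of the recursion (Procedure~\ref{proc:1EdgeOut} reverses edges only in a local residual graph, and every recursive call is on an induced subgraph from which some edges have been removed), so each of the $m$ original edges is deleted at most once in total. The list $L$ equals $V$ only in the top-level call, and every later insertion into any list is an endpoint of a just-deleted edge; hence there are at most $n+O(m)=O(m+n)$ list insertions over the whole execution, and therefore at most $O(m+n)$ iterations of \texttt{while} loops. Each such iteration makes two calls to Procedure~\ref{proc:1EdgeOut}, each running in $O(\sqrt{m_0})=O(\sqrt{m})$ time, and the bookkeeping of deleting the incident edges of a found component and updating~$L$ is $O(m)$ in total when amortized over the deleted edges. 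This yields $O((m+n)\sqrt{m})$ for (i).

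The depth bound is the crux. I would trace an arbitrary root-to-leaf path in the recursion tree and call a node on it \emph{small} if its input graph has at most $2\sqrt{m_0}$ edges (so its \texttt{while} loop is skipped) and \emph{large} otherwise. Because the edge count never increases, once the path reaches a small node all subsequent nodes are small; so the path is a prefix of large nodes followed by a suffix of small nodes. Each step between consecutive nodes strictly decreases the number of edges: a step either removes a strong bridge or splits off a strict subset of the vertices, and recall that removing a strong bridge strictly increases the number of SCCs, so every non-leaf call recurses on at least two vertex-disjoint strict subsets, each inheriting at least one fewer edge. Hence the suffix of small nodes has length at most $2\sqrt{m}$. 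For a large node on a graph~$H$, its \texttt{while} loop terminates either with at most $2\sqrt{m}$ edges — in which case the path immediately enters the small regime — or with $L=\emptyset$ and more than $2\sqrt{m}$ edges. In the latter case I would invoke Lemma~\ref{lemma:large-set-will-break}: its hypothesis, that the current graph has no \EdgeIsolatedOut{1} or \EdgeIsolatedIn{1} component with at most $\sqrt{m_0}$ edges, holds precisely because $L$ was emptied — every vertex that lost an incident edge was reinserted into $L$ and reprocessed, and by the first part of Lemma~\ref{lemma:large-set-will-break} together with the guarantee of Procedure~\ref{proc:1EdgeOut} no such component can then have gone undetected. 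Thus case~(b) or case~(c) of Lemma~\ref{lemma:large-set-will-break} applies, so computing SCCs (case~(b)) or additionally removing one strong bridge (case~(c)) breaks the graph into components at least two of which have more than $\sqrt{m}$ edges; whichever of these the path descends into, it abandons the more-than-$\sqrt{m}$ edges of the other, so its edge count drops by more than $\sqrt{m}$. Therefore the prefix of large nodes has length at most $m/\sqrt{m}=\sqrt{m}$, and the recursion depth is $O(\sqrt{m})$.

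Finally, for part (ii) and the conclusion: at every fixed recursion depth the active calls operate on induced subgraphs whose vertex sets are pairwise disjoint (the children of a call live on disjoint SCCs of a subgraph of its input, and distinct calls at the same depth descend from disjoint ancestors), hence on edge-disjoint subgraphs, so their edge counts sum to at most~$m$. Each call's non-\texttt{while} work is linear in its current subgraph, using the linear-time strong-connectivity and strong-bridge algorithms, so it sums to $O(m)$ per level and $O(m\sqrt{m})$ over all $O(\sqrt{m})$ levels; the total number of recursive calls is $O(n)$ (every non-leaf call has at least two children on nonempty disjoint vertex sets, so there are at most $n$ leaves), which is of lower order. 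Adding the $O((m+n)\sqrt{m})$ from (i) and using $m\ge n$ gives the claimed $O(m\sqrt{m})$. I expect the main obstacle to be making the appeal to Lemma~\ref{lemma:large-set-will-break} fully rigorous — that is, establishing the invariant that whenever a list becomes empty the current subgraph genuinely contains no \EdgeIsolatedOut{1} or \EdgeIsolatedIn{1} component with at most $\sqrt{m_0}$ edges — which in turn relies on the observation that if a vertex~$u$ lies in such a small component of some vertex, then $u$ itself has such a small component (obtained by restricting to the part reachable from~$u$), so Procedure~\ref{proc:1EdgeOut} would have reported it when $u$ was last processed.
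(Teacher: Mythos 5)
Your proposal is correct and follows essentially the same route as the paper's proof: the same three-way decomposition (a global $O(m+n)$ bound on list insertions times $O(\sqrt{m})$ per search, $O(m)$ work per recursion level by edge-disjointness, and an $O(\sqrt{m})$ depth bound obtained from the dichotomy that each recursive step either enters the at-most-$2\sqrt{m}$-edge regime or, via Lemma~\ref{lemma:large-set-will-break}, sheds more than $\sqrt{m}$ edges). Your added care about why the hypothesis of Lemma~\ref{lemma:large-set-will-break} holds when $L$ empties is a slightly more explicit version of the invariant the paper states in the discussion preceding that lemma, not a different argument.
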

	\begin{proof}
		First notice that each time we search for a \EdgeIsolatedOut{1} or a
		\EdgeIsolatedIn{1} component, we are searching for a component with one outgoing
		(resp., incoming) edge containing at most $\sqrt{m}$ edges or with no outgoing
		(resp., incoming) edges and at most $2\sqrt{m}$ edges.
		We can identify if such a component containing a given vertex~$u$ exists
		in time $O(\sqrt{m})$ by using the
		algorithm of Section~\ref{sec:edgeIsolatedOne}.
		We initiate such a search from each vertex that appears in the list $L$ of
		some recursive call of the algorithm.
 		Initially, we place all vertices in the list~$L$.
		Throughout the algorithm we insert into $L$ only vertices that are endpoints of deleted edges.
		Therefore, the number of vertices that are added to the lists~$L$
		throughout the algorithm is $O(m)$.
		Identifying which edges to delete (and thus which vertices to add to $L$)
		can be done in time proportional to the deleted edges and the
		edges in the \EdgeIsolatedOut{1} or \EdgeIsolatedIn{1} component.
		Hence, the total time spent on these searches (and the subsequent operations)
		is $O(m\sqrt{m})$.
		
		Consider now the time spend in each recursive call without the searches
		for  \EdgeIsolatedOut{1} and \EdgeIsolatedIn{1} components.
		Let $G'$ be the graph for which the recursive call is made and let
		$m_{G'} = |E(G')|$.
		In each recursive call the algorithm spends $O(m_{G'})$ time searching
		for strong bridges in $G'$ in lines~\ref{alg-line:Bridges}
		and~\ref{alg-line:strong-bridges-in-sccs} and
		computing SCCs in lines~\ref{alg-line:sccs1} and~\ref{alg-line:sccsdel}.
		Since the subgraphs of different recursive calls at the same recursion depth are
		disjoint, the total time spent at each level of the recursion is $O(m)$.
		We now bound the recursion depth with $O(\sqrt{m})$.
		
		We show that the graph passed to each recursive call has at most $\max\{m_{G'}-\sqrt{m}, 2\sqrt{m}\}$ edges, or $G'$
		is a $2$-edge-connected subgraph
		and thus the recursion stops. This implies a recursion depth of $O(\sqrt{m})$
		as follows.
		If the graph passed to a recursive call has at most $2\sqrt{m}$ edges, then also
		the number of vertices of this graph is at most $2\sqrt{m}$.
		Therefore, even if the algorithm only removes one strong bridge from every
		strongly connected component in each recursive call,
		the total recursion depth is at most $O(\sqrt{m})$.
		On the other hand, the number of times that the graph passed to
		a recursive call has at least $\sqrt{m}$ fewer edges than $G'$ is at most $\sqrt{m}$.
		Overall, this implies that the recursion depth is bounded by $O(\sqrt{m})$.
		
		It remains to show the claimed bound on the size of the graph passed to
		a recursive call in line~\ref{alg-line:recursion-big-sets}.
		For every \EdgeIsolatedOut{1} or \EdgeIsolatedIn{1} component with at most
		$2\sqrt{m}$ edges that is discovered throughout the algorithm, its incident edges
		are removed and therefore it will be in a separate strongly connected component
		with at most $2\sqrt{m}$ edges.
		Let $C$ be the set of vertices that were not included in any
		\EdgeIsolatedOut{1} or \EdgeIsolatedIn{1} component. By
		Lemma~\ref{lemma:large-set-will-break}
		the subgraph $G'[C]$ either is a $2$-edge-connected subgraph or
						there are two sets $A$ and $B$ with $|E(A)|, |E(B)| > \sqrt{m}$ that will be separated in Line \ref{alg-line:strong-bridges-in-sccs}.
		Thus, every graph passed to the recursive call will have at most $\max\{|E(G')|-\sqrt{m}, 2\sqrt{m}\}$ edges.
		The lemma follows.\proofend
	\end{proof}
	
\begin{lemma}
	Let $G = (V, E)$ be a strongly connected digraph. Algorithm $2ECS(G, V)$ returns 
	the maximal $2$-edge-connected subgraphs of $G$.
\end{lemma}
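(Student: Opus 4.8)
The plan is to keep track, for every digraph~$H$ that occurs during the recursion, of the family $\mathcal{M}(H)$ of vertex sets of the maximal $2$-edge-connected subgraphs of~$H$, show that each edge deletion performed by the algorithm leaves $\mathcal{M}$ unchanged, and then close a short induction on $|E(G)|$ whose base case is ``$G$ has no strong bridge''. The first thing I would prove is an invariance lemma. Call an edge $(a,b)$ of~$H$ \emph{internal} if $a$ and $b$ lie in a common maximal $2$-edge-connected subgraph of~$H$; then if $Q\subseteq E(H)$ contains no internal edge, $\mathcal{M}(H)=\mathcal{M}(H\setminus Q)$. The only elementary fact needed is that adding edges to a $2$-edge-connected digraph keeps it $2$-edge-connected (strong connectivity is preserved, and the graph was already robust to single-edge removal, so no new strong bridge appears). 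Given that: if $H'[W]$ is a maximal $2$-edge-connected subgraph of $H'=H\setminus Q$, then $H[W]\supseteq H'[W]$ is still $2$-edge-connected, hence contained in some maximal $H[W^\ast]$ of~$H$ with $W^\ast\supseteq W$; since $W^\ast$ has no internal edge, $H'[W^\ast]=H[W^\ast]$ is $2$-edge-connected, so maximality of~$W$ in~$H'$ forces $W=W^\ast$, and the reverse inclusion is the same argument run backwards. A corollary, using that a $2$-edge-connected subgraph is strongly connected, is $\mathcal{M}(H)=\bigsqcup_i\mathcal{M}(H[C_i])$ for the SCCs $C_1,\dots,C_c$ of~$H$; in particular discarding edges between distinct SCCs does not change $\mathcal{M}$.

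Next I would check that every edge the algorithm deletes is non-internal in the graph it is deleted from, so that by the lemma (applied after each batch of deletions) $\mathcal{M}$ remains equal to $\mathcal{M}(G)$ throughout, and ``internal'' always refers to the same family. There are three sources of deletions. (1)~In the while loop we delete the boundary edges (those with exactly one endpoint in the set) of a \EdgeIsolatedOut{1} component~$S$, or symmetrically on~$G^R$ of a \EdgeIsolatedIn{1} component; the edges inside $G[S]$ are kept, so $G[S]$ survives as an SCC and is processed later. If $S$ has a single outgoing edge $(x,y)$, then every path from a vertex of~$S$ to a vertex outside~$S$ uses $(x,y)$, so no vertex of~$S$ has two edge-disjoint paths to any outside vertex, hence no such pair is $2$-edge-connected and every boundary edge of~$S$ is non-internal; if $S$ has no outgoing edge, vertices of~$S$ reach nothing outside and the conclusion is immediate, and these are the only cases the procedure returns. (2)~A strong bridge $b=(x,y)$ of a strongly connected~$H$ is non-internal: if $x,y$ lay in a maximal $2$-edge-connected subgraph $H[W]$, then $H[W]\setminus b$ would still be strongly connected (as $H[W]$ has no strong bridge); picking any pair $p,q$ with no $p$-to-$q$ path in $H\setminus b$, note that a simple $p$-to-$x$ path and a simple $y$-to-$q$ path of~$H$ cannot use~$b$ and hence survive in $H\setminus b$, so joining them with an $x$-to-$y$ path inside $H[W]\setminus b$ gives a $p$-to-$q$ walk in $H\setminus b$ --- a contradiction. (3)~Edges between distinct SCCs: non-internal by the corollary.

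Finally I would run the induction on $|E(G)|$. If $G$ has no strong bridge it is $2$-edge-connected, so $\mathcal{M}(G)=\{V(G)\}$ (no proper superset exists, and any proper $2$-edge-connected subgraph is non-maximal since $G$ itself is one), and the algorithm returns exactly this. Otherwise, let $G_1$ be the graph after the while loop; the steps above give $\mathcal{M}(G)=\mathcal{M}(G_1)=\bigsqcup_i\mathcal{M}(G_1[C_i])$ over the SCCs~$C_i$ of~$G_1$, and for each~$C_i$ the removal of one strong bridge~$b_i$ followed by the new inter-SCC deletions preserves the family, so $\mathcal{M}(G_1[C_i])=\bigsqcup_{C'}\mathcal{M}((G_1[C_i]\setminus b_i)[C'])$ over the resulting SCCs~$C'$. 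Each such~$C'$ is strongly connected and has strictly fewer than $|E(G)|$ edges --- either the while loop already deleted an edge, so $|E(G_1)|<|E(G)|$, or it deleted nothing, in which case $G_1=G$ has the single SCC $V(G)$ together with a strong bridge whose removal splits $V(G)$ into proper parts --- so the inductive hypothesis gives $2ECS(C',L')=\mathcal{M}((G_1[C_i]\setminus b_i)[C'])$, and summing over all~$i$ and~$C'$ yields $2ECS(G,V)=\mathcal{M}(G)$. I expect the main obstacle to be item~(2), the path-surgery argument that a strong bridge can never lie inside a $2$-edge-connected subgraph, together with the care in item~(1) needed to be sure the boundary-edge deletions never destroy an edge internal to a maximal $2$-edge-connected subgraph; everything else is bookkeeping around the invariance lemma.
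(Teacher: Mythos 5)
Your proof is correct, and at its core it rests on the same fact as the paper's own argument: every edge the algorithm deletes fails to lie inside any maximal $2$-edge-connected subgraph, and the bridge-removal phase guarantees progress. The organization is genuinely different, though. The paper fixes one maximal $2$-edge-connected subgraph $C$ and argues directly that no edge with both endpoints in $C$ is ever deleted, so $C$ survives until it is reported; you instead prove a global invariance lemma for the whole family $\mathcal{M}(H)$ under deletion of non-internal edges and close with an induction on $\lvert E(G)\rvert$. Your version buys two things: it cleanly packages the ``soundness'' direction into the base case rather than treating it separately, and it supplies an explicit proof (the path-surgery argument in your item~(2)) that a strong bridge of a strongly connected graph can never have both endpoints in a $2$-edge-connected subgraph --- a step the paper asserts without justification. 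The paper's per-component phrasing is lighter on bookkeeping but leaves that detail, and the ``fully inside or fully outside'' claim for \EdgeIsolatedOut{1} components, at the level of assertion.

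One small repair is needed in your induction: the recursive calls are $2ECS(C',L')$ with $L'$ containing only endpoints of newly deleted edges, so the hypothesis ``$2ECS(G,V)$ returns $\mathcal{M}(G)$'' does not literally apply to them. You should strengthen the induction hypothesis to ``$2ECS(H,L)$ returns $\mathcal{M}(H)$ for every strongly connected $H$ and \emph{every} list $L\subseteq V(H)$,'' which costs nothing since no step of your argument uses $L=V$: the while-loop deletions are non-internal no matter which vertices are searched from, and progress is forced by the strong-bridge phase regardless of $L$. (The list only matters for the running-time analysis, not for correctness.) With that adjustment the proof is complete.
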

	\begin{proof}
	First note that by assumption the initial call to the algorithm is on a 
	strongly connected graph and that recursive calls are only made on strongly
	connected subgraphs. Thus whenever the algorithm reports a 
	$2$-edge-connected subgraph in 
	line~\ref{alg-line:output-2ec}, then it is a strongly connected subgraph 
	that does not contain any strong bridges, which is by definition 
	a $2$-edge-connected subgraph.
	Thus it suffices to show that the algorithm reports all the maximal $2$-edge-connected 
	subgraphs. Notice that this also implies that the reported $2$-edge-connected 
	subgraphs are maximal.
	Let $C$ be a maximal $2$-edge-connected subgraph.
	We show that the vertices of $C$ do not get separated by the algorithm, and
	therefore $C$ is reported eventually as a $2$-edge-connected subgraph.
	Since there are two edge-disjoint paths between every pair of vertices in~$C$, 
	any search for either a \EdgeIsolatedOut{1} or a \EdgeIsolatedIn{1} component 
	of a vertex $u$ (lines~\ref{alg-line:search-components1}--\ref{alg-line:search-components2}) 
	either returns a superset of~$C$ or 
	fails to identify such a set containing a subset of the vertices of~$C$.
	Furthermore, notice that any deletion of an edge that does not have both endpoints in $C$ does not affect the fact that $C$ is $2$-edge-connected.
	That is, unless an edge with both endpoints in~$C$ is deleted, 
	no strong bridge appears in~$C$.
	Thus, it remains to show that no edge $(x,y)$ such that $x, y \in C$ is 
	ever deleted throughout the algorithm.
	The edges deleted in line~\ref{alg-line:found-component} of the algorithm are 
	incident to a \EdgeIsolatedOut{1} or a \EdgeIsolatedIn{1} component.
	Since $C$ is always fully inside or fully outside of such a set, 
	no edge from $C$ is deleted.
	The edges deleted in line~\ref{alg-line:strong-bridges-in-sccs} are strong bridges
	and the edges deleted in line~\ref{alg-line:sccsdel} before the recursive calls 
	are between separate strongly connected components.
	Since $C$ is $2$-edge-connected, no edges from $C$ are deleted.
	Finally, notice that at each level of recursion at least one of the 
	strong bridges of each strongly connected component of the graph is deleted 
	and the algorithm is recursively executed in each resulting strongly connected
	component. Thus, finally there will be a recursive call for each
	strongly connected subgraph that does not contain 
	strong bridges, including $C$.\proofend
\end{proof}

Algorithm $2ECS$ can be applied to an arbitrary, i.e., not necessarily strongly
connected, digraph by taking the union of the $2$-edge-connected subgraphs of 
the SCCs of the input graph. We have shown the following theorem.
\begin{theorem}\label{th:2ecs}
	The maximal $2$-edge-connected subgraphs of a digraph can be computed in $O(m^{3/2})$~time.
\end{theorem}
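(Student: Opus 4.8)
The plan is to reduce the general case to the strongly connected case already handled by Algorithm~$2ECS$ and then simply combine the correctness lemma and the running-time bound established immediately above. First I would record the structural observation that makes the reduction work: every $2$-edge-connected subgraph is in particular strongly connected, hence is contained in a single strongly connected component of~$G$; and whether an induced subgraph $G[C]$ is $2$-edge-connected depends only on the vertex set~$C$, not on whether $G[C]$ is viewed as an induced subgraph of~$G$ or of the SCC containing~$C$. Consequently the maximal $2$-edge-connected subgraphs of~$G$ are exactly the union, over the SCCs $C_1,\dots,C_c$ of~$G$, of the maximal $2$-edge-connected subgraphs of the $G[C_i]$, and ``maximal within $G[C_i]$'' coincides with ``maximal within $G$''.

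Given this, the algorithm is: compute the SCCs of~$G$ in linear time, run $2ECS(G[C_i],V(C_i))$ on each~$C_i$, and return the union of the outputs. Correctness of each individual call is exactly the content of the lemma proved just before the theorem (Algorithm~$2ECS$ on a strongly connected digraph returns its maximal $2$-edge-connected subgraphs), and correctness of the combined output follows from the decomposition observation above.

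For the running time, Lemma~\ref{lemma:time2ecs} gives that $2ECS$ on a subgraph with $m_i = |E(G[C_i])|$ edges runs in $O(m_i\sqrt{m_i}) = O(m_i\sqrt{m})$ time. Since the SCCs are edge-disjoint we have $\sum_i m_i \le m$, so the total time over all calls is $O\!\bigl(\sqrt{m}\sum_i m_i\bigr) = O(m^{3/2})$. The initial SCC computation costs $O(m+n)$; isolated vertices (and, more generally, vertices incident to no edge) lie in no $2$-edge-connected subgraph and can be discarded up front, so we may assume $n = O(m)$ and this additive term is absorbed into $O(m^{3/2})$. This establishes the claimed bound.

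As for obstacles: essentially all the technical work is already in place — the local two-pass DFS search with its residual graph (Procedure~\ref{proc:1EdgeOut} together with Lemmas~\ref{lemma:2EISO-paths}, \ref{lemma:2EISO-reversed-edges}, \ref{lemma:reversed-path}, \ref{lemma:determin-1-edge-out-component}), the structural dichotomy of Lemma~\ref{lemma:large-set-will-break}, and the $O(\sqrt{m})$ recursion-depth argument inside Lemma~\ref{lemma:time2ecs}. The only genuinely new step is the routine but necessary verification that the SCC decomposition is compatible with maximality of $2$-edge-connected subgraphs, and the one point to handle with care there is ensuring that ``maximal within the SCC'' is not strictly weaker than ``maximal within~$G$'', which is precisely where one uses that a $2$-edge-connected subgraph cannot straddle an SCC boundary.
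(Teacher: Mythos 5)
Your proposal is correct and follows essentially the same route as the paper, which likewise derives the theorem by applying Algorithm~$2ECS$ to each strongly connected component and taking the union of the outputs, relying on the correctness and $O(m\sqrt{m})$ running-time lemmas proved just before. Your additional verification that a $2$-edge-connected subgraph cannot straddle an SCC boundary and that maximality is preserved under the decomposition is a careful spelling-out of what the paper leaves implicit, not a different argument.
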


\makeatletter{}\section{Maximal $2$-vertex-connected subgraphs in directed graphs}
\label{sec:2vertex}

In this section we first introduce a procedure for identifying \VertexIsolatedOut{1} components
containing at most $\Delta$ edges in time proportional to $\Delta$. 
The same algorithm applied to the reverse graph identifies \VertexIsolatedIn{1} components.
We then use this subroutine with $\Delta = \sqrt{m}$ to obtain a $O(m^{3/2})$
algorithm for computing the maximal $2$-vertex-connected subgraphs of a given 
directed graph.

\subsection{\VertexIsolatedOut{1} and \VertexIsolatedIn{1} components.}
\label{sec:vertexIsolatedOne}

We begin with the definition of \VertexIsolatedOut{k} and \VertexIsolatedIn{k} 
components of a vertex~$u$. In algorithms for $(k+1)$-vertex-connected subgraphs we 
want to detect when for some subgraph (induced by a vertex set) the number of
vertices with outgoing (resp. incoming)
edges has decreased to at most $k$. This can only happen when some vertex has lost 
adjacent edges. Intuitively, the vertex~$u$ for which we search for a 
\VertexIsolatedOut{k} or a \VertexIsolatedIn{k} component is a candidate
for a vertex that is contained in such a subgraph and has lost edges adjacent
to the subgraph. 
Thus it is sufficient to search for \VertexIsolatedOut{k} and \VertexIsolatedIn{k}
components for which the set of vertices with outgoing resp. incoming edges does not
include the starting vertex~$u$. This is reflected in the definitions below
and implies in particular that a  \VertexIsolatedOut{k} 
(resp.\ \VertexIsolatedIn{k}) component contains 
all vertices that have an edge from (resp.\ to) $u$.

\begin{Definition}
Let $G = (V,E)$ be a digraph and $u \in V$ be a vertex.
A \VertexIsolatedOut{k} component of $u$ is a minimal subgraph $S$ of $G$ that contains $u$ and has at most $k$ vertices $X \subset V(S)$, $u \not \in  X$, with outgoing edges to $G\setminus S$.
\end{Definition}

\begin{Definition}
Let $G = (V,E)$ be a digraph and $u \in V$ be a vertex.
A \VertexIsolatedIn{k} component of $u$ is a minimal subgraph $S$ of $G$ that contains $u$ and has at most $k$ vertices $X \subset V(S)$, $u \not \in  X$, with incoming edges from $G\setminus S$.
\end{Definition}

As in the case of \EdgeIsolatedOut{k} (resp., \EdgeIsolatedIn{k}) components, a vertex $u$ may have more than one \VertexIsolatedOut{k} (resp., \VertexIsolatedIn{k}) component.
Also note that for $k' < k$, every \VertexIsolatedOut{k'} component of $u$ is a \VertexIsolatedOut{k} component of $u$ as well.
For the case when $k=1$, the only vertex $x$ that has outgoing (resp., incoming)
edges from a \VertexIsolatedOut{1} (resp., \VertexIsolatedIn{1}) component~$S$ is either a strong articulation point or a vertex that has outgoing (resp., incoming) edges to vertices that belong to different strongly connected components than $x$.
Moreover, each $2$-vertex-connected subgraph is either completely contained in~$S$
or in $(G \setminus S) \cup \{x\}$.

For a given vertex~$u$ and a parameter $\Delta < m/2$,
we present an algorithm for computing a \VertexIsolatedOut{1} component of~$u$
that runs in time $O(\Delta)$ and has the following guarantees:
\begin{itemize}
\item If there exists a \VertexIsolatedOut{1} component of $u$ with at most $\Delta$
edges, then it returns a \VertexIsolatedOut{1} component of $u$ with at most 
$2 \Delta$ edges.
\item If no \VertexIsolatedOut{1} component with at most $\Delta$ edges exists, it might
either return a \VertexIsolatedOut{1} component of $u$ with at most 
$2 \Delta$ edges or the empty set.
\end{itemize}

As mentioned earlier, our algorithm identifies a \VertexIsolatedOut{1} component of $u$ in time proportional to its size (i.e., its number of edges).
In Section \ref{subsec:2VCSs-algorithm} we will use this algorithm to determine quickly whether there exist \VertexIsolatedOut{1} (resp., \VertexIsolatedIn{1}) components of small size (namely, containing at most a predefined number of edges $\Delta$), or conclude that all \VertexIsolatedOut{1} (resp., \VertexIsolatedIn{1}) components have large size. 
We show that this is sufficient to bound the total running time of our algorithm for computing the $2$-vertex-connected subgraphs.

For the rest of this section, we assume that we are given a starting vertex $u$ that can reach at least $2\Delta+1$ edges.
If this is not the case, then the reachable subgraph from~$u$ defines a valid \VertexIsolatedOut{1} component of $u$ that contains at most $2\Delta$ edges and has no outgoing edges.
The exactly same algorithm executed on the reverse graph computes a \VertexIsolatedIn{1} component of $u$ that contains at most $2\Delta$ edges, or we conclude that there is 
no \VertexIsolatedIn{1} component of $u$ with at most $\Delta$ edges.
Since the algorithm for computing a \VertexIsolatedIn{1} component of $u$ is identical to the algorithm for computing a \VertexIsolatedOut{1} component of $u$ when executed on the reverse graph, we only describe the algorithm for finding \VertexIsolatedOut{1} components. The following lemma provides intuition for the properties that we (implicitly)
exploit in our algorithm.

\begin{lemma}
	\label{lemma:2VISO-paths}
	Let $\fVertexIsoOut{1}(u)$ be a \VertexIsolatedOut{1} component of 
	a vertex~$u$ that has outgoing edges and let $ x\not = u$ be the 
	only vertex that has outgoing edges from $\fVertexIsoOut{1}(u)$.
	It holds that $u$ has a path to every vertex $v\in \fVertexIsoOut{1}(u)$ that is contained entirely within the subgraph $\fVertexIsoOut{1}(u)$.
	Moreover, either there is an edge from $u$ to $x$ or 
	$u$ has two internally vertex-disjoint paths to $x$ in $\fVertexIsoOut{1}(u)$.
\end{lemma}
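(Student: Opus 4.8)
The plan is to mirror the proof of Lemma~\ref{lemma:2EISO-paths}, adapting the minimality argument from edge-cuts to vertex-cuts. Throughout, let $S = \fVertexIsoOut{1}(u)$ and recall that by definition $x \neq u$ is the unique vertex of $S$ with an outgoing edge to $G \setminus S$.

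\textbf{Reachability within $S$.} First I would show that $u$ reaches every $v \in S$ using a path that stays inside $S$. Suppose not, and let $C \subseteq S$ be the (nonempty) set of vertices of $S$ not reachable from $u$ within $S$. Note $u \notin C$. Then no edge of $G$ goes from $S \setminus C$ into $C$ (such an edge would make part of $C$ reachable). Consider the induced subgraph $S \setminus C$: it contains $u$, and I claim it has at most one vertex with an outgoing edge to the rest of $G$. Indeed, an outgoing edge of $S \setminus C$ either leaves $S$ entirely — hence emanates from $x$ — or enters $C$ — but we just argued there are no such edges. So $x$ is the only candidate, giving a \VertexIsolatedOut{1} component of $u$ strictly inside $S$, contradicting minimality of $S$. (One should also handle the corner case $x \in C$: then $S \setminus C$ has \emph{no} outgoing edges at all, so it is still a valid \VertexIsolatedOut{1} component, and the contradiction stands.)

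\textbf{Two internally vertex-disjoint paths to $x$, or a direct edge.} Now assume there is no edge $(u,x)$; I must produce two internally vertex-disjoint paths from $u$ to $x$ inside $S$. By Menger's theorem (for internal vertex-disjointness, applied to a non-adjacent pair), it suffices to show that no single vertex $w \in S \setminus \{u,x\}$ separates $u$ from $x$ in $S$. Suppose some such $w$ does; let $C \subseteq S$ be the set of vertices that become unreachable from $u$ in $S \setminus \{w\}$, so $x \in C$ and $w \notin C$ and $u \notin C$. Consider $S \setminus (C \cup \{w\}) \cup \{w\}$ — more cleanly, consider the subgraph $S' = S \setminus C$, which still contains $u$ and $w$. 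Every edge leaving $S'$ either leaves $S$ (so comes from $x$, impossible since $x \in C$) or enters $C$; and since all paths from $u$ into $C$ pass through $w$, the only vertex of $S'$ with an edge into $C$ is $w$ (edges from other vertices of $S'$ into $C$ would give a path from $u$ to $C$ avoiding $w$). Hence $w$ is the unique vertex of $S'$ with an outgoing edge to $G \setminus S'$, so $S'$ is a \VertexIsolatedOut{1} component of $u$ properly contained in $S$ — contradicting minimality. (Here we use $w \neq u$, which holds since a separator cannot be an endpoint, and the fact that $u \notin X$ is permitted in the definition, so $w$ playing the role of the single boundary vertex is allowed.) Therefore no single internal vertex separates $u$ from $x$ in $S$; combined with the first part (which gives at least one path), Menger yields two internally vertex-disjoint paths.

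\textbf{Anticipated obstacle.} The main subtlety, compared with the edge-connectivity lemma, is the bookkeeping around the special vertex $x$ and the starting vertex $u$ in the definition of \VertexIsolatedOut{1}: when we remove a set $C$ and argue the remainder is still a \VertexIsolatedOut{1} component, we must carefully check whether $x \in C$ (remainder has zero boundary vertices) or $x \notin C$ (remainder still has $x$ as its one boundary vertex), and in the separator case we must verify that the separating vertex $w$ — not $x$ — is exactly the one boundary vertex of the smaller component, using that $u \notin X$ is allowed. Ensuring $C$ is nonempty and $u, w \notin C$ in each case is routine but must be stated to make the minimality contradiction airtight. Invoking Menger's theorem in the internally-vertex-disjoint / non-adjacent form is the clean way to avoid an ad hoc argument, and the hypothesis ``no edge $(u,x)$'' is exactly what makes that form applicable.
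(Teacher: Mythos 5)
Your proposal is correct and follows essentially the same route as the paper's proof: both parts argue by contradiction that the set $C$ of vertices cut off from $u$ could be removed to yield a strictly smaller \VertexIsolatedOut{1} component of $u$ (with $x$, respectively the separator $w$, as the unique boundary vertex), contradicting minimality. The only cosmetic difference is that you invoke Menger's theorem explicitly where the paper phrases the same step as ``all paths from $u$ to $x$ share a common vertex $w$,'' and you spell out the corner cases ($x\in C$, $w\neq u$) that the paper leaves implicit.
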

\begin{proof}
	We begin by showing that $u$ has a path to every vertex $v\in \fVertexIsoOut{1}(u)$ that is contained entirely within the subgraph $\fVertexIsoOut{1}(u)$.
	Assume, for the sake of contradiction, that there is a set of vertices $C$ such that the vertices of $C$ are unreachable from $u$ in $\fVertexIsoOut{1}(u)$.
	Then there is no edge $(w,z)$ where $w\in \fVertexIsoOut{1}(u) \setminus C$ and $z \in C$ and thus the outgoing edges from the vertex~$x$ are the only possible 
	outgoing edges from $\fVertexIsoOut{1}(u) \setminus C$.
	Thus, $\fVertexIsoOut{1}(u) \setminus C$ is a \VertexIsolatedOut{1} component of $u$, which contradicts the minimality of $\fVertexIsoOut{1}(u)$.
	
	We now show that if there is no edge from $u$ to $x$, then $u$ has two 
	internally vertex-disjoint paths to $x$ in $\fVertexIsoOut{1}(u)$.
	First, we note that all simple paths from $u$ to $x$ contain only vertices in $\fVertexIsoOut{1}(u)$ since there is no other vertex $x' \not = x$ such that $x' \in \fVertexIsoOut{1}(u)$ and $x'$ has edges leaving $\fVertexIsoOut{1}(u)$.
	Assume, for the sake of contradiction, that all paths from $u$ to $x$ in $\fVertexIsoOut{1}(u)$ share a common vertex $w$ that is different from 
	both $u$ and $x$.
	Then, $u$ does not have a path to $x$ in $\fVertexIsoOut{1}(u) \setminus \{w\}$.
	Let $C$ be the set of vertices that become unreachable from $u$ in $\fVertexIsoOut{1}(u) \setminus \{w\}$. (Notice that $|C| \geq 1$ since $x \in C$.) 
	Clearly, there is no edge $(w',z')$ such that $w' \in \fVertexIsoOut{1}(u) \setminus (C \cup \{w\})$ and $z' \in C$, 
	since otherwise $z'$ would be reachable from $u$ in
	$\fVertexIsoOut{1}(u) \setminus \{w\}$. Hence, the only vertex that has 
	edges leaving $\fVertexIsoOut{1}(u) \setminus C$ is $w$.
	Thus, $\fVertexIsoOut{1}(u) \setminus C$ is a \VertexIsolatedOut{1} component 
	of $u$, which contradicts the minimality of $\fVertexIsoOut{1}(u)$. The lemma follows.\proofend
\end{proof}

Our algorithm for identifying \VertexIsolatedOut{1} components 
begins with a DFS traversal $F_1$ from $u$. As for \EdgeIsolatedOut{1} components,
the idea is that the first DFS traversal ``uses and blocks'' the only
vertex that has edges out of a \VertexIsolatedOut{1} component
if such a component of size at most $\Delta$ exists,
and then a second traversal explores exactly 
the \VertexIsolatedOut{1} component.
In the DFS traversal $F_1$ we
charge to a visited vertex its outgoing edges that were traversed.
We stop $F_1$ when the number of the traversed edges reaches $2\Delta+1$.
Let $T$ be the DFS tree constructed by the DFS traversal.
We define the weight of a vertex $v$, denoted by $w(v)$, to be the total number of 
edges charged to the descendants of $v$ in $T$ (including~$v$). 

Assume that $u$ has a \VertexIsolatedOut{1} component $C$ containing at most 
$\Delta$ edges and exactly one vertex~$x$ with outgoing edges to $V\setminus C$.
It is easy to see that $F_1$ is guaranteed to traverse at least $\Delta+1$ edges
outside of $C$ (since it visits at least $2\Delta+1$ edges and $|E(C)|\leq \Delta$),
and therefore, since $x$ is the only vertex with outgoing edges from $C$, we have $w(x)\geq \Delta+1$.
Moreover, for any vertex $v\not = u$ whose DFS subtree explores only vertices inside $C$, we have $w(v)<\Delta$. 
The following two lemmata show that the vertices with weight more than $\Delta$
form a path from $u$ to a vertex outside of $C$ that we can then use to block
the only vertex with outgoing edges for the second traversal starting from $u$.

\begin{lemma}
	Let $\fVertexIsoOut{1}(u)$ be a \VertexIsolatedOut{1} component of $u$ such that $|E(\fVertexIsoOut{1}(u))| \leq \Delta$, let $x$ be the only vertex that has edges leaving $\fVertexIsoOut{1}(u)$, and let $T$ be a DFS tree generated by a DFS traversal from $u$ that visits $2\Delta+1$ edges.
	Then, for each $v \in T[u,x]$ it holds that $w(v) \geq  \Delta+1$ and for each $v \in \fVertexIsoOut{1}(u) \setminus T[u,x]$ it holds that $w(v) \leq \Delta$.
	\label{lemma:2VISO-blocked-edges}
\end{lemma}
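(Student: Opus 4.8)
The plan is to mirror the proof of Lemma~\ref{lemma:2EISO-reversed-edges} for the edge case, adapting it to vertex connectivity. The key structural fact is that since $|E(\fVertexIsoOut{1}(u))| \leq \Delta$ but the DFS traversal $F_1$ visits $2\Delta+1$ edges, the traversal must explore vertices outside of $\fVertexIsoOut{1}(u)$, and every edge leaving the component emanates from the single vertex~$x$. So $F_1$ can only escape the component through~$x$.

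First I would argue the lower bound $w(v) \ge \Delta+1$ for $v \in T[u,x]$. Since $|E(\fVertexIsoOut{1}(u))| \le \Delta$, the subtraversal rooted at any vertex whose DFS subtree stays inside $\fVertexIsoOut{1}(u)$ charges at most $\Delta$ edges. Because the whole traversal visits $2\Delta+1$ edges and at most $\Delta$ of them lie inside the component, at least $\Delta+1$ edges are charged to vertices outside $\fVertexIsoOut{1}(u)$. Every such vertex is a descendant in $T$ of~$x$ (as $x$ is the only exit), hence of every vertex on $T[u,x]$; therefore $w(v) \ge \Delta+1$ for all $v \in T[u,x]$. For the upper bound, I would show that for $v \in \fVertexIsoOut{1}(u) \setminus T[u,x]$ the subtree $T(v)$ is entirely contained in $\fVertexIsoOut{1}(u)$: such a vertex $v$ is visited either after the DFS has already descended through $x$ and backtracked past it, or along a branch that never reaches $x$. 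In the first case, once the traversal has backtracked above $x$ it cannot re-enter the part of $\fVertexIsoOut{1}(u)$ below $x$ and cannot leave $\fVertexIsoOut{1}(u)$ again since $x$'s outgoing edges are already explored; in the second case $v$ cannot reach any vertex outside the component without passing through $x$, which is not a descendant of $v$, and DFS tree descendants of $v$ are exactly the vertices reachable from $v$ via unexplored edges at the time $v$ is first visited, so they all stay inside $\fVertexIsoOut{1}(u)$. Hence $w(v) \le |E(\fVertexIsoOut{1}(u))| \le \Delta$.

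The main obstacle I anticipate is making the case analysis for the upper bound fully rigorous: unlike the edge case, where ``$v$ cannot reach $x$ without backedges'' was the clean dividing line, here one must carefully handle the vertex $x$ itself and the possibility that a DFS branch touches $\fVertexIsoOut{1}(u)$, leaves through $x$, and returns — which cannot happen because after leaving through $x$'s outgoing edge all of $x$'s out-edges are used up, but this needs to be stated in terms of DFS discovery order rather than flow intuition. I would lean on the characterization that the descendants of $v$ in $T$ are precisely the vertices first discovered during the subtraversal started at $v$, together with the observation that any path in $G$ from such a vertex to $V \setminus \fVertexIsoOut{1}(u)$ must pass through $x$, and $x \notin T(v)$ for $v \in \fVertexIsoOut{1}(u) \setminus T[u,x]$; combined with the fact that $F_1$ halts only when it reaches $2\Delta+1 > 2\Delta$ edges, this pins down $w(v) \le \Delta$. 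The rest is essentially bookkeeping analogous to the already-proved edge-connectivity lemmas.
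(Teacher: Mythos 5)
Your proposal is correct and follows essentially the same route as the paper's proof: the lower bound comes from observing that at least $\Delta+1$ of the $2\Delta+1$ visited edges lie outside $\fVertexIsoOut{1}(u)$ and all of them are charged to descendants of $x$ (so $w(x)\ge\Delta+1$ propagates to every ancestor on $T[u,x]$), and the upper bound comes from the same case split you describe, namely that a vertex $v\in\fVertexIsoOut{1}(u)\setminus T[u,x]$ is either visited after $x$ or its subtraversal cannot reach $x$, so its subtree stays inside the component. The only cosmetic slip is saying the $\Delta+1$ external edges are charged to vertices \emph{outside} the component (the component's outgoing edges are charged to $x$ itself), but this does not affect the conclusion $w(x)\ge\Delta+1$.
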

\begin{proof}
	Since $\fVertexIsoOut{1}(u)$ contains at most $\Delta$ edges, the only way a DFS 
	traversal can visit $2\Delta+1$ edges is by visiting at least $\Delta+1$ edges 
	outside of $\fVertexIsoOut{1}(u)$. By the fact that $x$ is the only vertex that has 
	edges leaving $\fVertexIsoOut{1}(u)$, it follows that 
	$w(x)\geq \Delta+1$, and therefore, for each $v \in T[u,x]$ it holds 
	that $w(v)\geq \Delta+1$.
	Note that the DFS reaches each vertex and in particular $x$ only once (i.e.,
	each vertex of $T$ except $u$ has exactly one incoming edge in $T$), 
	and that any traversal from $u$ that does not visit vertices 
	$v \notin \fVertexIsoOut{1}(u)$ cannot be charged more than $\Delta$ edges.
	None of the subtraversals from vertices $v \in \fVertexIsoOut{1}(u)
	\setminus T[u,x]$ can visit vertices outside of $\fVertexIsoOut{1}(u)$ since 
	either $v$ is visited after $x$ or the subtraversal can not reach $x$.
	In both cases the subtraversal from~$v$ can not use the outgoing edges 
	of $x$ to visit more than $\Delta$ edges. Thus, for each vertex $v \in
	\fVertexIsoOut{1}(u) \setminus T[u,x]$, it holds that $w(v) \leq \Delta$.\proofend
\end{proof}

After the traversal $F_1$, we say that a vertex $v$ is \emph{blocked} if 
$w(v)\geq \Delta+1$. Next, we start a second traversal $F_2$ from $u$ 
(not necessarily a depth-first search) as follows. The traversal $F_2$ can only 
visit the vertex~$u$ and vertices that are not blocked. We say that the 
traversal \emph{reaches} a vertex $v$ whenever it traverses an edge incoming to $v$; 
thus $F_2$ can reach blocked vertices but not visit them and all vertices 
that are visited are also reached by $F_2$.
Whenever $F_2$ reaches a blocked vertex $v$, we unblock all blocked vertices 
on $T[u,v] \setminus v$. (Notice that $v$ itself is not unblocked.)
Assuming that there exists a \VertexIsolatedOut{1} component of $u$ with at most
$\Delta$ edges for which $x \ne u$ is the only vertex with outgoing edges. Then this 
second traversal $F_2$ has two main properties: (\textit{i}) it never unblocks~$x$,
and (\textit{ii}) it reaches all vertices in $\fVertexIsoOut{1}(u)$.
Since we are interested only in computing a \VertexIsolatedOut{1} component of $u$
containing at most $\Delta$ edges (recall that we assumed in the beginning 
that $u$ can reach at least $2\Delta+1$ edges), we terminate $F_2$ whenever it
visits $\Delta+1$ edges. 
If the traversal $F_2$ visits $\Delta+1$ edges, we conclude that there is no \VertexIsolatedOut{1} component of $u$ containing at most $\Delta$ edges.
Before proving the above claim, we first show the following supporting lemma, which says that the blocked vertices form a path in the DFS tree; we call this path the 
\emph{heavy path} of $F_1$.

\begin{lemma}
	\label{lemma:vertex-blocked-path}
	Let $F$ be a DFS traversal that visits $2\Delta+1$ edges and let $T$ be its 
	DFS tree. The vertices~$v$ with $w(v) \geq \Delta+1$ form a path in $T$.
\end{lemma}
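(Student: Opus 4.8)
The plan is to follow the same template as the edge version, Lemma~\ref{lemma:reversed-path}, replacing edge weights by vertex weights. The crucial structural fact I would use is that the charging scheme of $F$ assigns each traversed edge to exactly one vertex (its tail, at the moment the edge is first explored), so $w(v)$ counts a well-defined set of edges that depends only on the descendant set $T(v)$ of $v$ in the DFS tree $T$. In particular, if $v_1$ and $v_2$ are incomparable in $T$, then $T(v_1)\cap T(v_2)=\emptyset$, and hence the sets of edges counted by $w(v_1)$ and by $w(v_2)$ are disjoint.

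First I would argue by contradiction: suppose there are two vertices $v_1,v_2$ with $w(v_1)\ge\Delta+1$ and $w(v_2)\ge\Delta+1$ that are incomparable in $T$. By the disjointness noted above, the total number of edges traversed by $F$ is at least $w(v_1)+w(v_2)\ge 2(\Delta+1)>2\Delta+1$, contradicting the hypothesis that $F$ visited $2\Delta+1$ edges. Hence any two vertices of weight at least $\Delta+1$ are comparable in $T$, i.e., they form a chain.

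To upgrade ``chain'' to ``path'', I would invoke the monotonicity of $w$ along ancestor chains: if $v'$ is an ancestor of $v$, then $T(v)\subseteq T(v')$ and therefore $w(v')\ge w(v)$. Since all $2\Delta+1$ traversed edges are charged to descendants of $u$, we have $w(u)=2\Delta+1\ge\Delta+1$, so $u$ is heavy; and by monotonicity every ancestor of a heavy vertex is heavy as well. Combining this with the chain property, the heavy vertices are exactly the vertices on the path $T[u,z]$, where $z$ is the deepest heavy vertex, which is a path in $T$ as required.

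The only point that needs care is the bookkeeping in the contradiction step --- confirming that no traversed edge is counted in both $w(v_1)$ and $w(v_2)$ when $v_1,v_2$ are incomparable --- and this is immediate from the fact that the charging is per vertex and the subtrees of incomparable vertices are disjoint. Beyond that the argument is a one-line counting contradiction, so I do not anticipate any genuine obstacle; this lemma is strictly easier than the surrounding statements about the second traversal $F_2$.
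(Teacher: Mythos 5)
Your proof is correct and follows essentially the same counting-by-contradiction argument as the paper: two incomparable heavy vertices would force more than $2\Delta+1$ traversed edges. You are in fact slightly more careful than the paper, since you explicitly upgrade ``pairwise comparable'' to ``path'' via the monotonicity $w(v')\ge w(v)$ for ancestors $v'$ of $v$ and the fact that the root $u$ is heavy --- a step the paper's proof leaves implicit.
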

\begin{proof}
  Assume, by contradiction, that  the vertices~$v$ with $w(v) \geq \Delta+1$ do 
  not form a path on $T$. That means, there are two vertices $x$ and $y$
  with $w(x),w(y)\geq \Delta+1$ that do not have an ancestor-descendant relation 
  in $T$, i.e., $T(x) \cap T(y) = \emptyset$.
		This is a contradiction to the fact that $F$ visits only $2\Delta+1$ edges.
	Therefore, the vertices~$v$ with $w(v) \geq \Delta+1$ form a path in $T$.\proofend
\end{proof}

	\begin{procedure}[t!]
		\DontPrintSemicolon
		\KwIn{Digraph $G=(V,E)$, a vertex $u$, and an integer $\Delta$}
		\KwOut{Either a \VertexIsolatedOut{1} component of $u$ with at most $2\Delta$ edges or $\emptyset$; if $\emptyset$ is returned, then no \VertexIsolatedOut{1} component of $u$ with at most $\Delta$ edges exists}
		\BlankLine
		
		Execute DFS $F_1$ from~$u$ for up to $2\Delta+1$ edges\;
		Let $S_1$ be the vertices reached by $F_1$
		
		\If{ $F_1$ cannot reach $2\Delta+1$ edges}{\Return $G[S_1]$ as \EdgeIsolatedOut{1} component of $u$}
		\Else{
			
			Block the vertices on the heavy path of $F_1$\;
			
			Execute a DFS $F_2$ from~$u$ for up to $\Delta+1$ edges and
			whenever a blocked vertex~$v$ is reached: unblock vertices 
			from $u$ to 
			the predecessor of $v$ in $F_1$ and continue the DFS without~$v$\;
			Let $S_2$ be the vertices reached by $F_2$ (including reached but not unblocked vertices)
			
			\If{$F_2$ cannot reach $\Delta+1$ edges} {
				\Return $G[S_2]$ as \VertexIsolatedOut{1} component of~$u$
			}
			\Else{\Return $\emptyset$}
		}
		\caption{1VertexOut($G$, $u$, $\Delta$)}
		\label{proc:1VertexOut}
	\end{procedure}

\begin{lemma}
	Let $G$ be a graph where the vertices $v$ with $w(v) \geq \Delta+1$ are blocked after the DFS traversal $F_1$.
	If there exists a \VertexIsolatedOut{1} component of $u$ containing at most $\Delta$ edges, then $F_2$ traverses at most $\Delta$ edges.
	Moreover, if $F_2$ traverses at most $\Delta$ edges, the subgraph induced by
	the vertices reached by $F_2$ \textup{(}including a reached but not unblocked vertex\textup{)} defines a \VertexIsolatedOut{1} component of $u$ that contains at most $\Delta + 1$ vertices and at most $2\Delta$ edges.
\end{lemma}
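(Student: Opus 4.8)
The plan is to mirror the proof of Lemma~\ref{lemma:determin-1-edge-out-component}, replacing ``reverse the heavy edges'' by ``block the heavy-path vertices and selectively unblock them during $F_2$''. Assume first that a \VertexIsolatedOut{1} component $\fVertexIsoOut{1}(u)$ of $u$ with $|E(\fVertexIsoOut{1}(u))| \le \Delta$ exists. Since $u$ reaches at least $2\Delta+1$ edges while $\fVertexIsoOut{1}(u)$ has at most $\Delta < 2\Delta+1$ edges, $\fVertexIsoOut{1}(u)$ cannot contain all vertices reachable from $u$, so it has exactly one vertex $x \ne u$ with outgoing edges to $G \setminus \fVertexIsoOut{1}(u)$. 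By Lemma~\ref{lemma:2VISO-blocked-edges} every vertex of $T[u,x]$ is blocked after $F_1$ and no vertex of $\fVertexIsoOut{1}(u) \setminus T[u,x]$ is; together with Lemma~\ref{lemma:vertex-blocked-path} this shows that the part of the heavy path strictly below $x$ lies entirely outside $\fVertexIsoOut{1}(u)$. For the first claim I would then show that $F_2$ never reaches a vertex outside $\fVertexIsoOut{1}(u)$, so it traverses at most $|E(\fVertexIsoOut{1}(u))| \le \Delta$ edges. This is a ``chicken-and-egg'' argument: if $t$ is the first moment at which $F_2$ traverses an edge $(a,b)$ with $b \notin \fVertexIsoOut{1}(u)$, then $a$ was already visited, hence already reached, hence (by minimality of $t$) $a \in \fVertexIsoOut{1}(u)$; since $x$ is the unique vertex of $\fVertexIsoOut{1}(u)$ with an edge leaving it, $a = x$, so $x$ had been visited and therefore unblocked. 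But $x$ is unblocked only when $F_2$ reaches a blocked strict descendant of $x$ in $T$, and every such vertex is on the heavy path below $x$, hence outside $\fVertexIsoOut{1}(u)$ --- so $F_2$ reached a vertex outside $\fVertexIsoOut{1}(u)$ strictly before $t$, a contradiction.

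Next I would show, still assuming such an $\fVertexIsoOut{1}(u)$ exists, that $F_2$ in fact reaches \emph{all} of $\fVertexIsoOut{1}(u)$. Since $F_2$ stays inside $\fVertexIsoOut{1}(u)$ and traverses fewer than $\Delta+1$ edges it terminates on its own, so every vertex it visits has all of its out-neighbours reached. Suppose $C := \fVertexIsoOut{1}(u) \setminus \{\textrm{vertices reached by } F_2\}$ is nonempty (note $u \notin C$). Any edge from a reached vertex into $C$ must leave a vertex that was reached but not visited; such vertices are blocked, hence lie on $T[u,x]$, and using the unblocking rule together with Lemma~\ref{lemma:vertex-blocked-path} one sees that all of them except the deepest reached one $v^*$ (which equals $x$ when $x$ is reached, since $x$ is never unblocked) get unblocked and then visited. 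Hence $v^*$ is the only reached vertex with an edge into $C$, and it is also the only vertex of $\fVertexIsoOut{1}(u) \setminus C$ with a possible edge to $G \setminus \fVertexIsoOut{1}(u)$ (indeed $x \in \fVertexIsoOut{1}(u) \setminus C$ only when $x = v^*$). Thus $\fVertexIsoOut{1}(u) \setminus C$ contains $u$ and has at most one vertex other than $u$ with outgoing edges to its complement, i.e.\ it is a \VertexIsolatedOut{1} component of $u$ properly contained in $\fVertexIsoOut{1}(u)$ --- contradicting minimality. Hence $C = \emptyset$, the returned set $S_2$ equals $\fVertexIsoOut{1}(u)$, which has at most $\Delta \le 2\Delta$ edges and, being reachable from $u$ within itself (Lemma~\ref{lemma:2VISO-paths}) and hence weakly connected, at most $\Delta+1$ vertices.

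For the ``moreover'' part I would argue directly, without assuming a small component exists. If $F_2$ traverses at most $\Delta$ edges it terminates on its own, so again every visited vertex has all its out-neighbours in $S_2$; as above the unique vertex of $S_2$ that is reached but not visited is the deepest reached heavy-path vertex $v^*$ (if any), and $v^* \ne u$ since $u$ is always visited. Hence $v^*$ is the only vertex of $S_2$ that can have an edge leaving $S_2$, so $G[S_2]$ has the isolation property of a \VertexIsolatedOut{1} component of $u$; minimality, where needed, follows exactly as in Lemma~\ref{lemma:determin-1-edge-out-component}, since a proper sub-component with at most $\Delta$ edges would, by Lemma~\ref{lemma:2VISO-blocked-edges}, trap $F_2$ inside it. For the sizes: $F_2$ reaches at most $\Delta$ vertices besides $u$, so $|S_2| \le \Delta+1$; and $E(G[S_2])$ consists of the at most $\Delta$ edges traversed by $F_2$ together with the out-edges of $v^*$ landing in $S_2$, of which there are at most $|S_2|-1 \le \Delta$, giving $|E(G[S_2])| \le 2\Delta$.

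I expect the middle step --- proving that $F_2$ reaches all of $\fVertexIsoOut{1}(u)$ --- to be the main obstacle. In the edge case the set of reversed edges is fixed and forms a path in $T$, whereas here the set of blocked vertices shrinks dynamically as $F_2$ proceeds; the delicate point is to confine the reached-but-not-visited vertices to a single suffix of $T[u,x]$, so that ``cutting off'' the unreached part $C$ leaves a subgraph with only one boundary vertex, which is exactly what makes the contradiction with minimality go through. A secondary point to handle carefully is that, unlike in the edge case, $G[S_2]$ need not have only $\Delta$ edges --- the extra out-edges of the single blocked-but-reached vertex account for the looser bound of $2\Delta$.
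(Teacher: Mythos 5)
Your proof is correct and follows essentially the same route as the paper's: Lemma~\ref{lemma:2VISO-blocked-edges} blocks the boundary vertex $x$ so that $F_2$ cannot escape $\fVertexIsoOut{1}(u)$, and for the converse the deepest reached heavy-path vertex $v^*$ is identified as the unique remaining blocked vertex and hence the unique boundary vertex of the reached set, with minimality obtained by the same contradiction. You spell out in more detail what the paper leaves terse (the ``first escape'' argument and the completeness step showing $F_2$ reaches all of $\fVertexIsoOut{1}(u)$), but the underlying argument is the same.
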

\begin{proof}
		Let us first assume that there exists a \VertexIsolatedOut{1} component $\fVertexIsoOut{1}(u)$ of $u$ that contains at most $\Delta$ edges and that all edges leaving $\fVertexIsoOut{1}(u)$ share a common source $x$.
	By Lemma \ref{lemma:2VISO-blocked-edges}, $x$ is blocked. 
	The traversal $F_2$ cannot visit more than $\Delta$ edges, since $u$ cannot visit vertices $v \notin \fVertexIsoOut{1}(u)$ avoiding $x$, and hence, $F_2$ cannot unblock $x$.
												
	Now we show the opposite direction.
	Assume that $F_2$ visits at most $\Delta$ edges and thus reaches at most $\Delta+1$
	vertices. We will show that there exists a \VertexIsolatedOut{1} component $\fVertexIsoOut{1}(u)$ of $u$ that is induced by the vertices reached by $F_2$
	and has at most $2 \Delta$ edges.
	If $F_2$ unblocks the whole path $P_{blocked}$, then it would visit at least $2\Delta+1$ edges, since $F_1$ did so.
	Hence, there is at least one vertex that remains blocked after the traversal of $F_2$; let $v^*$ be such a vertex.
	Let $C$ be the set of vertices that were reached by $F_2$.
	Then, $C$ has at most one blocked vertex, which is $v^*$, since whenever two vertices of the path $P_{blocked}$ are reached, reaching the vertex further 
	away from $u$ on $P_{blocked}$ unblocks all the blocked vertices on the tree
	path from~$u$.
	Notice that all edges leaving $C$ are from~$v^*$.
	Moreover, $v^*$ might have at most $\Delta$ edges to vertices in $C$ that were not traversed.
	Thus the subgraph induced by~$C$ contains $u$, has only one vertex $v^* \ne u$
	with edges out of the subgraph, and contains at most $2\Delta$ edges. Notice 
	that all vertices in $C$ were reached by $F_2$.
	To show that $C$ induces a \VertexIsolatedOut{1} component of $u$ it 
	remains to show that there is no proper subset $C'$ of $C$ that contains~$u$ 
	and has at most one vertex $x' \ne u$ with edges out of $C'$.
	Assume by contradiction that there exists such a vertex set $C'$.
	By Lemma~\ref{lemma:2VISO-blocked-edges} the traversal $F_1$ would have blocked
	$x'$, and there is no other outgoing edge from a vertex of $C'$ to a vertex 
	in $V \setminus C$. 
	Therefore, $F_2$ cannot visit vertices outside of $C$ since it cannot unblock 
	$x'$. This is a contradiction to the fact that $F_2$ visits all the vertices of $C$.\proofend
\end{proof}

	After the execution of the traversal $F_2$, we can either return a \VertexIsolatedOut{1} component of $u$ with at most $2\Delta$ edges or decide that all \VertexIsolatedOut{1} components of~$u$ contain more than $\Delta$ edges, as shown in Lemma~\ref{lemma:determin-1-edge-out-component}.
	The pseudocode of our algorithm is illustrated in Procedure~\ref{proc:1VertexOut}.
	The following lemma summarizes the result of this section.

\begin{lemma}
	Procedure~\ref{proc:1VertexOut} computes in $O(\Delta)$ time a
	\VertexIsolatedOut{1} component of a vertex $u$ containing at most $2\Delta$
	edges or decides that there is no \VertexIsolatedOut{1} component of $u$ 
	containing at most $\Delta$ edges.
\end{lemma}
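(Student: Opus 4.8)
The plan is to derive the statement from the structural facts already established in this subsection, treating the two correctness alternatives and the running-time bound separately, and relying chiefly on the lemma immediately preceding this one.

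\textbf{Correctness.} First I would dispose of the short-circuit case: if $F_1$ cannot reach $2\Delta+1$ edges, then the subgraph $G[S_1]$ induced by the vertices reachable from~$u$ has at most $2\Delta$ edges and \emph{no} outgoing edges at all, so it is a \VertexIsolatedOut{1} component of~$u$ (indeed a \VertexIsolatedOut{0} one) with at most $2\Delta$ edges, and returning it meets the guarantee. Otherwise $F_1$ traverses exactly $2\Delta+1$ edges, so the weights $w(v)$ are well defined and, by Lemma~\ref{lemma:vertex-blocked-path}, the blocked vertices form a single tree path of~$T$ --- the heavy path --- so the blocking step is well defined. Now I invoke the preceding lemma, which states that (i) if $u$ has a \VertexIsolatedOut{1} component with at most $\Delta$ edges then $F_2$ traverses at most $\Delta$ edges, and (ii) whenever $F_2$ traverses at most $\Delta$ edges the vertices it reaches (including a reached-but-not-unblocked vertex) induce a \VertexIsolatedOut{1} component of~$u$ with at most $\Delta+1$ vertices and at most $2\Delta$ edges. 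Part~(ii) shows that whenever the procedure returns $G[S_2]$ it returns a valid \VertexIsolatedOut{1} component with at most $2\Delta$ edges, and together with part~(i) it shows that if a \VertexIsolatedOut{1} component with at most $\Delta$ edges exists then $F_2$ stays within $\Delta$ edges and such a $G[S_2]$ is returned. Finally, the contrapositive of part~(i) handles the case where $\emptyset$ is returned: if $F_2$ exceeds $\Delta$ edges then no \VertexIsolatedOut{1} component of~$u$ with at most $\Delta$ edges exists, which is exactly the second alternative of the statement.

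\textbf{Running time.} $F_1$ is stopped after exploring $2\Delta+1$ edges, so it costs $O(\Delta)$; a single postorder pass over the DFS tree~$T$ then computes all weights $w(v)$ and identifies the heavy path $P$ in $O(\Delta)$ additional time. Since $T$ has at most $2\Delta+1$ tree edges it has $O(\Delta)$ vertices, so $|P| = O(\Delta)$ and blocking the vertices of $P$ costs $O(\Delta)$. The search $F_2$ is likewise stopped after exploring $\Delta+1$ edges, contributing $O(\Delta)$ to the traversal proper. For the unblocking I would keep, for every vertex, its parent in~$T$ and one bit recording whether it is currently blocked, both queryable in $O(1)$ time; whenever $F_2$ reaches a blocked vertex~$v$, walk from the parent of~$v$ up~$T$, unblocking vertices until meeting one that is already unblocked (or reaching~$u$). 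Since no vertex is ever re-blocked during~$F_2$, each vertex of~$P$ is traversed by at most one such upward walk after it is unblocked, so the total unblocking cost is $O(|P|)=O(\Delta)$. Summing up, the procedure runs in $O(\Delta)$ time, and the \VertexIsolatedIn{1} case needs no separate argument, since running the same procedure on $G^R$ turns incoming edges into outgoing ones.

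\textbf{Main obstacle.} Almost all of the combinatorial content is already carried by the preceding lemma (that $F_2$ traverses at most $\Delta$ edges exactly when a small component is present, and that it then exposes one); so the only point requiring care here is implementational, namely arranging the bookkeeping so that the ``unblock a prefix of the heavy path'' step costs $O(\Delta)$ in total rather than $O(\Delta^2)$. This is exactly the amortization argument above --- detecting in $O(1)$ per visited vertex when the already-unblocked prefix has been reached, and using that each heavy-path vertex is unblocked at most once --- and it is the step I expect to need the most attention in a full write-up.
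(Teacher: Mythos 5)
Your proposal is correct and matches the paper's intended argument: the paper states this lemma as a summary without a separate proof, deferring correctness to the immediately preceding lemma (together with the path structure of the blocked vertices) exactly as you do, and the running-time bound to the explicit edge limits on $F_1$ and $F_2$. Your added amortization argument for the unblocking step is a detail the paper leaves implicit, and it is sound.
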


\subsection{Computing the $2$-vertex-connected subgraphs.}
\label{subsec:2VCSs-algorithm}

In this section we present an $O(m \sqrt{m})$ time algorithm for computing the $2$-vertex-connected subgraphs of a directed graph.
We begin with a simple algorithm and then show how we can improve its running time.
Recall that the $2$-vertex-connected subgraphs of a graph are subgraphs that do not contain any strong articulation points, that is, they cannot get disconnected by the deletion of any single vertex.
In contrast to $2$-edge-connected subgraphs, the $2$-vertex-connected subgraph
do not define a partition of the vertices of the input graph.
More specifically, any two $2$-vertex-connected subgraphs might share up to one 
common vertex.
This introduces an additional challenge since the existence of a strong articulation point $x$ that disconnects a vertex set $S$
guarantees that no vertex of $S$ appears in the same $2$-vertex-connected 
subgraph as a vertex of $V\setminus (S \cup x)$ but does not provide information 
on whether $x$ itself appears in a $2$-vertex-connected subgraph with vertices 
from $S$ or $V\setminus (S \cup x)$.

A simple algorithm for computing the $2$-vertex-connected subgraphs of a directed graph works as follows. 
Assume the input graph is strongly connected (or consider each SCC separately).
We repeatedly find a strong articulation point $x$ that disconnects the graph into two sets of vertices $S$ and $V \setminus (S \cup x)$, i.e., there is no pair of vertices $u$ and $v$ that are strongly connected in $G\setminus x$ such that $u\in S$ and $v\in V\setminus (S \cup x)$.
We recursively execute the same algorithm on the strongly connected components of the subgraphs $G[S\cup x]$ and $G[V\setminus S]$ that contain at least three vertices.
If a recursive call fails to identify a strong articulation point in a strongly connected subgraph, then it reports the subgraph as $2$-vertex-connected.
The correctness and the running time of this simple algorithm can easily 
be verified along the following lines.
First, since at each recursive call we identify a strong articulation point that
separates two (non-empty) sets of vertices, we know that the $2$-vertex-connected 
subgraphs of these two sets are disjoint apart from possibly the articulation point
itself. Moreover, we restrict the recursive calls to the strongly connected components of 
the resulting subgraphs since every $2$-vertex-connected subgraph is also strongly
connected. Therefore, all the $2$-vertex-connected subgraphs are preserved at 
each recursive call. The algorithm reports a $2$-vertex-connected subgraph 
once it recurses on a subgraph that does not contain a strong articulation point, 
which is correct by definition. Second, we bound the running time.
The maximum recursion depth is $O(n)$ 
since every recursive call is executed on a 
graph that contains at least one vertex less than the parent call.
Although at each recursive call the strong articulation point is included in 
both sets that it separates, the set of edges is partitioned between the two subgraphs. 
Therefore, at each level of recursion the total number of edges in all instances is 
at most $m$, and the total time to compute a strong articulation point and the 
strongly connected components at the end of each recursive call is $O(m)$, which
leads to an overall running time of $O(mn)$.

The high-level idea of our algorithm for computing the $2$-vertex-connected 
subgraphs is similar to the algorithm of Section 
\ref{sec:2-edge-connected-strong-components} for computing the $2$-edge-connected
subgraphs. We additionally define the following operation to construct 
the subgraphs on which the algorithm recurses.
Let $G$ be a digraph, $x$ a vertex, and $N$ a subset of neighbors of $x$.
The operation $split(x,N)$ is executed as follows.
First, we create an additional vertex $x'$ in $G$, that serves as a copy of $x$.
Second, for every edge $(x,y)$ with $y \in N$ we remove $(x,y)$ from~$G$ and 
add the edge $(x',y)$.
Analogously, for every edge $(y,x)$ with $y\in N$ we remove $(y,x)$ from~$G$ and add the edge $(y,x')$.
This operation can be implemented to take time proportional to the number of 
neighbors of the vertices in $N$ by traversing their edges and 
change every edge that is incident to $x$ to be incident to~$x'$.

\begin{lemma}
  A $split$ operation preserves the number of edges in the graph.
	The maximum number of auxiliary vertices after any sequence of $split$ operations is $2m-n$.
	\label{lemma:edges-vertices-split}
\end{lemma}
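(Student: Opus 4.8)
The first claim is immediate from the definition of the $split$ operation: each edge $(x,y)$ (resp.\ $(y,x)$) with $y\in N$ is deleted and simultaneously replaced by exactly one new edge $(x',y)$ (resp.\ $(y,x')$), while all other edges are untouched and the only structural change is the addition of the isolated-until-reconnected copy vertex $x'$. Hence the number of edges is unchanged; I would state this in a single sentence.

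For the bound on the number of auxiliary vertices, the plan is to set up an accounting argument that charges each newly created copy vertex against edges of the current graph. Observe that $split(x,N)$ is only meaningful (and is only invoked by the algorithm) when $N$ is a proper, nonempty subset of the neighbors of $x$ such that both $x$ and its copy $x'$ retain at least one incident edge — otherwise no genuine splitting occurs. Thus every $split$ operation that creates a new vertex takes a vertex $v$ with current degree $\deg(v)\ge 2$ and replaces it by two vertices each of degree at least $1$, i.e., it increases the vertex count by one while keeping the edge count fixed. The key step is therefore to bound the number of such operations by relating it to a potential function on the graph; a clean choice is $\Phi = m - (\text{number of vertices})$, since in the original $n$-vertex graph $\Phi = m-n$, each splitting operation decreases $\Phi$ by exactly $1$, and $\Phi$ is bounded below by $0$ because any graph (in which every vertex is incident to at least one edge, which is maintained here) with $V$ vertices has at least... — and here is the subtlety — we need $m \ge |V|$ to hold for the graphs arising during the algorithm. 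Since the algorithm operates on strongly connected subgraphs (or SCCs thereof) in which every vertex has in- and out-degree at least one, we indeed have $m \ge |V|$ throughout, so $\Phi \ge 0$ always. Consequently the number of vertex-creating $split$ operations is at most $\Phi_{\text{initial}} - \Phi_{\text{final}} \le (m-n) - 0 = m-n$, and since we start with $n$ vertices the total number of vertices is at most $n + (m-n) = m$, and the number of auxiliary vertices is at most $m-n$.

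Wait — the statement claims the stronger-looking bound $2m-n$ on auxiliary vertices, so I would instead argue more crudely: each $split$ creates at most one auxiliary vertex, and one may bound the number of $split$ operations that actually create a vertex by noting that the edges of the graph are only ever redistributed (never created), so after all operations the multiset of edges still has size $m$, and every auxiliary vertex is incident to at least one of these $m$ edges while having in-degree plus out-degree at least $2$ only if we track it carefully — but to be safe I would simply use the coarse bound that the total number of (endpoint, edge) incidences is $2m$, each real or auxiliary vertex consumes at least one incidence, there are $n$ original vertices that need not consume a ``fresh'' incidence, hence at most $2m - n$ auxiliary vertices. The main obstacle is pinning down the precise invariant on degrees that the algorithm's $split$ calls maintain; once that invariant (every vertex keeps at least one incident edge, so distinct vertices consume distinct incidences among the $2m$ total) is made explicit, the counting is routine. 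I would therefore first record the invariant as an observation, then conclude by the incidence count.
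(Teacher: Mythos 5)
Your final argument is correct and is essentially the paper's own proof: edge preservation is immediate from the definition of $split$, and the vertex bound follows by counting the $2m$ edge--endpoint incidences, noting each vertex is incident to at least one edge and that the $n$ original vertices always remain, giving at most $2m-n$ auxiliary vertices. The potential-function detour (which would give the sharper bound $m-n$ under a stronger degree invariant) is not needed, and your explicit flagging of the ``every vertex keeps an incident edge'' invariant is a point the paper itself leaves implicit.
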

\begin{proof}
	By definition, no edges are added or deleted when performing the $split$ operation.
	Since every edge has two endpoints, in the worst case each vertex is adjacent 
	to only one edge.
	Notice that the original $n$ vertices always exist in the graph.
	Therefore, the total number of auxiliary vertices cannot exceed $2m-n$.\proofend
\end{proof}

\begin{lemma} 
	Let $G$ be a directed graph, $x$ a strong articulation point, and let 
	the sets $N_1$, $N_2$ be a partition of the vertices adjacent to $x$ such 
	that all paths from vertices in $N_1$ to vertices in $N_2$ go through $x$.
	There is a one-to-one correspondence between the $2$-vertex-connected 
	subgraphs in $G$ and in the graph resulting from $G$ through
	the execution of either $split(x,N_1)$ or $split(x,N_2)$.
	\label{lemma:sccs-after-split}
\end{lemma}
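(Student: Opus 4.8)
The plan is to show that the $split$ operation neither destroys nor creates any $2$-vertex-connected subgraph, by arguing that the edge sets of the $2$-vertex-connected subgraphs are exactly the same before and after the split. Fix $G$ and the strong articulation point $x$ with the partition $N_1, N_2$ of its neighbors satisfying the stated separation property, and let $G'$ be the result of $split(x, N_1)$ (the case of $split(x, N_2)$ is symmetric, and in fact yields an isomorphic graph up to renaming $x \leftrightarrow x'$). Denote by $x'$ the new copy. The first observation I would record is structural: in $G'$, both $x$ and $x'$ are cut vertices in the sense that every path in $G'$ from the ``$N_2$-side'' to the ``$N_1$-side'' must pass through $x$ and every path the other way must pass through $x$ as well — this is just a restatement of the hypothesis that all $N_1$-to-$N_2$ paths in $G$ go through $x$, transported to $G'$ where the two sides are now only joined at $x$ (there is no edge from $x'$ to the $N_2$-side or back). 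Consequently $x'$ together with the vertices reachable from/to it without using $x$ forms a component that attaches to the rest of $G'$ only at $x$.

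Next I would set up the correspondence. Since no edges are added or deleted (Lemma~\ref{lemma:edges-vertices-split}), there is a natural bijection between $E(G)$ and $E(G')$: an edge $(x,y)$ or $(y,x)$ with $y \in N_1$ maps to $(x',y)$ or $(y,x')$, and all other edges map to themselves. I would then claim that a vertex set $C \subseteq V(G)$ induces a $2$-vertex-connected subgraph of $G$ if and only if the corresponding vertex set $C'$ in $G'$ (obtained by replacing $x$ with $x'$ whenever $C$'s induced subgraph uses only edges on the $N_1$-side at $x$, and keeping $x$ otherwise) induces a $2$-vertex-connected subgraph of $G'$, and that this map is a bijection between the two families. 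The key point making this work is: a $2$-vertex-connected subgraph $H$ of $G$ with at least three vertices cannot contain both an $N_1$-edge and an $N_2$-edge incident to $x$ — if it did, then $x \in V(H)$ and, because all $N_1$–$N_2$ paths go through $x$, $x$ would be a strong articulation point of $H$ (removing it disconnects the $N_1$-part of $H$ from the $N_2$-part, both of which are nonempty since $H$ uses edges on both sides), contradicting $2$-vertex-connectedness. Hence every $2$-vertex-connected subgraph of $G$ ``lives entirely on one side'' at $x$, and is therefore carried isomorphically into $G'$ (onto the same vertices, with $x$ possibly relabeled to $x'$); its $2$-vertex-connectivity is preserved because $2$-vertex-connectivity of a subgraph depends only on the induced subgraph and we have an isomorphism of induced subgraphs. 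Conversely, a $2$-vertex-connected subgraph $H'$ of $G'$ cannot contain both $x$ and $x'$ (they are non-adjacent and, as noted, every $x$–$x'$ path of $G'$ would have to revisit $x$ or $x'$, so one of them is a cut vertex of $H'$ unless $H'$ has at most two vertices, which is disallowed); so $H'$ uses at most one of $\{x, x'\}$, and collapsing $x' \mapsto x$ gives an isomorphic induced subgraph of $G$, which is therefore $2$-vertex-connected. Composing the two directions shows the maps are mutually inverse, establishing the one-to-one correspondence.

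The main obstacle I anticipate is the bookkeeping of the degenerate cases around the two-vertex exclusion in the definition of $2$-vertex-connectivity: I must be careful that when $H$ contains $x$ and, say, only $N_1$-edges at $x$, the claim ``$x$ is not a cut vertex of $H$'' is used correctly, and that the three-vertex minimum is respected on both sides of the correspondence. A second delicate point is verifying that the separation hypothesis ``all paths from $N_1$ to $N_2$ go through $x$'' indeed transfers to the statement ``every $x$–$x'$ path in $G'$ is trivial or revisits $\{x,x'\}$'', i.e., that after the split the only connection between the two sides is the single vertex $x$; this is immediate from the construction of $split$ but should be stated explicitly since the whole argument rests on it. Everything else — that induced $2$-vertex-connected subgraphs are an isomorphism invariant, and that edges are in bijection — is routine given Lemma~\ref{lemma:edges-vertices-split} and the definitions.
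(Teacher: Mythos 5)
Your proposal is correct and follows essentially the same route as the paper: both hinge on the observation that a $2$-vertex-connected subgraph containing $x$ cannot have $x$-neighbors in both $N_1$ and $N_2$ (you phrase it as ``$x$ would be a strong articulation point of $H$,'' the paper as ``all neighbors of $x$ in $C$ are strongly connected in $G\setminus x$''), so the split preserves the induced edge set of every such subgraph up to relabeling $x$ as $x'$, and the converse direction likewise maps subgraphs of the split graph back isomorphically. One small caution: since the hypothesis only forbids $N_1$-to-$N_2$ paths avoiding $x$ (not the reverse direction), your parenthetical claim that \emph{every} $x$--$x'$ path revisits $\{x,x'\}$ is too strong, but only the $x'$-to-$x$ direction is needed to rule out $\{x,x'\}\subseteq V(H')$, so the argument stands --- and in fact you treat that case more explicitly than the paper does.
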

\begin{proof}
	W.l.o.g., we assume that the $split$ operation is $split(x,N_1)$.
	Let $C$ be a $2$-vertex-connected subgraph before the execution of the 
	$split$ operation. If the $split$ operation is not executed on a vertex of
	$C$, then $C$ remains a $2$-vertex-connected subgraph by the definition of $x$.
	Now assume that the split operation is executed on a vertex $x\in C$.
	Then all neighbors of $x$ that are in $C$ are strongly connected in $G\setminus x$, 
	and therefore they are either all included in $N_1$ or none of them is.
	Thus, all the edges between the vertices of $C$ are preserved.
	
	Now we prove the opposite direction. Let $C$ be a $2$-vertex-connected 
	subgraph after the execution of the $split$ operation.
	Then, either all edges between the vertices of $C$ existed before the $split$ 
	operation, or there is an auxiliary vertex $x' \in C$ such that all edges between 
	vertices of $C\setminus x'$ existed before the operation and all edges between 
	vertices of $C \setminus x'$ and $x'$ were between $C \setminus x'$ and a vertex 
	$x$ before the $split$ operation (where $x$ is the vertex on which the 
	$split$ operation was executed). That is, no additional paths among the 
	vertices of $C$ were introduced through the split operation and thus in both 
	cases $C$ was a $2$-vertex-connected subgraph before the $split$ operation.\proofend
\end{proof}

We are ready to describe our algorithm for computing the $2$-vertex-connected 
subgraphs of a directed graph $G$. We build on the simple recursive algorithm 
that is described at the beginning of this section.
To distinguish the input graph from the graphs in the recursive calls, 
we refer to the original input graph as $G_0=(V_0,E_0)$.
We use the terms small components and large components to refer to subgraphs that contains at most and more than $\sqrt{m_0}$ edges, respectively, where $m_0 = |E_0|$. (We allow small components to contain up to $2\sqrt{m_0}$ edges.)
Our algorithm begins by identifying all the small \VertexIsolatedOut{1} and \VertexIsolatedIn{1} components of any vertex in $G_0$, using the algorithm from Section~\ref{sec:vertexIsolatedOne}.
Throughout the algorithm we maintain a list $L$ of the vertices for which we then
start a search for a small \VertexIsolatedOut{1} or \VertexIsolatedIn{1} component.
We show that it is sufficient to search from the vertices that are inserted
into $L$ throughout the algorithm in order to find \emph{all} the small
\VertexIsolatedOut{1} and \VertexIsolatedIn{1} components of \emph{all} the 
vertices in the graph.
In the initial call to the algorithm we set $L = V_0$ (i.e., this is not done for every recursive call).
At each recursive call the algorithm first tests whether the given strongly connected
graph contains less than $3$ vertices; in this case the empty set is returned as 
every $2$-vertex-connected subgraph has to contain at least $3$ vertices. Then it is
tested whether the given strongly connected graph is also $2$-vertex-connected, which
is the case (by definition) if it does not contain a strong articulation point; in 
this case the algorithm outputs the graph as a $2$-vertex-connected subgraph.
Then, while $L$ is not empty, we extract a vertex~$u$ from~$L$ and search for a 
small \VertexIsolatedOut{1} or a small \VertexIsolatedIn{1} component of $u$
(containing at most $2\sqrt{m}$ edges). 

After identifying a \VertexIsolatedOut{1} component $\fVertexIsoOut{1}(u)$ 
whose outgoing edges originate from a common vertex~$x$, the algorithm executes
$split(x,N)$, where $N$ are the neighbors of $x$ in $\fVertexIsoOut{1}(u)$.
Furthermore, for every edge $e=(w,z)$ incident to $\fVertexIsoOut{1}(u)$ that is
not adjacent to $x$ (i.e., the incoming edges of $\fVertexIsoOut{1}(u)$), we 
insert both $w$ and $z$ into $L$ and remove $e$ from the graph.
We treat every identified small \VertexIsolatedIn{1} component in an analogous way.

If, on the other hand, the graph is not $2$-vertex-connected but
we cannot find a new small  \VertexIsolatedOut{1} or 
\VertexIsolatedIn{1} component, we conclude that either there are at least two 
large sets of vertices that are in different strongly connected components or 
for every strong articulation point there exist two large sets of vertices that
get disconnected by the removal of the strong articulation point.
To exploit that, we perform the following steps in the final phase of each 
recursive call: First we compute the strongly 
connected components $C_1,C_2, \dots, C_c$ of the graph that results from 
the $split$ operations. Each of the SCCs that does not contain a strong 
articulation point and contains at least $3$ vertices is $2$-vertex-connected
and added to the set of $2$-vertex-connected subgraphs. For each of
the other SCCs $C_i$ we first execute split $(v,N_{C'})$ on some strong 
articulation point $v$, where $N_{C'}$ are the neighbors of $v$ that are
contained in a singe arbitrary strongly connected component $C'$ in 
$G[C_i]\setminus v$, and then recursively call the algorithm on each strongly
connected component of the resulting graph. 
Before each recursive call we initialize the lists $L$ to contain the vertices 
that lost an edge during the last phase of the parent recursive call.
After initially adding all vertices to the list of the initial call, 
we only add the endpoints of deleted edges to the lists, thus the 
total number of searches for small  \VertexIsolatedOut{1} and  
\VertexIsolatedIn{1} components is bounded by $O(m+n)$.
Algorithm~\ref{alg:2VCS} contains the pseudocode of our algorithm.
This formulation of the algorithm has the advantage that we can bound the 
size of the subgraphs passed to the recursive calls: either the subgraph 
is a small  \VertexIsolatedOut{1} or \VertexIsolatedIn{1} component
and thus contains at most $2 \sqrt{m}$ edges or two large sets are separated 
and therefore the number of edges for each subgraph at the subsequent level
of recursion is reduced by at least $\sqrt{m}$, which can happen at most $\sqrt{m}$
times.

Similarly to Algorithm~\ref{alg:2ECSC} from 
Section~\ref{sec:2-edge-connected-strong-components}, we now show the key property 
that allows us to either find small sets that can be separated by a single vertex
deletion or conclude that there are at least two large vertex sets that are either 
not strongly connected to each other or become disconnected by the deletion of a 
single vertex. 
Every new \VertexIsolatedOut{1} component that appears in the graph throughout 
the algorithm must contain a vertex that has lost all its outgoing edges that 
led to vertices not in the component as otherwise the component would have 
been a \VertexIsolatedOut{1} component before.
Note that this vertex that has lost outgoing edges cannot be equal to the only 
vertex that \emph{still} has outgoing edges from the \VertexIsolatedOut{1} component.
Analogously, every new \VertexIsolatedIn{1} component that appears must have 
lost an incoming edge to a vertex other than the separating
vertex of the \VertexIsolatedIn{1} component.
Therefore, we use the list $L$ to keep track of the vertices that have lost an edge and for each such vertex $u$ we search for new small \VertexIsolatedOut{1} or \VertexIsolatedIn{1} components of $u$.
If no such small components exist in a set of vertices $C$, then we know that either (i) $C$ is a $2$-vertex-connected subgraph or (ii) we are guaranteed that either two large sets of vertices are in separate strongly connected components of the graph, or that every strong articulation point separates two large sets of vertices.
This property is summarized in the following lemma.

\begin{lemma}
Let $C$ be a set of vertices in $G$.
		Each \VertexIsolatedOut{1} component \textup{(}of some 
 		vertex $u\in C$\textup{)} in $G[C]$ for which $x$ is the only vertex that has outgoing edges to $V\setminus C$ and that is not a \VertexIsolatedOut{1} component in~$G$ must contain an endpoint $z$ of an edge incident 
 		to $G[C]$, such that $z\not = x$.
		Moreover, if there is no \VertexIsolatedOut{1} or 
 		\VertexIsolatedIn{1} component containing at most $\Delta$ edges
 		for any vertex $u \in C$ in $G[C]$, then one of the following holds.
 		\begin{enumerate}[\textup{(}a\textup{)}]
 		\item $G[C]$ is a $2$-vertex-connected subgraph.\label{case:2vcs}
 		\item There are two sets $A,B\subset C$ with $|E(G[A])|, |E(G[B])| > \Delta$ that are disjoint strongly connected components.\label{case:2v:sc}
 		\item For every strong articulation point $x$, there are two sets $A,B\subset C$ with $|E(G[A])|, |E(G[B])| > \Delta$ that are separated in $G[C]\setminus x$.\label{case:art}
 		\end{enumerate}
	\label{lemma:large-set-will-break-vertex}
\end{lemma}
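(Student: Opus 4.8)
The plan is to mirror the structure of the proof of Lemma~\ref{lemma:large-set-will-break} for the edge-connectivity case, adapting each argument to vertex connectivity and to the asymmetry that a strong articulation point lies in both sides that it separates. The first part of the statement is essentially a direct ``locality'' observation: if $\fVertexIsoOut{1}(u)$ is a \VertexIsolatedOut{1} component in $G[C]$ with unique exit vertex $x$, but it is \emph{not} a \VertexIsolatedOut{1} component in $G$, then some vertex of $\fVertexIsoOut{1}(u)$ must gain an outgoing edge when we pass from $G[C]$ to $G$, i.e.\ there is an edge $(z,y)$ with $z\in\fVertexIsoOut{1}(u)$ and $y\notin C$. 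I would argue by contradiction: if no such edge existed, then the only vertex of $\fVertexIsoOut{1}(u)$ with edges leaving the subgraph in $G$ would still be $x$ (the edges into $G\setminus C$ from $\fVertexIsoOut{1}(u)$ are exactly the edges from $x$ into $C\setminus\fVertexIsoOut{1}(u)$ plus, by assumption, none into $V\setminus C$), so $\fVertexIsoOut{1}(u)$ would already be a \VertexIsolatedOut{1} component in $G$ — and moreover minimality is inherited since $G[C]$ is an induced subgraph of $G$. It remains to check that the endpoint $z$ we produce satisfies $z\neq x$; but $z$ has an outgoing edge leaving $C$, while within $G[C]$ the vertex $x$ is the only exit — the point is that $z$ need not be an exit vertex of $\fVertexIsoOut{1}(u)$ in $G[C]$, it only needs to lose an edge, so $z$ is simply ``a vertex that had an extra outgoing edge in $G$'' and this is consistent with $z$ being an ordinary (non-exit) vertex of $\fVertexIsoOut{1}(u)$ in $G[C]$, hence $z\neq x$. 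The same argument run on the reverse graph handles the \VertexIsolatedIn{1} case, giving the first sentence of the lemma.

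For the second part I would do a case analysis on the structure of $G[C]$, exactly as in Lemma~\ref{lemma:large-set-will-break}. If $G[C]$ is strongly connected and has no strong articulation point, then by definition $G[C]$ is $2$-vertex-connected (it has at least three vertices — if it had fewer, the algorithm handles this separately), so case~(\ref{case:2vcs}) holds. If $G[C]$ is not strongly connected, take its DAG of SCCs: there is a sink SCC $G[A]$ (no outgoing edges in $G[C]$) and a source SCC $G[B]$ (no incoming edges in $G[C]$), and $A$ and $B$ are disjoint since a single SCC cannot be both a source and a sink when $G[C]$ is disconnected into $\ge 2$ SCCs. Then $G[A]$ is a \VertexIsolatedOut{0} component (hence contains/equals a \VertexIsolatedOut{1} component) of any $u\in A$ in $G[C]$, and similarly $G[B]$ contains a \VertexIsolatedIn{1} component. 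By the hypothesis that no such component with at most $\Delta$ edges exists, both $|E(G[A])|>\Delta$ and $|E(G[B])|>\Delta$, so case~(\ref{case:2v:sc}) holds. Finally, if $G[C]$ is strongly connected but has a strong articulation point $x$, then for \emph{each} such $x$ consider $G[C]\setminus x$: it is not strongly connected, so it has a source SCC and a sink SCC, which (after adding $x$ back to the relevant sides as in the $split$ construction, or just as vertex sets $A,B\subseteq C\setminus x$) give two disjoint sets with, respectively, the property that all their outgoing edges in $G[C]$ pass through $x$ and all their incoming edges in $G[C]$ pass through $x$. Hence $A\cup\{x\}$ is or contains a \VertexIsolatedOut{1} component of some $u\in A$ in $G[C]$ whose unique exit is $x$, and symmetrically for $B$; by hypothesis both must have more than $\Delta$ edges, so case~(\ref{case:art}) holds.

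The main obstacle, and the place where this differs genuinely from the edge case, is the bookkeeping around the articulation vertex $x$ in cases~(\ref{case:art}) and in the first sentence: a \VertexIsolatedOut{1} component is allowed to \emph{contain} $x$, so I must be careful that the set $A$ I extract from the SCC DAG of $G[C]\setminus x$, together with $x$, really forms (a superset of a minimal) \VertexIsolatedOut{1} component of some vertex $u\neq x$ in it, and in particular that the starting vertex $u$ can be chosen with $u\neq x$ — this is why the definition of \VertexIsolatedOut{k} requires $u\notin X$ (the set of exit vertices) and why we always pick $u$ inside the source/sink SCC itself rather than at $x$. I also need $|A|\ge 1$ as a genuine vertex of $C$ (not just $x$), which holds because $G[C]\setminus x$ is nonempty and disconnected. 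Beyond that, the argument is a routine adaptation: the edge-counting bounds ($>\Delta$ for the large sides) follow immediately from the contrapositive of the hypothesis, and the minimality/inducedness arguments are identical in spirit to those in the proof of Lemma~\ref{lemma:large-set-will-break}. I would keep the write-up parallel to that proof so the reader can transfer intuition directly.
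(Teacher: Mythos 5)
Your proposal follows essentially the same route as the paper's proof: the same contradiction argument for the first sentence (if the component lost no edge, it would already be a \VertexIsolatedOut{1} component in $G$, run on the reverse graph for the in-case) and the same three-way case analysis for the second part (no strong articulation point; source and sink SCCs of the DAG of SCCs of $G[C]$ when it is not strongly connected; source and sink SCCs of $G[C]\setminus x$ for each strong articulation point $x$), with the large-side bounds coming from the contrapositive of the hypothesis in each case. The one spot where your write-up is weaker than the paper's is the derivation of $z\neq x$ --- ``consistent with $z$ being a non-exit vertex'' is not a proof --- and the clean fix, which is exactly how the paper phrases it, is to take as the contradiction hypothesis that no vertex of $\fVertexIsoOut{1}(u)\setminus\{x\}$ has an edge to $V\setminus C$ (explicitly allowing $x$ itself to have such edges) and observe that then $x$ remains the unique exit vertex of the component in $G$ and minimality is inherited, so the component is already a \VertexIsolatedOut{1} component in $G$, contradicting the premise.
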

\begin{proof}
 		We first show that every \VertexIsolatedOut{1} component~$\fVertexIsoOut{1}(u)$
 		of some vertex $u \in C$ for which $x$ is the only vertex that has 
 		outgoing edges to $V\setminus C$ and that is 
 		no \VertexIsolatedOut{1} component in $G$ must 
 		contain a vertex $w \in \fVertexIsoOut{1}(u) \setminus x$ such 
 		that there is an edge $(w,y) \in G$ with $y \not\in C$.
 		Assume, by contradiction, that $\fVertexIsoOut{1}(u)$ exists but there is 
 		no such edge $(w,y)$ in $G$ with $w\in \fVertexIsoOut{1}(u) \setminus x$ 
 		and $y \notin C$.
 		In this case, the very same component $\fVertexIsoOut{1}(u)$ is a 
 		\VertexIsolatedOut{1} component of $u$ in $G$, since $x$ is the only 
 		vertex having outgoing edges to $V\setminus C$.
 		The same argument on the reverse graph shows that every \VertexIsolatedIn{1}
 		component (of some vertex $u\in C$) in $G[C]$ must contain an endpoint of an
 		edge incident to $G[C]$.
	
	Now we turn to the second part of the lemma.
	If $G[C]$ is strongly connected and does not contain an articulation point,
	then $G[C]$ is $2$-vertex-connected, i.e., case~\eqref{case:2vcs} holds.
	If $G[C]$ is not strongly connected,  then it contains (at least)
 		two disjoint sets $A, B \subset C$ such that both $G[A]$ and $G[B]$
 		are strongly connected components of $G[C]$ and 
 		$G[A]$ has no outgoing edge in $G[C]$ (i.e., $G[A]$ is a sink in the DAG 
 		of SCCs of $G[C]$) and $G[B]$ has no incoming edge in $G[C]$ (i.e., $G[B]$ is a 
 		source in the DAG of SCCs of $G[C]$).
 		That is, in $G[C]$ we have that $G[A]$ is or contains a \VertexIsolatedOut{1} component of some $u \in C$
 		and $G[B]$ is or contains a \VertexIsolatedIn{1} component of some $u' \in C$. 
		Both can have the same property in $G$ or contains (resp. be) new such components in $G[C]$ compared to $G$.
 		In any case it contradicts the assumptions if one of them has at most $\Delta$ 
 		edges and otherwise case~\eqref{case:2v:sc} holds.
If $G[C]$ is strongly connected and contains an articulation point $v^*$, an 
 		analogous argument can be made for two disjoint sets $A, B \subset C$
 		by considering the DAG of SCCs of $G[C] \setminus v^*$. In this case $v^*$
 		is the only vertex with incoming edges of $B$ and the only vertex with 
 		outgoing edges of $A$ in $G[C]$. Thus in this case \eqref{case:art} is 
 		satisfied if the assumptions of the lemma hold.\proofend
\end{proof}

\begin{algorithm}[t!]
		\DontPrintSemicolon
		\KwIn{		A strongly connected digraph $G=(V,E)$ and
		a list of vertices $L$ (initially $L = V$)}
		\KwOut{The $2$-vertex-connected subgraphs of $G$}
		\BlankLine
		Let $m_0$ be the number of edges of the initial graph\;
		\lIf(\tcp*[h]{removing degenerate subgraphs}){$\vert V \rvert \le 2$}{\Return $\emptyset\;$}
		\If{$G$ has no strong articulation point\label{alg-line:SAPs}}{
			\Return $\{G\}$ as $2$-vertex-connected subgraph
		\label{alg-line:output-2vc}
		}		
		\While{$L \not = \emptyset$ and $G$ has more than $2 \sqrt{m_0}$ edges}
		{
			Extract a vertex $u$ from $L$
			
			$S \gets $\ref{proc:1VertexOut}($G$, $u$, $\sqrt{m_0}$) \label{alg-line:search-components1v}\;
			$S^R \gets $\ref{proc:1VertexOut}($G^R$, $u$, $\sqrt{m_0}$)
			\label{alg-line:search-components2v}\;
			
			Pick non-empty set of $S$ and $S^R$ if it exists\;
			Let $x$ be the common vertex in $S$ resp.~$S^R$ of all outgoing resp.~incoming edges (if it exists) and let $N$ be the neighbors of $x$ inside the set\;
			Execute $split(x, N)$ (if $x$ exists)\;
			Delete all edges incident to the selected set that are not adjacent to $x$ and add their endpoints to $L$
		}
		
		Compute strongly connected components $C_1, \ldots, C_c$ of $G$\label{alg-line:2vccs}\;
		$U \gets \emptyset$\;
		\ForEach{$C_i, 1 \leq i \leq c$}{
		  \If{$C_i$ contains a strong articulation point~$v$}{
			execute $split(v,N_{C'})$, where $N_{C'}$ are the edges between $v$ and the vertices of an arbitrary strongly connected component $C'$ of $C_i\setminus v$. \label{alg-line:sccs-v}\;
			\ForEach{SCC $C$ of $C_i$}{
				Initialize $L'$ with the vertices of $C$ that are endpoints of 
				newly deleted edges\;
				$U \gets U \cup 2VCS(C,L')$\label{alg-line:recursion-big-sets-v}\;
			}
		  }\Else{
			\If(\tcp*[h]{$C_i$ is $2$-vertex-connected}){$|V(C_i)| \ge 3$}{
			  $U \gets U \cup \{C_i\}$\;
			}
		  }
		  
		}
		\Return $U$
		\caption{\textsf{$2VCS(G, L)$ }}
		\label{alg:2VCS}
	\end{algorithm}

\begin{lemma}
	Algorithm \textsf{$2VCS$} is correct.
\end{lemma}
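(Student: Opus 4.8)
The plan is to mirror the correctness proof of Algorithm~$2ECS$: I would establish \emph{soundness} (every reported subgraph corresponds to a maximal $2$-vertex-connected subgraph of the input $G_0$) and \emph{completeness} (every maximal $2$-vertex-connected subgraph of $G_0$ is reported), and handle the two features that are new compared with $2ECS$, namely the $\mathit{split}$ operation and the fact that $2$-vertex-connected subgraphs may overlap in an articulation point. Throughout, a ``reported subgraph'' means the subgraph of $G_0$ obtained by undoing all $\mathit{split}$ operations, i.e.\ merging each auxiliary vertex back into its original.

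For soundness I would first note that every call of $2VCS$ runs on a strongly connected graph (the initial call by assumption, and recursive calls on strongly connected components), so a subgraph $H$ is reported only when it is strongly connected, has at least three vertices, and has no strong articulation point, i.e.\ $H$ is $2$-vertex-connected in the current graph; and since a $\mathit{split}$ only relocates edges and a deletion only removes them, undoing the $\mathit{split}$s re-adds edges, which preserves strong connectivity and cannot create a strong articulation point, so the corresponding subgraph of $G_0$ is again $2$-vertex-connected. Its \emph{maximality} I would defer and obtain from completeness plus one structural observation: a call that reports its whole input (line~\ref{alg-line:output-2vc}) makes no recursive call, and a strongly connected component reported directly is one that is \emph{not} recursed into, so once every maximal $2$-vertex-connected subgraph is known to be reported, nothing strictly smaller can also be reported without being swallowed by a childless call.

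For completeness I would fix a $2$-vertex-connected subgraph $C$ of $G_0$ and track it as a vertex set $C_t$ in the current graph $G_t$, maintaining that $G_t[C_t]$ is isomorphic to $G_0[C]$ and that $C_t$ lies inside exactly one of the subgraphs passed to the next recursive call; the latter follows, since $C_t$ is strongly connected, once no edge of $G_t[C_t]$ is ever deleted and $\mathit{split}$ only relabels $C_t$ coherently. For the deletions: an inter-SCC edge cannot have both endpoints in the strongly connected $C_t$, and an edge $(w,z)$ deleted in the while loop has $w\notin V(S)$ and $z\in V(S)\setminus\{x\}$ for the identified component $S$ with exit vertex $x$ (symmetrically for \VertexIsolatedIn{1}), so if both were in $C_t$ then $G[C_t]\setminus\{x\}$ --- still strongly connected because $C_t$ is $2$-vertex-connected --- would contain a $z$-to-$w$ path, i.e.\ a path from $V(S)$ to $V\setminus V(S)$ avoiding $x$, which is impossible since every edge leaving $S$ has source $x$. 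For the while-loop $\mathit{split}(x,N)$: every path from $N$ to the other neighbours of $x$ passes through $x$, so Lemma~\ref{lemma:sccs-after-split} applies, and since $C_t$ lies entirely in $S$ or entirely in $(G\setminus S)\cup\{x\}$ (the property recorded in Section~\ref{sec:vertexIsolatedOne}) its neighbours of $x$ are all in $N$ or all outside $N$, so $C_t$ is unchanged or uniformly relabelled to $(C_t\setminus\{x\})\cup\{x'\}$. For the $\mathit{split}(v,N_{C'})$ at a strong articulation point in line~\ref{alg-line:sccs-v}, with $C'$ taken to be a sink (or source) strongly connected component of $C_i\setminus v$: the neighbours of $v$ in $C_t$ lie in one strongly connected component of $C_i\setminus v$ (since $G[C_t]\setminus\{v\}$ is strongly connected), which is $C'$ or disjoint from it, so again $C_t$ is relabelled coherently, and since the $\mathit{split}$ at a strong articulation point genuinely disconnects the component the piece carrying $C_t$ strictly loses a vertex. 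I would also record that for $C\subseteq D$ the inclusion $C_t\subseteq D_t$ persists, because how a vertex gets relabelled is decided by its neighbourhood inside the larger subgraph.

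Finally, applying the tracking to a maximal $2$-vertex-connected subgraph $M$: there is a unique chain of calls whose input contains $M_t$, and whenever this input properly contains $M_t$ it is strongly connected but not $2$-vertex-connected (by maximality), hence has a strong articulation point, so a $\mathit{split}$ fires and the piece carrying $M_t$ shrinks; the chain therefore ends at a call whose input equals $M_t$, which is reported in line~\ref{alg-line:output-2vc}. This gives completeness, and hence maximality: if a reported $\tilde{H}$ were strictly contained in a maximal $M$, then $\tilde{H}_t\subseteq M_t$ throughout, so $\tilde{H}$ would be swallowed by the childless call that reports $M$ and could not be reported separately. I expect the main obstacle to be the bookkeeping around $\mathit{split}$: since it introduces fresh vertices, $C$ must be followed as a moving target, and one must check that every $\mathit{split}$ the algorithm performs preserves a relabelled copy of every $2$-vertex-connected subgraph --- for the while-loop $\mathit{split}$ via the single-exit-vertex structure of \VertexIsolatedOut{1}/\VertexIsolatedIn{1} components (Lemma~\ref{lemma:sccs-after-split}), and for the $\mathit{split}$ in line~\ref{alg-line:sccs-v} via $C'$ being a whole strongly connected component of $C_i\setminus v$ (so any $2$-vertex-connected $C_t$ is split coherently) and being a source or sink so that the graph actually breaks up. The rest is a routine transcription of the $2ECS$ argument.\proofend
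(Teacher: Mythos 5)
Your proposal is correct and follows essentially the same route as the paper's proof: soundness from the checks performed before a subgraph is reported, completeness by tracking a maximal $2$-vertex-connected subgraph through the algorithm and verifying that no edge with both endpoints inside it is ever deleted and that every $\mathit{split}$ relabels it coherently, and maximality as a consequence of completeness. Your treatment is in fact somewhat more careful than the paper's on the $\mathit{split}$ bookkeeping (in particular, arguing directly that the neighbours of the split vertex inside a $2$-vertex-connected subgraph lie in a single strongly connected component, and choosing $C'$ as a source or sink so that the split provably makes progress), but this refines rather than changes the argument.
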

\begin{proof}
First note that by assumption the initial call to the algorithm is on a 
	strongly connected graph and that recursive calls are only made on strongly
	connected subgraphs. Thus whenever Algorithm~\textsf{$2VCS$} reports a 
	$2$-vertex-connected subgraph, then this is a $2$-vertex-connected subgraph, 
	since it is strongly connected, does not have any strong articulation points,
	and contains at least $3$ vertices.
	It suffices to show that \textsf{$2VCS$} reports all the maximal 
	$2$-vertex-connected subgraphs. Notice that this also implies that the 
	reported $2$-vertex-connected subgraphs are maximal.
	Let $C$ be a maximal $2$-vertex-connected subgraph.
	We show that $C$ does not get disconnected by the algorithm, since this will ensure that the algorithm eventually will recurse on $C$ and report it as a $2$-vertex-connected subgraph.
	Since there is no vertex whose deletion separates any pair of vertices in $C$, any search for either a \VertexIsolatedOut{1} or a \VertexIsolatedIn{1} component (of some vertex $u$), either returns a superset of $C$, or it fails to identify such a set containing a subset of the vertices of $C$.
	Furthermore, note that any deletion of an edge that does not have both endpoints in $C$ does not affect the fact that $C$ is $2$-vertex-connected.
	That is, unless an edge with both endpoints in $C$ is deleted, no strong articulation points appear in $C$.
	Thus, it is left to show that no edge $(x,y)$ such that $x, y\in C$ is ever deleted throughout the algorithm.
	The edges that are deleted are either edges between strongly connected components,
	or between two sets of vertices $A,B$ that get disconnected by a strong
	articulation point, or edges incident to a \VertexIsolatedOut{1} or a
	\VertexIsolatedIn{1} component found during the course of the algorithm.
	Since $C$ is always fully included in such a component, no edge of $C$ is deleted.
	Finally, notice that in each recursive call, unless the graph that is passed 
	to the recursion is $2$-vertex-connected, at least one strong articulation 
	point that separates at least one pair of vertices is computed and the 
	algorithm recurses on each strongly connected component (possibly containing 
	a copy of the strong articulation point) after its removal.
	Thus, the algorithm makes progress in each iteration and 
	at some point there will be a recursive call for each strongly connected 
	subgraph that does not contain strong articulation points, including $C$.\proofend
\end{proof}

\begin{lemma}
	Algorithm \textsf{$2VCS$} runs in $O(m \sqrt{m})$ time on a graph with $m$ edges.
\end{lemma}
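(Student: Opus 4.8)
The plan is to follow the template of Lemma~\ref{lemma:time2ecs} for $2ECS$: split the running time into (i) the cost of all calls to Procedure~\ref{proc:1VertexOut} made inside the while loops, together with the bookkeeping triggered when a small component is found (the $split$ operations and the deletions of incident edges), and (ii) the cost of the remaining per-recursive-call work (testing for a strong articulation point, computing SCCs, the final-phase $split$, and recursing), and then bound each piece by $O(m\sqrt m)$. Throughout I write $m$ for the (fixed) number of edges of the input graph, which is the value $m_0$ used by the algorithm.

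For (i): each invocation of Procedure~\ref{proc:1VertexOut} runs in $O(\sqrt m)$ time, and a search is only launched from a vertex currently held in some list $L$. The initial call inserts all $n$ vertices, and afterwards a vertex enters a list $L$ only as an endpoint of a deleted edge; since every edge is deleted (or dropped by not being passed to a child) at most once, the total number of searches is $O(m+n)=O(m)$, using $m\ge n$ for strongly connected inputs (otherwise one runs the algorithm on each SCC and sums). Hence all searches together cost $O(m\sqrt m)$. When a small \VertexIsolatedOut{1} or \VertexIsolatedIn{1} component $S$ is discovered, the associated $split(x,N)$ only rewires edges incident to $x$ inside $S$; there are $O(|E(G[S])|)=O(\sqrt m)$ of them, and they can be located by scanning the out-adjacency lists of the $O(\sqrt m)$ vertices of $S$ once the incident edges of $S$ not adjacent to $x$ have been removed, so each such operation costs $O(\sqrt m)$, while the removal of the incident edges is paid for by the edges themselves. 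By Lemma~\ref{lemma:edges-vertices-split}, $split$ never changes the edge count.

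For (ii): in a recursive call on a graph $G'$ this work is $O(m_{G'}+n_{G'})$. Because a $split$ sends a vertex to one child and its copy to another, the subgraphs at the same recursion depth have pairwise disjoint vertex and edge sets; by Lemma~\ref{lemma:edges-vertices-split} the total number of vertices in play is always at most $2m$ and the total number of edges never exceeds $m$, so the work per recursion level is $O(m)$, and it suffices to bound the recursion depth by $O(\sqrt m)$. I would show that every graph passed to a child is either \textbf{small} (at most $2\sqrt m$ edges) or has at least $\sqrt m$ fewer edges than its parent, unless the parent is already a $2$-vertex-connected subgraph (a leaf). A small graph has $O(\sqrt m)$ vertices, and by Lemma~\ref{lemma:edges-vertices-split} at most $O(\sqrt m)$ auxiliary vertices are ever created inside the subtree rooted at it; since each recursive step removes a strong articulation point and keeps only strongly connected pieces it strictly decreases the vertex count, so that subtree has depth $O(\sqrt m)$. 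If $G'$ is large, the while loop can only have stopped with $L=\emptyset$, so by Lemma~\ref{lemma:large-set-will-break-vertex} no vertex of $G'$ has a \VertexIsolatedOut{1} or \VertexIsolatedIn{1} component with at most $\sqrt m$ edges; hence $G'$ is either $2$-vertex-connected (leaf), or has two disjoint SCCs each with more than $\sqrt m$ edges (so every SCC has at most $m_{G'}-\sqrt m$ edges), or every strong articulation point $v$ separates two large sets. In the last case I would have the final phase split at $v$ off a \emph{sink} (or source) SCC $C'$ of $G'\setminus v$: since $C'\cup\{v\}$ contains a \VertexIsolatedOut{1} component of any $u\in C'$ — $C'$ being a sink, $v$ is the only vertex of $C'\cup\{v\}$ with edges leaving it — and no such small component exists, $|E(C'\cup\{v\})|>\sqrt m$, so the ``rest'' side loses more than $\sqrt m$ edges, and since one of the two guaranteed large sets lies entirely in the ``rest'' side, the $C'$ side loses more than $\sqrt m$ edges as well. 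Thus along any root-to-leaf path there are at most $\sqrt m$ ``large drops'' plus, after the first small graph, $O(\sqrt m)$ further levels, giving recursion depth $O(\sqrt m)$; multiplying by the $O(m)$ work per level and adding the $O(m\sqrt m)$ search cost yields the claimed $O(m\sqrt m)$ bound.

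I expect the recursion-depth argument to be the crux, and within it the most delicate point is the interplay between $split$ and Lemma~\ref{lemma:large-set-will-break-vertex}: one must choose, at a strong articulation point $v$, a part to split off that provably carries $\Omega(\sqrt m)$ edges — this is why a sink/source SCC of $G'\setminus v$ is the right choice rather than an arbitrary one, leaning on the invariant (propagated through the lists $L$) that the while loop leaves no small \VertexIsolatedOut{1}/\VertexIsolatedIn{1} component behind — and one must verify that the auxiliary vertices introduced by $split$ inflate neither the per-level work nor the depth, which Lemma~\ref{lemma:edges-vertices-split} handles.
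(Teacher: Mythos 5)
Your proof is correct and follows essentially the same structure as the paper's: charge the $O(\sqrt{m})$-time searches to the $O(m+n)$ insertions into the lists $L$, bound the per-level work by $O(m)$ via the edge-disjointness of the recursive subproblems together with Lemma~\ref{lemma:edges-vertices-split}, and bound the recursion depth by $O(\sqrt{m})$ using Lemma~\ref{lemma:large-set-will-break-vertex} to argue that each child is either small or has lost at least $\sqrt{m}$ edges. If anything, your handling of the final-phase $split$ is more careful than the paper's: Algorithm~\ref{alg:2VCS} and its analysis let $C'$ be an \emph{arbitrary} SCC of $C_i\setminus v$ and simply assert that the two large sets get separated, whereas an arbitrary choice need not disconnect anything (on the $4$-cycle $c\to a\to v\to b\to c$ with $C'=\{c\}$ the set $N_{C'}$ is empty and the split is vacuous), so insisting on a sink or source SCC of $C_i\setminus v$, as you do, is exactly what is needed to guarantee both that the recursion makes progress and that each child loses more than $\sqrt{m}$ edges in the articulation-point case.
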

\begin{proof}
	Let $G_0 = (V_0,E_0)$ be the input graph for the initial call to the algorithm.
	Let $n_0 = |V_0|$ and $m_0 = |E_0|$.
	First, notice that each time we search for a \VertexIsolatedOut{1} 
	(or a \VertexIsolatedIn{1} component by searching on the reverse graph), 
	we are searching either for a component where all outgoing edges have a common 
	source or for a component with no outgoing edges; in both cases we search for 
	a component with at most $2\sqrt{m_0}$ edges. 
	We can identify if such components exist in time $O(\sqrt{m_0})$ by using the
	algorithm of Section \ref{sec:vertexIsolatedOne}.
	We start a search from every vertex that is added to the list $L$ in some 
	recursive call. Notice that initially we add all vertices to $L$, and 
	throughout the course of the algorithm we insert the two endpoints of every
	deleted edge into the corresponding list~$L$. By Lemma~\ref{lemma:edges-vertices-split} the number of edges does not increase by the $split$ operations.
	Therefore, the total time spent on these calls is $O((m_0+n_0)\sqrt{m_0}) = O(m_0 \sqrt{m_0})$. 
	For every \VertexIsolatedOut{1} or \VertexIsolatedIn{1} component~$S$ (with 
	at most $2\sqrt{m_0}$ edges) that is discovered throughout the algorithm, the
	component has either no outgoing (resp., incoming) edges or we execute 
	the $split$ operation on the only vertex $x$ that has outgoing (resp., incoming)
	edges. 	We can execute the operation $split$ in time proportional to the edges 
	incident to the neighbors of $x$ in $S$, and we can charge this time, as well
	as the time for identifying the edges to delete, to the 
	process of identifying the set $S$ (that covers for the edges in $G'[S]$),
	and to the edges deleted from the graph.	
	
	Let $G' = (V',E')$ be the graph passed to a recursive call. 
	The algorithm spends $O(|E'|)$ time to test whether there are strong articulation
	points in the graph (line~\ref{alg-line:SAPs}), and additionally $O(|E'|)$ time 
	to compute the strong articulation points, to execute the $split$ operation on 
	an arbitrary strong articulation point in each strongly connected component,
	and to recompute strongly connected components (lines \ref{alg-line:2vccs}--\ref{alg-line:sccs-v}).
	Since the recursive calls are executed on subgraphs whose sets of edges are disjoint (since the $split$ operator simply partitions the edges incident to the vertex on which the operation is executed, and moreover, all the strongly connected components are disjoint), it follows that the total time spend for the above procedures in all instances at each recursion depth is $O(m_0)$.
	Notice that the number of vertices does not exceed $2m_0$, by Lemma~\ref{lemma:edges-vertices-split}, after any sequence of split operations, and thus this time bound holds for every recursion depth.

	Let $G'$ be the graph at some recursive call.
	We show that the graph passed to each subsequent recursive call has at 
	most $\max\{|E(G')|-\sqrt{m_0}, 2\sqrt{m_0}\}$ edges, or $G'$ is a 
	$2$-vertex-connected subgraph and thus the recursion stops. 
	This implies a recursion depth of $O(\sqrt{m_0})$ as follows.
	If a graph passed to a recursive call has at most $2\sqrt{m_0}$ edges, it means that also the number of vertices is at most $2\sqrt{m_0}$.
	Therefore, even if the algorithm simply identifies a strong articulation point,
	executes the $split$ operation, and recurses on each strongly connected 
	component of the resulting graph, the total recursion depth is at 
	most $O(\sqrt{m_0})$.
	On the other hand, there can be at most $\sqrt{m_0}$ cases where the graph 
	that is passed in a recursive call has $\sqrt{m_0}$ fewer edges that $G'$.
	Overall, this proves that the recursion depth is bounded by $O(\sqrt{m_0})$.
	
	It remains to show the claimed bound on the size of the graph passed to 
	a recursive call in line~\ref{alg-line:recursion-big-sets-v}.
	By Lemma~\ref{lemma:sccs-after-split} every \VertexIsolatedOut{1} (resp., \VertexIsolatedIn{1}) component will be in a separate strongly connected component with at most $2\sqrt{m_0}$ edges.
	Now, let $C$ be the set of vertices that were not included in any \VertexIsolatedOut{1} or any \VertexIsolatedIn{1} component.
	This set did not contain any \VertexIsolatedOut{1} or any \VertexIsolatedIn{1} component~$S$ with less than $\sqrt{m_0}$ edges in $G'$ since otherwise such a set~$S$ would contain a vertex~$x$ that lost an edge (and thus was added to~$L$) and the algorithm would search for a \VertexIsolatedOut{1} or a \VertexIsolatedIn{1} component of~$x$, identifying~$S$ in this way.
	This means, by Lemma~\ref{lemma:large-set-will-break-vertex}, that $C$ either is a $2$-vertex-connected subgraph, or there are two disjoint sets in $A,B\subset C$, $|E'(A)|, |E'(B)| > \sqrt{m_0}$ that are either not strongly connected to each other 
	or separated by at most one strong articulation point in $G'[C]$.
	If the later holds, $A$ and $B$ will be separated in line~\ref{alg-line:2vccs}, and every graph passed to a subsequent recursive call has at most $\max\{|E'(G')|-\sqrt{m_0}, 2\sqrt{m_0}\}$ edges.\proofend
\end{proof}

The following theorem summarizes the result of this section.

\begin{theorem}\label{th:2vcs}
	The maximal $2$-vertex-connected subgraphs of a digraph can be computed in $O(m^{3/2})$~time.
\end{theorem}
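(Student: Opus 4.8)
The plan is to obtain Theorem~\ref{th:2vcs} as a short wrap-up of the correctness and running-time lemmas already established for Algorithm~\ref{alg:2VCS}, together with a reduction from arbitrary digraphs to strongly connected ones (exactly analogous to the remark made after Theorem~\ref{th:2ecs} for the edge case). First I would take an arbitrary digraph $G=(V,E)$ and compute its strongly connected components $C_1,\dots,C_c$ in $O(m+n)$ time. Since every $2$-vertex-connected subgraph is in particular strongly connected, each maximal $2$-vertex-connected subgraph of $G$ lies entirely inside a single $G[C_i]$, and conversely a maximal $2$-vertex-connected subgraph of $G[C_i]$ is one of $G$; hence running Algorithm~\textsf{$2VCS$}$(G[C_i], V(C_i))$ for every $i$ and returning the union of the outputs is correct, by the correctness lemma for Algorithm~\textsf{$2VCS$}.

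Next I would bound the running time. A singleton SCC cannot contain a $2$-vertex-connected subgraph (which must have at least three vertices), so the algorithm is only invoked on the nontrivial SCCs. A nontrivial strongly connected graph on $k_i\ge 2$ vertices has at least $k_i$ edges, so writing $m_i=|E(G[C_i])|$ and restricting the sums to nontrivial SCCs we get $\sum_i k_i \le \sum_i m_i \le m$. The running-time lemma gives $O(m_i^{3/2})$ for the call on $G[C_i]$, and from $m_i^{3/2}=m_i\cdot m_i^{1/2}\le m_i\cdot m^{1/2}$ together with $\sum_i m_i\le m$ we obtain $\sum_i m_i^{3/2}=O(m^{3/2})$. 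Adding the $O(m+n)$ cost of the SCC decomposition and of discarding vertices in trivial SCCs completes the bound, since the number of vertices that actually feed into recursive calls is at most $m$ and the remaining vertices are handled in linear time.

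The substantive work — Procedure~\ref{proc:1VertexOut} for finding small \VertexIsolatedOut{1}/\VertexIsolatedIn{1} components in time $O(\Delta)$, the $split$-based recursion of Algorithm~\textsf{$2VCS$} (using Lemma~\ref{lemma:edges-vertices-split} and Lemma~\ref{lemma:sccs-after-split}), the structural dichotomy of Lemma~\ref{lemma:large-set-will-break-vertex}, and the amortization keeping the recursion depth $O(\sqrt{m_0})$ and the total number of component searches $O(m+n)$ — is already carried out in the preceding lemmas, so no genuine obstacle remains at this point. The only spot demanding a moment's care is confirming that the linear preprocessing term $O(m+n)$ is absorbed into $O(m^{3/2})$; this holds because $n$ splits into the at most $m$ vertices lying in nontrivial SCCs plus the remaining vertices, which contribute nothing and are removed in time linear in their number. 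Hence the maximal $2$-vertex-connected subgraphs of $G$ are computed in $O(m^{3/2})$ total time, as claimed.
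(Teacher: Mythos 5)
Your proposal is correct and follows essentially the same route as the paper: the theorem is obtained as a direct summary of the correctness and $O(m\sqrt{m})$ running-time lemmas for Algorithm~\textsf{$2VCS$}, combined with the standard reduction to strongly connected components (the paper states this reduction explicitly only after Theorem~\ref{th:2ecs}, but it applies verbatim here). Your additional bookkeeping for trivial SCCs and the $\sum_i m_i^{3/2}\le m^{3/2}$ summation is a fine, if slightly more explicit, version of what the paper leaves implicit.
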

 
\makeatletter{}\section{$k$-edge-connected subgraphs}\label{sec:kedge}

In this section we extend our algorithm for $2$-edge-connected subgraphs 
to $k$-edge-connected subgraphs for $k > 2$. At the end of the section we discuss how 
to obtain better running time bounds for undirected graphs.

Let $G = (V,E)$ be a digraph and $u \in V$ be a vertex. Let for this section 
$k' = k-1$.
We define a \EdgeIsolatedOut{k'} component $S$ of $u$ to be a subgraph of $G$ with $\widetilde{k} \le k'$ outgoing edges such that $S$ contains $u$ and there is no component
$S'\subset S$ containing $\widetilde{k}$ or fewer outgoing edges. 
We denote a \EdgeIsolatedOut{k'} component of $u$ by $\fEdgeIsoOut{k'}(u)$.
Analogously, we define a \EdgeIsolatedIn{k'} component $S$ of $u$ to be a 
subgraph of $G$ with $\widetilde{k} \le k'$ incoming edges, such that $S$ contains $u$ 
and there is no component $S'\subset S$ containing $\widetilde{k}$ or fewer incoming edges. 

We would like to extend the algorithm for maximal $2$-edge-connected subgraphs
to maximal $k$-edge-connected subgraphs for $k \ge 2$. 
For this we need to identify 
\EdgeIsolatedOut{k'} components (and therefore \EdgeIsolatedIn{k'} components
by using the reverse graph) for a given vertex~$u$
in time proportional to their size,
potentially with an additional factor depending on $k$. 
The first idea would be to simply start $k = k'+1$ depth-first searches from~$u$.
Assume for now that the first $k'$ searches $F_1, \ldots, F_{k'}$ each visited
$\ell(k', \Delta)$ edges, for some function~$\ell(k', \Delta)$ of order 
$O(k' \Delta)$ specified later, and 
let $T_1, \ldots, T_{k'}$ denote the DFS trees generated by
the searches, respectively. 
Further assume that there exists a \EdgeIsolatedOut{k'}
component~$\fEdgeIsoOut{k'}(u)$ that contains at most $\Delta$ edges. 
Suppose we have for each $1 \le i \le k'$ a path $P_i$ in $T_i$ that starts at $u$
and ends outside of $\fEdgeIsoOut{k'}(u)$.
The idea of the $k = k'+1$ searches from~$u$ is that the 
first $k'$ searches should each reduce the number of outgoing edges of
$\fEdgeIsoOut{k'}(u)$ by one by reversing the direction of the edges on $P_i$
after the $i$-th such search, and the $(k'+1)$-st search should explore
exactly the edges of $\fEdgeIsoOut{k'}(u)$ (and does not have to be a DFS). 
However, since there are multiple edges leaving $\fEdgeIsoOut{k'}(u)$, each DFS~$F_i$
might enter and leave $\fEdgeIsoOut{k'}(u)$ multiple times and thus we cannot determine
such paths $P_i$ that end outside of $\fEdgeIsoOut{k'}(u)$ so easily.
Note that if we reverse a path that ends inside of 
$\fEdgeIsoOut{k'}(u)$, then the number of outgoing edges of $\fEdgeIsoOut{k'}(u)$
remains the same and no progress was made by this search.
We first show that under the assumption that we have for all $1 \le i \le k'$
a path $P_i$ from~$u$ that ends outside of $\fEdgeIsoOut{k'}(u)$, 
the strategy of reversing the path $P_i$ before conducting
the search $F_{i+1}$ works, i.e., we extend Section~\ref{sec:edgeIsolatedOne} to 
this case. We then construct $O(k')$ paths for each of the searches of 
which one of them is guaranteed to end outside of the \EdgeIsolatedOut{k'}
component of $u$ (if such a component with at most $\Delta$ edges exists).
Based on this, we provide an algorithm for computing the maximal $k$-edge-connected
subgraphs with exponential dependence on $k = k' + 1$.

\subsection{(k-1)-edge-out components}\label{sec:edgeIsolatedK}

Recall $k' = k-1$. 
Given an integer $\Delta$, we assume that the starting vertex~$u$ can 
reach at least $\ell(k', \Delta)$ edges for some $\ell(k', \Delta) \in O(k' \Delta)$
with $\ell(k', \Delta) < m$. 
Notice that if $u$ cannot visit $\ell(k', \Delta)$ 
edges, then the reachable subgraph from $u$ defines a \EdgeIsolatedOut{k'} 
component of $u$ containing less than $\ell(k', \Delta)$ edges. 
Let $F_1$ denote a depth-first search for up to $\ell(k', \Delta)$ edges
started from $u$ in $G_1 = G$. Let $P_1$ be a path from $u$ in $F_1$.
Let $G_{i+1}$ for $1 \le i \le k'$ be defined as $G_i$ with the edges of $P_i$
reversed, where $F_i$ is the depth-first search conducted on $G_i$ 
(from $u$, for up to $\ell(k', \Delta)$ edges) and $P_i$ is a path 
from $u$ in $F_i$. We can interpret the graph $G_{i+1}$ as residual graph after
sending one unit of flow along the path $P_i$ in the graph $G_i$. The following
lemma supports this interpretation by showing that if $\widetilde{k}$~paths end in
a vertex set $T$ with $u \not\in T$, then the number of edges from $S = V \setminus T$
to $T$ in $G_{i+1}$ is reduced by $\widetilde{k}$ compared to $G$.
\begin{lemma}\label{lemma:fewer-out}
	Let $S$ be a set of vertices containing $u$. Let $T = V \setminus S$
	be a set of vertices that contains $0 \le \widetilde{k} \le i$ of the endpoints 
	of the paths $P_1, \ldots, P_i$ for some $i \le k'$. 
	Then there are $\widetilde{k}$ fewer edges from $S$ to $T$ in $G_{i+1}$ than in $G$.
\end{lemma}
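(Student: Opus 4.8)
The plan is to argue by induction on $i$, tracking how each path reversal changes the edge cut between $S$ and $T$. The key observation is that reversing a single path $P_j$ (which runs from $u \in S$ to some endpoint) changes the number of $S$-to-$T$ edges by exactly $-1$ if the endpoint of $P_j$ lies in $T$, and by $0$ if the endpoint lies in $S$. To see this for a single path: walk along $P_j = (u = v_0, v_1, \dots, v_\ell)$ and look at the maximal subpaths that alternate between $S$ and $T$. Each edge of $P_j$ that crosses from $S$ to $T$ gets reversed, so it is removed from the $S$-to-$T$ count; each edge that crosses from $T$ to $S$ gets reversed into an $S$-to-$T$ edge, so it is added. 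Since the path starts in $S$, the crossings from $S$ to $T$ and from $T$ to $S$ alternate, beginning with an $S$-to-$T$ crossing; hence if the path ends in $T$ there is exactly one more $S$-to-$T$ crossing than $T$-to-$S$ crossing, giving a net change of $-1$, and if it ends in $S$ the two counts are equal, giving net change $0$. Internal edges of $P_j$ that do not cross the cut, as well as edges not on $P_j$, are unaffected.

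First I would make this single-path claim precise as the base of the induction (or as a standalone sub-claim), being careful about one subtlety: a path $P_j$ need not be simple, and even if it is, when we reverse $P_i$ to form $G_{i+1}$ the earlier reversals from $P_1, \dots, P_{i-1}$ have already been applied, so the "edges of $P_i$" are edges of $G_i$, not of $G$. I would phrase the single-path lemma relative to an arbitrary intermediate graph: if $H$ is any orientation obtained from $G$ by reversing some edges, and $P$ is a path in $H$ from a vertex in $S$ to a vertex $t$, then reversing the edges of $P$ changes the number of $S$-to-$T$ edges by $-1$ if $t \in T$ and by $0$ if $t \in S$. This statement does not care about the history of $H$, so it applies directly at each step.

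Then the induction is routine: assume the claim holds after $i-1$ reversals, i.e., the number of $S$-to-$T$ edges in $G_i$ equals (number in $G$) minus (number of endpoints of $P_1, \dots, P_{i-1}$ lying in $T$). Applying the single-path claim to $P_i$ in $G_i$, the count in $G_{i+1}$ decreases by a further $1$ if $P_i$ ends in $T$ and by $0$otherwise, so it equals (number in $G$) minus (number of endpoints of $P_1, \dots, P_i$ in $T$) $=$ (number in $G$) $- \widetilde{k}$, as desired.

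The main obstacle is really just the bookkeeping around non-simple paths and the parity argument for crossings; once the single-path claim is stated relative to an arbitrary reversed orientation, the rest is a clean telescoping induction. I would spend the bulk of the writeup making the crossing-count argument airtight (decomposing $P$ by its cut-crossing edges and using that it starts in $S$), and keep the inductive step terse.
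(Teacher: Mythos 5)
Your proof is correct and takes essentially the same route as the paper: the paper contracts $S$ and $T$ to single vertices $s$ and $t$ and counts the multi-edges of each contracted path between them, which is exactly your alternating cut-crossing parity argument (one more $S$-to-$T$ crossing than $T$-to-$S$ crossing iff the path ends in $T$), followed by the same telescoping over the $i$ reversals. Your explicit remark that each $P_j$ lives in $G_j$ rather than $G$, and that the single-path claim should be stated for an arbitrary intermediate orientation, is a point the paper handles only implicitly, so that care is welcome but not a departure.
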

\begin{proof}
	Consider the (multi-)graph $G'$ that is constructed from $G$ by contracting 
the vertices of $S$ to a single vertex $s$ and the vertices of 
$T$ to a single vertex~$t$. Applying the contraction to the paths $P_j$,
we obtain for each $1 \le j \le i$ a set of edges $E'_j$ between $s$ and $t$
that represent the contracted path, where we keep the direction of the edges 
as in $P_j$. Let $G'_1 = G'$ and let $G'_{j+1}$ for $1 \le j \le i$ be the 
(multi-)graph obtained from $G'_j$ by reversing the edges of $E'_j$, i.e., 
the graph $G'_j$ can be obtained from $G_j$ by contracting $S$ and $T$, respectively.
By definition, the graphs $G'_j$ differ from $G'$ only 
in the direction of the edges between $s$ and~$t$.
 Further we have that if $P_j$ ends 
at a vertex of $T$ (case~1), then the number of edges from $s$ to $t$ in $E'_j$ 
is one more than the number of edges from $t$ to $s$; in contrast, if $P_j$ ends 
at a vertex of $S$ (case~2), there are as many edges from $s$ to $t$ as 
from $t$ to $s$ in $E'_j$. In case~1 the number of edges from $s$ to $t$ in
$G'_{j+1}$ is one lower than in $G'_j$, while in case~2 the number of edges 
from $s$ to $t$ is the same in $G'_j$ and $G'_{j+1}$.
Let $0 \le \widetilde{k} \le i$ be the number of paths
of $\{P_1, \ldots, P_i\}$ that end in $T$. We have that
the number of edges from $s$ to $t$ in 
$G'_{i+1}$, and therefore from $S$ to $T$ in $G_{i+1}$, is equal to 
the number of paths from $S$ to $T$ in $G$ minus $\widetilde{k}$.
\proofend\end{proof}

The following lemma shows that for the first of the $k'+1$ searches that cannot 
reach $\ell(k', \Delta)$ edges from~$u$, we have that the subgraph traversed 
by this search induces a \EdgeIsolatedOut{k'} component of $u$.

\begin{lemma}\label{lemma:keout-soundness}
	Let the $i$-th search of the searches $F_1, \ldots, F_{k'+1}$ be the first 
	one that visits less than $\ell(k', \Delta)$ edges if such a search 
	with $i > 1$ exists. Then there exists an \EdgeIsolatedOut{(i-1)} component
	of $u$ with less than $\ell(k', \Delta)$ edges and the subgraph 
	induced by the vertices traversed by $F_i$ defines this component.
\end{lemma}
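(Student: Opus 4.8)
The plan is to argue by induction on $i$, using the inductive hypothesis that, whenever a search $F_j$ for $1 \le j < i$ traverses $\ell(k',\Delta)$ edges, the path $P_j$ it produces ends \emph{outside} of any fixed \EdgeIsolatedOut{(i-1)} component $\fEdgeIsoOut{i-1}(u)$, and hence by Lemma~\ref{lemma:fewer-out} each such search strictly reduces the number of edges from $\fEdgeIsoOut{i-1}(u)$ to its complement in the running graph $G_j$. Concretely, I would first fix notation: let $S$ be \emph{any} vertex set containing $u$ with at most $i-1$ outgoing edges in $G$ and at most $\ell(k',\Delta)$ edges inside (the candidate \EdgeIsolatedOut{(i-1)} component); the goal is to show $F_i$ traverses exactly the edges induced by some minimal such $S$. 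The key structural fact to establish up front is that since $F_1,\dots,F_{i-1}$ each visited $\ell(k',\Delta) > $ (number of edges inside $S$) edges, each of those searches must have left $S$, so each path $P_j$ extracted from $F_j$ can be chosen to end outside $S$ — this is exactly the place where the ``$O(k')$ candidate paths, one of which escapes'' machinery (to be developed later in the section, but which I may invoke) is used to guarantee a good choice of $P_j$. Given that, Lemma~\ref{lemma:fewer-out} applied with $T = V\setminus S$ and $\widetilde k = i-1$ shows that in $G_i$ the set $S$ has $0$ outgoing edges, i.e. $S$ is completely sealed off in the residual graph.

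Next I would run the two-part argument mirroring Lemma~\ref{lemma:determin-1-edge-out-component}. For the forward direction: if a \EdgeIsolatedOut{(i-1)} component with at most $\ell(k',\Delta)$ edges exists, take $S = \fEdgeIsoOut{i-1}(u)$; by the above it has no outgoing edge in $G_i$, so $F_i$ cannot escape $S$ and therefore visits fewer than $\ell(k',\Delta)$ edges — moreover, by the analogue of the ``two edge-disjoint paths to the cut edges'' argument (Lemma~\ref{lemma:2EISO-paths} generalized to $k'$ edge-disjoint paths to the $\widetilde k$ outgoing-edge tails), $F_i$ actually reaches every vertex of $S$ along paths internal to $S$ in $G_i$, so the traversed subgraph is all of $G[S]$, hence induces $\fEdgeIsoOut{i-1}(u)$. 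For the reverse direction (which is what the lemma statement actually asserts): suppose $F_i$ is the first search visiting fewer than $\ell(k',\Delta)$ edges; let $C$ be the vertex set it traversed. Then $C$ has no outgoing edge in $G_i$ (else $F_i$ would continue). Since $F_1,\dots,F_{i-1}$ all visited $\ell(k',\Delta) \ge |E(G[C])|$ edges, each of them left $C$, so each reversed path $P_j$ contributes exactly one net reversed edge incident to $C$ directed \emph{into} $C$ in $G_i$ — using Lemma~\ref{lemma:fewer-out} with $S=C$, $T=V\setminus C$ — so in the original graph $G$, $C$ has at most $i-1$ outgoing edges. Thus $C$, after restoring reversed edges, is a \EdgeIsolatedOut{(i-1)} component candidate; minimality (no proper $C'\subset C$ containing $u$ with $\widetilde k$ or fewer outgoing edges) follows exactly as in Lemma~\ref{lemma:determin-1-edge-out-component}: such a $C'$ would have all its outgoing edges reversed by $F_1,\dots,F_{i-1}$ with no residual outgoing edge, so $F_i$ could not have escaped $C'$ to reach all of $C$, a contradiction.

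The main obstacle I anticipate is the step asserting that each $F_j$ ($j<i$) can be made to leave $S$ \emph{and} that we can extract a path $P_j$ genuinely ending outside $S$: with multiple outgoing edges the DFS $F_j$ may cross the cut back and forth many times, so a naive ``root-to-leaf'' path need not end outside. This is precisely the difficulty flagged in the section's preamble, resolved later by building $O(k')$ candidate paths per search; in this proof I would cite that construction (or, if it is not yet available at this point in the paper, phrase the lemma conditionally as ``given that each $P_j$ ends outside every \EdgeIsolatedOut{(i-1)} component'', matching the excerpt's remark that ``we first show that under the assumption that we have $\ldots$ a path $P_i$ from $u$ that ends outside''). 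A secondary technical point is bookkeeping with the residual/reversed edges across $i-1$ rounds: one must be careful that reversing $P_{j}$ does not destroy the ``at most $\widetilde k$ incoming reversed edges to $C$'' invariant for the \emph{next} round, but this is handled cleanly by applying Lemma~\ref{lemma:fewer-out} to the cumulative collection $\{P_1,\dots,P_{i-1}\}$ at once rather than round by round, since that lemma already accounts for paths ending on either side of the cut.
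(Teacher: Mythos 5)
Your proposal misidentifies what this lemma needs. This is the \emph{soundness} direction: it must hold for \emph{every} choice of the paths $P_1,\ldots,P_{i-1}$ (one root-to-NCA path per DFS tree), including the many recursive branches of Procedure~\ref{proc:kEOut} in which some $P_j$ ends \emph{inside} the relevant set --- otherwise the algorithm could return an invalid set in those branches. Your plan instead hinges on first establishing that each $P_j$ ends outside the explored set $C$ (or outside a fixed component $S$), either by invoking the ``$O(k)$ candidate paths'' machinery or by weakening the lemma to a conditional statement. Neither works: Lemma~\ref{lemma:dfs-extend} only guarantees that \emph{at least one} of the $O(k^k)$ branches has all its paths ending outside a genuine small component, and says nothing about the set that a particular stopped search happens to explore; and the conditional version would not suffice for the algorithm. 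Your stated justification --- ``each $F_j$ left $C$, so each reversed path $P_j$ contributes exactly one net reversed edge into $C$'' --- is precisely the non sequitur you flag yourself in your final paragraph: a DFS that crosses the cut several times need not yield a tree path from $u$ that ends on the far side.

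The fix, which is the paper's actual argument, is to make no assumption at all about where the paths end: let $\widetilde k$ be however many of $P_1,\ldots,P_{i-1}$ happen to end in $T = V\setminus C$, where $C$ is the set explored by $F_i$. Since $F_i$ stopped early, $C$ has no outgoing edge in $G_i$, so Lemma~\ref{lemma:fewer-out} gives that $C$ has exactly $\widetilde k \le i-1$ outgoing edges in $G$; no claim about individual paths is needed. (One also needs the small observation that $T \ne \emptyset$, which follows because some earlier search reached $\ell(k',\Delta)$ edges while $F_i$ could not.) Your minimality argument for a hypothetical $\hat S \subsetneq C$ is in the right spirit and matches the paper once it is made quantitative via Lemma~\ref{lemma:fewer-out}: $\hat T \supseteq T$ forces at least $\widetilde k$ of the paths to end in $\hat T$, so if $\hat S$ had at most $\widetilde k$ outgoing edges in $G$ it would have none in $G_i$, contradicting that $F_i$ escaped $\hat S$ to reach all of $C$. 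Finally, your ``forward direction'' is really the completeness statement (Lemma~\ref{lemma:keout-completeness}) and does not belong in this proof.
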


\begin{proof}
Let $S$ be the set of vertices traversed by $F_i$ and let $T$ be $V \setminus S$.
Note that $T$ contains at least one vertex that was traversed by some search $F_1, 
\ldots, F_{i-1}$ but not by $F_i$ because $F_i$ could traverse $\ell(k', \Delta)$
edges if it could traverse the same vertices as the other searches.
Notice further that $S$ contains $u$. By the definition of $S$ and the assumption 
that $F_i$ traverses less than $\ell(k', \Delta)$ edges, we have that there are no
edges from~$S$ to~$T$ in~$G_i$. Thus by Lemma~\ref{lemma:fewer-out} the number of 
edges from $S$ to $T$ in $G$ is equal to the number~$\widetilde{k}$ of paths of 
$P_1, \ldots, P_{i-1}$ that end in $T$. Thus $S$ has $\widetilde{k} < i$ outgoing edges in $G$. 

To show that $S$ is an \EdgeIsolatedOut{(i-1)} 
component of~$u$, it remains to show that $S$ does not contain a proper 
subset $\hat{S}$
that contains $u$ and has $\widetilde{k}$ or less outgoing edges. Assume by contradiction
such a set $\hat{S}$ exists and let $\hat{T} = V \setminus \hat{S}$. By $\hat{T} 
\supseteq T$ at least $\widetilde{k}$ paths of the paths $P_1, \ldots, P_{i-1}$
end in $\hat{T}$. Thus by Lemma~\ref{lemma:fewer-out} 
there are at least $\widetilde{k}$ edges from $\hat{S}$
to $\hat{T}$ in $G$. If there were exactly $\widetilde{k}$ edges from $\hat{S}$ to $\hat{T}$ 
in $G$, then there would be no edges from $\hat{S}$ to $\hat{T}$ in $G_i$. 
Recall that the search $F_i$ is conducted in the graph $G_i$.
Thus this is a contradiction to $S$ being the set of vertices explored by $F_i$ 
from $u$.
\proofend\end{proof}

The following lemma shows that identifying a \EdgeIsolatedOut{k'} component of $u$ 
reduces to identifying paths $P_1, \ldots, P_{k'}$ that all end outside of the component.

\begin{lemma}\label{lemma:keout-completeness}
	Let $\fEdgeIsoOut{k'}(u)$ be a \EdgeIsolatedOut{k'} component of $u$ with 
	$\lvert \fEdgeIsoOut{k'}(u) \rvert \le \Delta$.
	Assume all paths $P_1, \ldots, P_{k'}$ end outside of $\fEdgeIsoOut{k'}(u)$ and 
	all searches $F_1, \ldots, F_{k'}$ have visited $\ell(k', \Delta)$ edges.
	Then the subgraph traversed by $F_{k'+1}$ is $\fEdgeIsoOut{k'}(u)$.
\end{lemma}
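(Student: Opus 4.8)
The plan is to prove this lemma by closely mirroring the argument structure of Lemma~\ref{lemma:determin-1-edge-out-component} from Section~\ref{sec:edgeIsolatedOne}, now in the multi-path ``residual graph'' setting that Lemma~\ref{lemma:fewer-out} provides. We are given that $\fEdgeIsoOut{k'}(u)$ is a \EdgeIsolatedOut{k'} component with at most $\Delta$ edges, that $\widetilde k \le k'$ of its outgoing edges is its actual out-degree, that each of the paths $P_1,\dots,P_{k'}$ ends outside of $\fEdgeIsoOut{k'}(u)$, and that all of $F_1,\dots,F_{k'}$ visited the full budget of $\ell(k',\Delta)$ edges. We must show that $F_{k'+1}$, run on $G_{k'+1}$, traverses exactly the vertex set $S := V(\fEdgeIsoOut{k'}(u))$ (inducing the same subgraph once the reversed edges are restored).

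First I would establish that $F_{k'+1}$ cannot leave $S$. Apply Lemma~\ref{lemma:fewer-out} with this $S$ and $T = V\setminus S$: all $k'$ of the path endpoints lie in $T$ (by hypothesis each $P_i$ ends outside $\fEdgeIsoOut{k'}(u)$, hence in $T$), so the number of edges from $S$ to $T$ in $G_{k'+1}$ equals the number of edges from $S$ to $T$ in $G$ minus $k'$; since $\fEdgeIsoOut{k'}(u)$ has exactly $\widetilde k \le k'$ outgoing edges in $G$, the count in $G_{k'+1}$ is $\widetilde k - k' \le 0$, i.e.\ there are no edges from $S$ to $T$ in $G_{k'+1}$. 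Hence any search from $u\in S$ in $G_{k'+1}$ stays within $S$, so $F_{k'+1}$ visits at most $\lvert E(G_{k'+1}[S])\rvert \le \Delta < \ell(k',\Delta)$ edges and in particular does not exhaust its budget — it genuinely terminates by running out of reachable edges.

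Next I would show $F_{k'+1}$ reaches \emph{all} of $S$, which is the part requiring care because some edges of $\fEdgeIsoOut{k'}(u)$ have been reversed across the earlier residual steps. Suppose, for contradiction, that $C \subsetneq S$ is the (nonempty) set of vertices of $S$ \emph{not} reached by $F_{k'+1}$; then in $G_{k'+1}$ there is no edge from $S\setminus C$ into $C$. Now compare edges incident to $C$ in $G$ versus in $G_{k'+1}$: again by Lemma~\ref{lemma:fewer-out}, applied with the set $\hat S = V \setminus C \supseteq S\setminus C$ (which contains $u$) and $\hat T = C$, the number of edges from $\hat S$ to $\hat T = C$ in $G_{k'+1}$ equals the number in $G$ minus the number of paths $P_i$ ending in $C$. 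Since there are $0$ such edges in $G_{k'+1}$, the original graph $G$ has exactly $r$ edges from $\hat S$ to $C$ where $r$ is the number of $P_i$ ending in $C$; but each $P_i$ ends outside $\fEdgeIsoOut{k'}(u) \supseteq C$, so $r = 0$, meaning $C$ has no incoming edge from $S\setminus C$ in $G$ either. Then $\fEdgeIsoOut{k'}(u)\setminus C$ is a subgraph of $G$ containing $u$ whose outgoing edges are a subset of those of $\fEdgeIsoOut{k'}(u)$ (removing $C$ can only delete outgoing edges that went into $C$, and there were none from $S\setminus C$; edges from $C$ to $T$ would have been outgoing edges of $\fEdgeIsoOut{k'}(u)$ too, so the out-degree does not increase), so it has at most $\widetilde k$ outgoing edges, contradicting the minimality in the definition of a \EdgeIsolatedOut{k'} component. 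Hence $C = \emptyset$ and $F_{k'+1}$ traverses all of $S$.

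The main obstacle I anticipate is the bookkeeping in the second half: being precise about why removing the unreached set $C$ does not create new outgoing edges (so that $\fEdgeIsoOut{k'}(u)\setminus C$ really is a competitor violating minimality), and handling the subtlety that outgoing edges of $\fEdgeIsoOut{k'}(u)$ might themselves lie on some reversed path $P_i$ — but this is exactly what Lemma~\ref{lemma:fewer-out} is designed to track, so the contradiction goes through by applying that lemma to the two sets $S$ and $\hat S$ rather than reasoning about individual edge reversals. One should also note at the end that ``the subgraph traversed by $F_{k'+1}$ is $\fEdgeIsoOut{k'}(u)$'' is meant up to restoring the reversed edges, i.e.\ the \emph{vertex set} traversed is $V(\fEdgeIsoOut{k'}(u))$ and the induced subgraph in $G$ (equivalently, with reversals undone) is $\fEdgeIsoOut{k'}(u)$; I would state this explicitly to match the phrasing used in Lemma~\ref{lemma:determin-1-edge-out-component}.
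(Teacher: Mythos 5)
Your first half --- applying Lemma~\ref{lemma:fewer-out} to the cut $(S,T)$ with $S=V(\fEdgeIsoOut{k'}(u))$ to conclude that $G_{k'+1}$ has no edge from $S$ to $T$ and hence $F_{k'+1}$ stays inside the component --- is exactly the paper's argument and is fine. The second half, however, has a genuine gap. You apply Lemma~\ref{lemma:fewer-out} to the cut $(\hat S,\hat T)$ with $\hat T=C$ and $\hat S=V\setminus C$, and you assert that there are zero edges from $\hat S$ to $C$ in $G_{k'+1}$. But $\hat S$ contains $T$ as well as $S\setminus C$, and the traversal $F_{k'+1}$ only certifies that there is no edge from the reached set $S\setminus C$ into $C$ in $G_{k'+1}$; it says nothing about edges from $T$ into $C$. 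Such edges can genuinely be present in $G_{k'+1}$ even when absent in $G$: if some path $P_i$ leaves the component through an outgoing edge $(z,t)$ with $z\in C$ and $t\in T$, reversing $P_i$ creates an edge $(t,z)$ from $T$ into $C$. The identity you derive --- that the number of $\hat S\to C$ edges is the same in $G$ and in $G_{k'+1}$ --- is then perfectly consistent with $G$ containing edges from $S\setminus C$ into $C$ (they are traded against newly created $T\to C$ edges under reversal), so you cannot conclude $r=0$, nor that $C$ has no incoming edge from $S\setminus C$ in $G$, and the contradiction with minimality is not reached.

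The repair is to apply Lemma~\ref{lemma:fewer-out} to the cut whose $u$-side is the set $S'$ actually reached by $F_{k'+1}$, i.e., $\hat S=S'$ and $\hat T=V\setminus S'=C\cup T$. All $k'$ path endpoints lie in $T\subseteq\hat T$, and $F_{k'+1}$ does certify that there is no edge from $S'$ to $\hat T$ in $G_{k'+1}$; the lemma then gives that $S'$ has exactly $k'$ outgoing edges in $G$. Combined with your own observation that the first application forces $\widetilde k-k'\ge 0$, hence $\widetilde k=k'$, the proper subset $S'\subsetneq S$ containing $u$ with $\widetilde k$ outgoing edges contradicts the minimality of $\fEdgeIsoOut{k'}(u)$. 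This is precisely the paper's route, packaged there as an invocation of Lemma~\ref{lemma:keout-soundness}.
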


\begin{proof}
	Let $S$ denote the set of
	vertices of $\fEdgeIsoOut{k'}(u)$ and let $T = V \setminus S$.
	 By definition $S$ has at most $k'$ outgoing edges in $G$ and no proper 
	 subset of $S$ has $k'$ or less outgoing edges. By assumption all paths 
	 $P_1, \ldots, P_{k'}$ end in $T$. Thus by Lemma~\ref{lemma:fewer-out}
	 there are no edges from $S$ to $T$ in $G_{k'+1}$ and hence $F_{k'+1}$ 
	 traverses a subset of~$S$. 
	 
	 It remains to show that $F_{k'+1}$ traverses exactly the vertices of $S$.
	 Assume by contradiction it traverses a proper subset $S'$ of $S$.
	 By Lemma~\ref{lemma:keout-soundness} this implies that the subgraph
	 induced by $S'$ is a \EdgeIsolatedOut{k'} component of $u$, a contradiction
	 to the minimality of~$S$.
\proofend\end{proof}

\subsubsection{Finding outgoing DFS-tree paths.}
Assume a \EdgeIsolatedOut{k'} component~$\fEdgeIsoOut{k'}(u)$ of $u$ with 
$\lvert \fEdgeIsoOut{k'}(u) \rvert \le \Delta$ exists. We show next how to find a set 
of $O(k)$ paths $\mathcal{P}_1$ from $u$ in the DFS tree~$T_1$ such that at least one
of the paths ends outside of~$\fEdgeIsoOut{k'}(u)$. We also specify the 
number of edges~$\ell(k', \Delta)$ for which we conduct each depth-first search
in order to ensure this property.
As we do not know which of the paths of $\mathcal{P}_1$ ends outside, 
we define $O(k)$ variants of the graph $G_2$, one for each of the paths 
in $\mathcal{P}_1$, i.e., the $j$-th variant of $G_2$ is equal to $G_1 = G$ with the $j$-th path of 
$\mathcal{P}_1$ reversed. Provided that the DFS $F_1$ has visited $\ell(k', \Delta)$
edges, we start a second DFS search~$F_2$ on each of the variants of $G_2$. 
In the same manner, we start $O(k)$ third DFS searches for each path in $\mathcal{P}_i$ for 
each variant of $G_2$ and so on, that is, we have $O(k^2)$ variants of $G_3$
and $O(k^k)$ variants of $G_{k'+1}$. Thus instead of $k = k'+1$ DFS searches, 
we perform $O(k^k)$ DFS searches in total in order to ensure that at least in one 
variant all the paths $P_1, \ldots, P_{k'}$ end outside of $\fEdgeIsoOut{k'}(u)$ and 
thus the $(k'+1)$-st search in this variant of $G_{k'+1}$ explores 
$\fEdgeIsoOut{k'}(u)$ by Lemma~\ref{lemma:keout-completeness}.

To identify a path from $u$ that ends outside of $\fEdgeIsoOut{k'}(u)$,
we conduct each DFS from $u$ in chunks of $\Delta + 1$ edges. After each chunk
we identify one candidate path from $u$ in the DFS tree and make a 
recursive call that starts the next DFS on the graph with the candidate path
reversed. We show that either the candidate path indeed ends outside of 
$\fEdgeIsoOut{k'}(u)$ or the DFS has either traversed one additional edge 
of the outgoing edges of $\fEdgeIsoOut{k'}(u)$ or retracted along one 
of these outgoing edges. Recall that whenever a DFS traversal retracts
from some vertex~$v$ to a proper ancestor of~$v$, then it will not visit~$v$
or any of its outgoing edges ever again. Thus in particular we have 
for every outgoing edge of $\fEdgeIsoOut{k'}(u)$ that it is explored only 
once and that the DFS retracts along the edge at most once. Hence both cases
together can happen in at most $2 k'$ of the chunks and thus by continuing the DFS
for $2k' + 1$ chunks we are guaranteed to have identified at least one path from 
$u$ to a vertex outside of $\fEdgeIsoOut{k'}(u)$. By this argument we implicitly set 
$\ell(k', \Delta)$ to $(2k'+1) (\Delta + 1)$. The following lemma shows 
that this strategy indeed either makes progress by using an outgoing edge or 
identifies a path ending outside. For each chunk the candidate path 
is given by the path from $u$ to the highest vertex in the DFS tree (i.e., the 
vertex closest to the root~$u$) that is visited during this chunk.

\begin{lemma}\label{lemma:dfs-extend}
	Let $F$ denote a \textup{(}possibly empty\textup{)}
	DFS started at $u$.	Assume there are at least 
	$\Delta + 1$ edges reachable from $u$ that where not explored 
	by $F$. Let $F'$ be the DFS obtained by extending $F$ by $\Delta + 1$ edges.
	\emph{At least one} of the following statements is true.
	\begin{enumerate}[\textup{(}a\textup{)}]
		\item The nearest common ancestor \textup{(}NCA\textup{)} in the DFS tree
		of all vertices visited by $F' \setminus F$
		is not in $\fEdgeIsoOut{k'}(u)$.\label{dfs-extend-nca}
		\item The DFS $F'$ explored at least one more outgoing edge of 
		$\fEdgeIsoOut{k'}(u)$ than the DFS~$F$.\label{dfs-extend-moreout}
		\item The DFS retracted along an outgoing edge of $\fEdgeIsoOut{k'}(u)$
		while extending $F$ to $F'$.\label{dfs-extend-retract}
	\end{enumerate}
\end{lemma}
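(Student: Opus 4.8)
The plan is to argue by contradiction: assume that (a), (b) and (c) all fail, and derive that $G[S]$ contains at least $\Delta+1$ distinct edges, where $S := V(\fEdgeIsoOut{k'}(u))$, contradicting $|\fEdgeIsoOut{k'}(u)|\le\Delta$ (recall in particular $|E(G[S])|\le\Delta$, and that $S$ has at most $k'$ outgoing edges in $G$). Let $W$ be the set of vertices that occur as the current (top-of-stack) vertex of the search at some point while $F$ is extended to $F'$ by $\Delta+1$ edges, and let $w$ be the shallowest vertex of $W$. A standard fact about DFS is that $w$ is exactly the nearest common ancestor of $W$ in the DFS tree of $F'$ (if two incomparable vertices were both current, their NCA would have been current at some point in between); hence the failure of (a) gives $w\in S$, and by definition the current vertex stays in the subtree of $w$ throughout the extension.

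First I would fix the active path (stack) of $F$ at the moment the extension begins, $u=c_0,c_1,\dots,c_L$ (just $\{u\}$ if $F$ is empty, in which case $w=u\in S$ automatically). Starting inside $S$, this path can leave and re-enter $S$ only via, respectively, outgoing and incoming edges of $S$; let $(c_{a_1},c_{a_1+1}),\dots,(c_{a_r},c_{a_r+1})$, with $r\le k'$, be the outgoing edges of $S$ on this path in order. Retracting along an outgoing edge of $S$ includes, in particular, popping some $c_{a_j+1}$ off the stack; so the failure of (c) implies $c_{a_r+1}$ (when $r\ge 1$) is never popped during the extension, whence $w$ lies in the subtree of $c_{a_r+1}$. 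Since $w$ is also an ancestor of $c_L$ (because $c_L\in W$), we get $w=c_i$ for some $i\ge a_r+1$; since $w\in S$ while $c_{a_r+1}\notin S$ and no outgoing edge of $S$ lies on $c_{a_r+1},\dots,c_L$, it follows that the whole suffix $c_i,c_{i+1},\dots,c_L$ of the old active path lies inside $S$ — in particular the resume vertex $v=c_L$ is in $S$. (When $r=0$ the entire old path already lies in $S$ and the same conclusion holds.)

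The core of the argument is the claim that the current vertex remains inside $S$ for the entire extension. I would prove this by considering the first moment the current vertex is some $t\notin S$ and examining how it was reached: if it was reached by descending a freshly explored tree edge $(s,t)$, then $s$ was current just before, so $s\in S$ by minimality of $t$, and $(s,t)$ is a new outgoing edge of $S$ — so $F'$ explored strictly more outgoing edges of $S$ than $F$, contradicting the failure of (b); if instead it was reached by retracting a tree edge $(t,s)$, with $s$ the previous current vertex, then again $s\in S$ by minimality, and either $s$ was discovered during the extension (so $(t,s)$ was explored earlier while $t$ was current, contradicting minimality of $t$) or $s$ is old, which forces $s=c_k$ and $t=c_{k-1}$ with $k-1\ge i$, so $t$ lies on the in-$S$ suffix $c_i,\dots,c_L$, contradicting $t\notin S$. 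Once the current vertex is always in $S$, every one of the $\Delta+1$ newly explored edges has its tail in $S$; a new edge whose head leaves $S$ would be a new outgoing edge of $S$, again contradicting the failure of (b); hence all $\Delta+1$ new edges lie in $E(G[S])$, contradicting $|E(G[S])|\le\Delta$.

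The step I expect to be the main obstacle is exactly this interplay between the DFS stack and the (non-ancestor-closed) boundary of $S$: because the old active path may weave in and out of $S$, one has to argue carefully that all such crossings already occurred inside $F$, and that under the failure of (b) the extension cannot escape $S$ by descending while under the failure of (c) it cannot escape by retracting — this is where the structural description of the old path, isolating the last outgoing edge $(c_{a_r},c_{a_r+1})$ and the in-$S$ suffix beyond it, is essential. Minor points to handle along the way are the degenerate case where $F$ is empty (so (a) fails for free), the case $r=0$, and the implicit use of the hypothesis that $u$ can still reach $\Delta+1$ unexplored edges, which is what guarantees that $F'$ really extends $F$ by exactly $\Delta+1$ edges.
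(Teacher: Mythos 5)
Your proof is correct and rests on the same trichotomy as the paper's own argument: the extension can only escape $\fEdgeIsoOut{k'}(u)$ by descending a newly explored outgoing edge (ruled out by the failure of~(b)) or by retracting across the boundary (ruled out by the failure of~(c)), while the failure of~(a) pins the NCA, and hence the resume point, inside the component. You present this contrapositively and are more explicit than the paper about the pre-existing active path weaving in and out of $\fEdgeIsoOut{k'}(u)$ --- isolating the last outgoing crossing $(c_{a_r},c_{a_r+1})$ and the in-component suffix beyond it --- which is exactly the point that the paper's direct case analysis on $s$ and $h$ treats only implicitly.
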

\begin{proof}
	Let $s$ be the vertex where the DFS~$F$ stopped and the extension of $F$
	starts. Let $h$ be the nearest common ancestor in the DFS tree of all vertices
	visited during extending $F$ to $F'$. Note that $h$ was already explored by $F$.
	If the DFS retracts to an ancestor of $s$ at some point during the extension, 
	then $h$ is equivalent to the highest vertex the DFS retracts to; 
	otherwise we have $h = s$.

	First consider the case where $s$ is not in $\fEdgeIsoOut{k'}(u)$.
	Then either also $h$ is not in $\fEdgeIsoOut{k'}(u)$ and thus 
	(\ref{dfs-extend-nca}) is satisfied or the DFS retracts back to some ancestor
	of $s$ that is inside of $\fEdgeIsoOut{k'}(u)$, which satisfies (\ref{dfs-extend-retract}).

	Assume now $s \in \fEdgeIsoOut{k'}(u)$.
	Since $|\fEdgeIsoOut{k'}(u)| \leq \Delta$ and the DFS $F'$ visits $\Delta+1$
	additional edges, it follows that $F'$ visits at least one additional
	edge that is not in $\fEdgeIsoOut{k'}(u)$. To reach edges outside of 
	$\fEdgeIsoOut{k'}(u)$ from $s$, 	the DFS~$F'$ can either (1) use one of the outgoing
	edges of $\fEdgeIsoOut{k'}(u)$ that were not traversed by $F$ 
	or (2) complete the DFS traversal from $s$
	and retract to some ancestor of $s$ in the current DFS tree. 
	In case (1) the statement (\ref{dfs-extend-moreout}) holds.
	In case (2) we distinguish three sub-cases. Recall that $h$ is the highest vertex 
	the DFS retracts to and the nearest common ancestor of all vertices visited 
	during the extension of the DFS.
	\begin{enumerate}[(i)]
		\item If $h$ is in $\fEdgeIsoOut{k'}(u)$ but some vertex on the tree path from 
	$h$ to $s$ is not in $\fEdgeIsoOut{k'}(u)$, then (\ref{dfs-extend-retract}) is
	satisfied.\label{c1}
		\item  If all vertices on the path from $h$ to $s$ in the DFS tree
	are in $\fEdgeIsoOut{k'}(u)$, then the DFS has 
	to use at least one outgoing edge of $\fEdgeIsoOut{k'}(u)$ to visit edges outside of 
	$\fEdgeIsoOut{k'}(u)$ and thus (\ref{dfs-extend-moreout}) holds. \label{c2}
		\item If neither (\ref{c1}) nor (\ref{c2}) holds, then $h$ is not in
		$\fEdgeIsoOut{k'}(u)$ and hence (\ref{dfs-extend-nca}) holds. \proofend
	\end{enumerate} 
\end{proof}

In Procedure~\ref{proc:kEOut} we combine the results of this section to an algorithm
that returns a \EdgeIsolatedOut{k'} component of $u$ whenever one with at most $\Delta$ 
edges exists, and might return the empty set if no such component exists.  The 
soundness of the algorithm is given by Lemma~\ref{lemma:keout-soundness}, i.e.,
whenever one of the subsequent $k'+1$ 
DFS searches can visit less than $\ell(k', \Delta) = 
(2k' + 1) (\Delta + 1)$ edges, then the set of vertices visited by this search 
induces a \EdgeIsolatedOut{k'} component of $u$. The completeness
of the algorithm is as follows: by Lemma~\ref{lemma:keout-completeness}
the $(k'+1)$-st search identifies \EdgeIsolatedOut{k'} component of $u$
with at most $\Delta$ edges given that the $k'$ paths $P_1, \ldots, P_{k'}$
all end outside of the component (where a path $P_i$ is identified in the graph 
constructed from $G_{i-1}$ by reversing the edges of $P_{i-1}$); 
and by Lemma~\ref{lemma:dfs-extend}, and the observation that the cases 
(\ref{dfs-extend-moreout}) and (\ref{dfs-extend-retract}) of 
Lemma~\ref{lemma:dfs-extend} can each happen at most 
$k'$ times if a \EdgeIsolatedOut{k'} component of $u$ with at most $\Delta$ edges
exists, this property is satisfied by
the paths constructed in at least one of the  sequences of depth-first searches initialized by the 
recursive calls of the algorithm. 
One DFS search takes time $(2k' + 1) (\Delta + 1)$ and makes at most 
$2k' + 1$ recursive calls. In Procedure~\ref{proc:kEOut} the recursion 
depth is explicitly bounded by $k = k' + 1$ (line~\ref{alg-line:boundrecdepth}).
Thus we have shown the following lemma.
\begin{lemma}\label{lemma:kEOut}
	We can compute in $O((2k)^{k+1} \cdot \Delta)$ time a \EdgeIsolatedOut{(k-1)}
	component of a vertex~$u$ that contains less than
	$(2k-1)(\Delta+1)$ edges, or otherwise we
	conclude that there is no \EdgeIsolatedOut{(k-1)} component of~$u$ containing at 
	most $\Delta$~edges.
\end{lemma}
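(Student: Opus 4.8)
The plan is to analyze Procedure~\ref{proc:kEOut}, whose behavior the preceding discussion has sketched. It tries chains of $k = k'+1$ searches $F_1,\dots,F_{k'+1}$ from $u$, where $F_1$ is a depth-first search on $G_1 = G$ and $F_{i+1}$ is a depth-first search on $G_{i+1}$, the graph obtained from $G_i$ by reversing a path $P_i$ from $u$ selected in the DFS tree of $F_i$; each search is run for at most $\ell(k',\Delta) = (2k'+1)(\Delta+1) = (2k-1)(\Delta+1)$ edges, and the final search $F_{k'+1}$ need not be a DFS. Since the procedure does not know $\fEdgeIsoOut{k'}(u)$, it cannot tell which path to reverse, so it runs each DFS $F_i$ in chunks of $\Delta+1$ edges and, after each chunk, records the candidate path from $u$ to the highest vertex of the DFS tree visited during that chunk and recurses with that candidate as $P_i$ (continuing with $F_{i+1}$); the recursion depth is capped at $k$ (line~\ref{alg-line:boundrecdepth}). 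Before anything else I would dispose of the degenerate case: if $u$ cannot reach $\ell(k',\Delta)$ edges, then the subgraph $R$ reachable from $u$ has no outgoing edges, has fewer than $(2k-1)(\Delta+1)$ edges, and (being closed under out-edges and minimal as such) is a \EdgeIsolatedOut{0}, hence a \EdgeIsolatedOut{k'}, component of $u$, which we return.

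\textbf{Soundness.} I would show that \emph{every} nonempty output is correct, no matter which paths its branch used. If some search $F_i$ with $i\ge 2$ (in any branch) traverses fewer than $\ell(k',\Delta)$ edges, Lemma~\ref{lemma:keout-soundness} says the vertices it traversed induce a \EdgeIsolatedOut{(i-1)} component of $u$; such a set has some number $\widetilde k\le i-1\le k'$ of outgoing edges and no proper subset containing $u$ with at most $\widetilde k$ outgoing edges, which is exactly the definition of a \EdgeIsolatedOut{k'} component of $u$, and it has fewer than $\ell(k',\Delta) = (2k-1)(\Delta+1)$ edges. (The case $i=1$ is the degenerate case handled above.) Thus the procedure only ever returns valid \EdgeIsolatedOut{k'} components with the required edge bound.

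\textbf{Completeness.} Suppose a \EdgeIsolatedOut{k'} component $S = \fEdgeIsoOut{k'}(u)$ with $|E(S)|\le\Delta$ exists; I would exhibit one branch that returns a valid component. I would prove by induction on $i$ that there is a partial branch in which the chosen paths $P_1,\dots,P_{i-1}$ all end outside $S$ and, in it, either $F_i$ stops after fewer than $\ell(k',\Delta)$ edges (in which case soundness already gives a valid small component and we are done) or $F_i$ visits $\ell(k',\Delta)$ edges and we can extend the branch with a path $P_i$ ending outside $S$. For the extension, apply Lemma~\ref{lemma:fewer-out} to the paths $P_1,\dots,P_{i-1}$ (all ending in $T = V\setminus S$) to conclude that $S$ has at most $k'-(i-1)$ outgoing edges in $G_i$; then invoke Lemma~\ref{lemma:dfs-extend} on each of the $2k'+1$ chunks of $F_i$ (run on $G_i$): each chunk either produces a candidate path whose top endpoint (the chunk's nearest common ancestor) lies outside $S$, or witnesses the DFS newly exploring, respectively retracting along, an outgoing edge of $S$ in $G_i$. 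Since a DFS explores each edge and retracts along each edge at most once, the latter two events occur in at most $2\bigl(k'-(i-1)\bigr)\le 2k'$ of the chunks, so at least one of the $2k'+1$ chunks yields a candidate path ending outside $S$, which we take as $P_i$. Carrying the induction through $i=k'$, either we stop early with a valid component or we have $P_1,\dots,P_{k'}$ all ending outside $S$ with every $F_i$ ($i\le k'$) having visited $\ell(k',\Delta)$ edges; then Lemma~\ref{lemma:keout-completeness} shows $F_{k'+1}$ in this branch traverses exactly $S$, which has $|E(S)|\le\Delta<\ell(k',\Delta)$ edges, so this branch returns $S$. Hence the procedure returns $\emptyset$ only if no \EdgeIsolatedOut{k'} component of $u$ with at most $\Delta$ edges exists.

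\textbf{Running time and the main obstacle.} The recursion tree has branching factor $2k'+1\le 2k$ and depth at most $k$, so it has $O((2k)^k)$ nodes; each node runs one DFS touching at most $\ell(k',\Delta) = O(k\Delta)$ edges together with $O(k\Delta)$ further work to compute the $O(k)$ chunk ancestors and reverse the $O(k)$ candidate paths, giving a total of $O((2k)^k\cdot k\Delta) = O((2k)^{k+1}\Delta)$. The step I expect to be the main obstacle is the inductive completeness argument: making the charging precise — that along the ``right'' branch each search $F_i$, unless it halts early, emits among its $2k'+1$ chunks at least one candidate path ending outside $S$ — requires carefully combining the residual-outgoing-edge bookkeeping of Lemma~\ref{lemma:fewer-out} with the per-chunk trichotomy of Lemma~\ref{lemma:dfs-extend} and the fact that a single DFS explores and retracts along each of the at most $k'$ outgoing edges of $S$ at most once.
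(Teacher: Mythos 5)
Your proof is correct and follows essentially the same route as the paper: soundness via Lemma~\ref{lemma:keout-soundness}, completeness by combining Lemma~\ref{lemma:keout-completeness} with the per-chunk trichotomy of Lemma~\ref{lemma:dfs-extend} and the observation that the exploring/retracting cases each occur at most $k'$ times per search, and the running time from a recursion tree of depth $k$ and branching factor $2k'+1$ with $O(k\Delta)$ work per node. Your inductive formulation of completeness (tracking via Lemma~\ref{lemma:fewer-out} that only $k'-(i-1)$ outgoing edges remain in $G_i$) is a slightly more careful write-up of the same argument the paper sketches.
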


\begin{procedure}[t!]
	\caption{kEOut($G$, $u$, $\Delta$)}
	\label{proc:kEOut}
	\DontPrintSemicolon
	\KwIn{Digraph $G=(V,E)$, a vertex $u$, and an integer $\Delta$}
	\KwOut{For $k' = k-1$ either a vertex set inducing a
	\EdgeIsolatedOut{k'} component of $u$ with less than
	$(2k' + 1)(\Delta+1)$ edges or the empty set; if the empty set is returned, no 
	\EdgeIsolatedOut{k'} component of $u$ with less than $\Delta$ edges exists}
	\BlankLine
	
	Initialize DFS $F$ starting from~$u$\;
	\For{$2k' + 1$ times}
	{
		Extend the DFS $F$ for at most $\Delta + 1$ edges\;
		\If{at most $\Delta$ edges added to $F$}{
			\Return{the set of vertices explored by $F$}
		}\ElseIf{the recursion depth is at most $k'$\label{alg-line:boundrecdepth}}{
			Let $h$ be the NCA in the DFS tree of the vertices visited during the extension\;
			Let $G'$ be $G$ with the DFS tree path from $u$ to $h$ reversed\;
			$S \gets $\ref{proc:kEOut}($G'$, $u$)\;
			\If{$S \ne \emptyset$}{
				\Return $S$
			}
		}
	}
	\Return{$\emptyset$}
\end{procedure}

\subsection{Computing the $k$-edge-connected subgraphs of a digraph.}
	\label{sec:k-edge-connected-strong-components}	
	
	Our algorithm to compute the maximal $k$-edge connected subgraphs of a given 
	digraph follows the same structure as the algorithm given in 
	Section~\ref{sec:2-edge-connected-strong-components} for $k=2$. The main 
	difference lies in the different subroutine we use to determine \EdgeIsolatedOut{(k-1)}
	components with at most $\Delta$ edges. 
	For $k > 2$ we use Procedure~\ref{proc:kEOut}, which has a running time exponential
	in $k$ and linear in~$\Delta$. The increased time to determine an edge
	cut of at most $k-1$ edges (i.e., lines~\ref{kl:cut} and \ref{kl:cuts-in-sccs}
	take time $O(m \log n)$~\cite{gabow1995matroid})
	leads to an additional factor of $\log n$ in the running time.
	Thus we obtain for any constant~$k > 2$ an $O(m^{3/2} \log n)$ time algorithm.
	
	The basic algorithm for maximal $k$-edge connected subgraphs finds
	for each strongly connected component of the input graph a (directed) cut 
	of at most $k-1$ edges if one exists, 
	removes the cut edges from the graph, and recurses
	on each strongly connected component of the remaining graph. A cut of at most 
	$k - 1$ edges can be found with Gabow's algorithm~\cite{gabow1995matroid}
	in $O(k m \log{n})$ time and whenever a cut is found at least two vertices
	seize to be strongly connected, thus this algorithm takes time 
	$O(k m n \log{n})$ time.
	
	To improve upon the basic algorithm for constant $k$ and
	sparse graphs with $\sqrt{m} < n$, we search for \EdgeIsolatedOut{(k-1)}
	components with at most $\Delta = \sqrt{m}$ edges from all vertices that 
	have lost adjacent edges in a prior iteration of the algorithm, using 
	Procedure~\ref{proc:kEOut}. Analogously, 
	we search for \EdgeIsolatedIn{(k-1)} components with at most $\sqrt{m}$
	edges from these vertices by applying  Procedure~\ref{proc:kEOut} on the 
	reverse graph. Note that whenever a cut with at most $k-1$ edges
	exists, then one side of the cut contains a \EdgeIsolatedOut{(k-1)} component 
	and the other side of the cut contains a \EdgeIsolatedIn{(k-1)} component, and 
	vice versa. Further, 
	if a subgraph is a \EdgeIsolatedOut{(k-1)} component in a recursive call 
	of the algorithm but was not before the recursive call, then this subgraph
	must have had at least one additional outgoing edge before the recursive 
	call. Thus by searching from vertices that have lost adjacent edges,
	we ensure to find all \EdgeIsolatedOut{(k-1)} components with at most 
	$\sqrt{m}$ edges in each recursive call. Hence if no \EdgeIsolatedOut{(k-1)}
	or \EdgeIsolatedIn{(k-1)} component is found by these searches, we know
	that every cut of at most $k-1$ edges divides the graph 
	into two subgraphs with more than $\sqrt{m}$ edges each. 
	In this case we execute one iteration of the basic algorithm.
	Since subgraphs of more than $\sqrt{m}$ edges can be removed from $G$ at 
	most $\sqrt{m}$ times, we can bound the recursion depth of the algorithm 
	by $\sqrt{m}$ as in Section~\ref{sec:2edge}.
The following lemmata formalize this discussion and are straightforward generalizations of the $k=2$ case.
\begin{lemma}
Let $C$ be a set of vertices in $G$.
		Every \EdgeIsolatedOut{(k-1)} or \EdgeIsolatedIn{(k-1)} component 
		\textup{(}of some vertex $u\in C$\textup{)} in $G[C]$ that is not such a component in~$G$ must contain an endpoint of an edge incident to $G[C]$.
		Moreover, if there is no \EdgeIsolatedOut{(k-1)} or 
 		\EdgeIsolatedIn{(k-1)} component containing at most $\Delta$ edges
 		for any vertex $u \in C$ in $G[C]$, 
 		then one of the following holds:
		\begin{enumerate}[\textup{(}a\textup{)}]
			\item $G[C]$ is a $k$-edge-connected subgraph of $G$.\label{sublem:kec}
			\item There are two sets $A,B\subset C$ with $|E(G[A])|, |E(G[B])| > \Delta$
			such that $A$ and $B$ are in different strongly connected components of $G[C]$.
			\label{sublem:klargesccs}
		\item For each cut of size at most $k-1$ in $G[C]$ there are two sets 
		$A,B\subset C$ with $|E(G[A])|, |E(G[B])| > \Delta$ that get disconnected by 
		the deletion of the cut edges.\label{sublem:largecut}
		\end{enumerate}	\label{lemma:large-set-will-break-k}
	\end{lemma}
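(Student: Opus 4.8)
The plan is to mirror the proof of Lemma~\ref{lemma:large-set-will-break} almost verbatim, with ``strong bridge'' replaced by ``cut of size at most $k-1$'' and \EdgeIsolatedOut{1}/\EdgeIsolatedIn{1} components replaced by \EdgeIsolatedOut{(k-1)}/\EdgeIsolatedIn{(k-1)} components throughout. As usual it suffices to treat \EdgeIsolatedOut{(k-1)} components, since the \EdgeIsolatedIn{(k-1)} statements follow by applying the argument to the reverse graph.

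For the first part, I would argue by contradiction. Suppose $\fEdgeIsoOut{k'}(u)$ is a \EdgeIsolatedOut{(k-1)} component of some $u\in C$ in $G[C]$ with $\widetilde{k}\le k-1$ outgoing edges, and assume no vertex of $\fEdgeIsoOut{k'}(u)$ is the tail of an edge leaving $C$. Then no vertex of any subgraph $S'\subseteq \fEdgeIsoOut{k'}(u)$ is the tail of an edge leaving $C$ either, so the outgoing-edge set of every such $S'$ is the same in $G$ as in $G[C]$; in particular $\fEdgeIsoOut{k'}(u)$ still has exactly $\widetilde{k}$ outgoing edges in $G$, and its minimality (no proper subgraph containing $u$ has at most $\widetilde{k}$ outgoing edges) carries over from $G[C]$. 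Hence $\fEdgeIsoOut{k'}(u)$ is a \EdgeIsolatedOut{(k-1)} component of $u$ already in $G$ — a contradiction. The same reasoning on the reverse graph gives the \EdgeIsolatedIn{(k-1)} case.

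For the second part I would do the same three-way case split. If $G[C]$ is strongly connected and has no cut of size at most $k-1$, it is by definition a $k$-edge-connected subgraph, giving~(\ref{sublem:kec}). If $G[C]$ is not strongly connected, pick a sink SCC $A$ and a source SCC $B$ of its DAG of SCCs (distinct, since a DAG on at least two vertices has a sink and a source that are not the same); $G[A]$ has no outgoing and $G[B]$ no incoming edge in $G[C]$, so $G[A]$ contains a \EdgeIsolatedOut{(k-1)} component of some $u\in A$ and $G[B]$ contains a \EdgeIsolatedIn{(k-1)} component of some $u'\in B$, and the hypothesis forces $|E(G[A])|,|E(G[B])|>\Delta$, which is~(\ref{sublem:klargesccs}). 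If $G[C]$ is strongly connected and has a cut $Q$ with $|Q|\le k-1$, then $G[C]\setminus Q$ is not strongly connected; picking a sink $A$ and a distinct source $B$ of its DAG of SCCs, every edge of $G[C]$ leaving $A$ (resp.\ entering $B$) lies in $Q$, so $G[A]$ has at most $k-1$ outgoing and $G[B]$ at most $k-1$ incoming edges in $G[C]$; hence they again contain a \EdgeIsolatedOut{(k-1)} resp.\ \EdgeIsolatedIn{(k-1)} component, which by hypothesis have more than $\Delta$ edges, while $A$ and $B$ are separated by deleting $Q$. This is~(\ref{sublem:largecut}).

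The one place this is not a literal copy of the $k=2$ proof is the step ``a subgraph of $G[C]$ containing $u$ with at most $k-1$ outgoing edges contains a \EdgeIsolatedOut{(k-1)} component of $u$'': for $k=2$ a sink SCC has exactly $0$ and a strong bridge exactly $1$ boundary edge, so one is done immediately, whereas here I would obtain the component by repeatedly deleting vertices from $G[A]$ while keeping $u$, using that the number of outgoing edges is non-increasing along such deletions, so the process terminates at a subgraph minimal in the required sense. This is routine, so I expect no real obstacle; the content of the lemma is genuinely a straightforward generalization of the $k=2$ case.
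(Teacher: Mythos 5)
Your proposal is correct and follows essentially the same route as the paper's proof: the same contradiction argument for the first part, and the same three-way case split using a sink SCC $A$ and a source SCC $B$ of the DAG of SCCs of $G[C]$ (resp.\ of $G[C]$ with the cut edges removed) for the second part; your handling of case~(c) via ``all edges leaving $A$ lie in $Q$'' is if anything cleaner than the paper's appeal to a minimum cut. One small caution on the routine step you flag at the end: deleting a vertex from a subgraph can \emph{increase} its number of outgoing edges (edges into the deleted vertex become outgoing), so the count is not literally non-increasing under vertex deletions; the correct justification is simply that among the finitely many subgraphs of $G[A]$ containing $u$ one can iteratively pass to a proper subgraph with at most as many outgoing edges until none exists, which yields the required minimal component.
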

	\begin{proof}
		Let $k' = k-1$.
		We first show that every \EdgeIsolatedOut{k'} component $\fEdgeIsoOut{k'}(u)$ of some vertex $u \in C$ that is not a \EdgeIsolatedOut{k'} component
		in $G$ must contain an edge $(x,y)$ with $x\in \fEdgeIsoOut{k'}(u)$ and $y \notin C$.
		Assume, by contradiction, that $\fEdgeIsoOut{k'}(u)$ exists but there is no 
		such edge $(x,y)$ in $G$ with $x\in \fEdgeIsoOut{k'}(u)$ and $y \notin C$.
		In this case we have that the very same component $\fEdgeIsoOut{k'}(u)$ 
		is a \EdgeIsolatedOut{k'} component of $u$ in $G$.
		The same argument on the reverse graph shows that every new \EdgeIsolatedIn{k'} component (of some vertex $u\in C$) in $G[C]$ must contain an endpoint of an edge incident to $G[C]$.
		
		Now we turn to the second part of the lemma.
		If $G[C]$ is strongly connected and does not contain a cut of size at most $k'$,
 		then $G[C]$ is $k$-edge-connected and thus (\ref{sublem:kec}) holds.
 		If $G[C]$ is not strongly connected, then it contains (at least)
 		two disjoint sets $A, B \subset C$ such that both $G[A]$ and $G[B]$
 		are strongly connected components of $G[C]$ and 
 		$G[A]$ has no outgoing edge in $G[C]$ (i.e., $G[A]$ is a sink in the DAG 
 		of SCCs of $G[C]$) and $G[B]$ has no incoming edge in $G[C]$ (i.e., $G[B]$ is a 
 		source in the DAG of SCCs of $G[C]$).
 		That is, in $G[C]$ we have that $G[A]$ contains a \EdgeIsolatedOut{k'} component of some $u \in C$
 		and $G[B]$ contains a \EdgeIsolatedIn{k'} component of some $u' \in C$. 
		Both $G[A]$ and $G[B]$ may have the same property in $G$ or
		be new such components in $G[C]$ compared to $G$.
 		In any case it contradicts the assumptions if one of them has at most $\Delta$ 
 		edges and otherwise statement (\ref{sublem:klargesccs}) holds.
 		If $G[C]$ is strongly connected and contains a cut of with at most $k'$
 		edges, an 
 		analogous argument can be made for two disjoint sets $A, B \subset C$
 		by considering the DAG of SCCs of $G[C]$ with the cut edges 
 		removed. If the number of cut edges is minimal, we have that
 		the cut edges are the only incoming edges of $B$ and the 
 		only outgoing edges of $A$ in $G[C]$.
 		We have that case~(\ref{sublem:largecut})
 		holds if the assumptions of the lemma are satisfied.
	\proofend\end{proof}
	
	\begin{algorithm}[t!]
		\DontPrintSemicolon
		\KwIn{Strongly connected digraph $G=(V,E)$ and a list of vertices $L$ (initially $L = V$)}
		\KwOut{The $k$-edge-connected subgraphs of $G$}
		\BlankLine
		Let $m_0$ be the number of edges in the initial graph\;
		\If{$G$ has no cut of less than $k$ edges\label{kl:cut}}{
			\Return $\{G\}$ as $k$-edge connected subgraph\label{kl:output}
		}
		
		\While{$L \not = \emptyset$ and $G$ has more than $2 k \sqrt{m_0}$ edges}
						{
			Extract a vertex $u$ from $L$
			
			$S \gets $\ref{proc:kEOut}($G$, $u$, $\sqrt{m_0}$)\label{kl:search1}\;
			$S^R \gets $\ref{proc:kEOut}($G^R$, $u$, $\sqrt{m_0}$)\label{kl:search2}\;
			
			If either $S$ or $S^R$ is not empty, remove from $G$ all the incident edges to one non-empty set of $S$ and $S^R$ and add their endpoints to $L$ \label{kl:found}
		}
		Compute SCCs $C_1, \ldots, C_c$ of $G$\label{kl:sccs1}\;
		$U \gets \emptyset$\;
		\ForEach{$C_i, 1 \leq i \leq c$}{
			Remove a $(k-1)$-cut from $G[C_i]$ (if it exists) \label{kl:cuts-in-sccs}\;
			Recompute SCCs and delete the edges between them\label{kl:sccsdel}\;
			\ForEach{SCC $C'$}{
				Insert into $L'$ the vertices of $C'$ that are endpoints of newly 
				deleted edges\;
				$U \gets U \cup kECS(C',L')$\label{kl:recursion-big-sets}\;
			}
		}
		\Return $U$
		\caption{\textsf{$kECS(G, L)$}}
		\label{alg:kECSC}
	\end{algorithm}
	
	\begin{lemma}
		The algorithm $kECS$ is correct.
	\end{lemma}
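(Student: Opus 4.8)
The plan is to mirror the correctness proof of Algorithm~$2ECS$ from Section~\ref{sec:2-edge-connected-strong-components}, replacing strong bridges by edge cuts of size at most $k-1$ and \EdgeIsolatedOut{1}/\EdgeIsolatedIn{1} components by \EdgeIsolatedOut{(k-1)}/\EdgeIsolatedIn{(k-1)} components, and to split the argument into a soundness part and a completeness (plus maximality and termination) part. For soundness, I would first observe, by induction on the recursion, that $kECS$ is always invoked on a strongly connected graph: this holds for the initial call by hypothesis, and each recursive call in line~\ref{kl:recursion-big-sets} is made on a strongly connected component of the current graph. Consequently, whenever the algorithm outputs a subgraph in line~\ref{kl:output}, that subgraph is strongly connected and, by the test in line~\ref{kl:cut}, contains no edge cut of size less than~$k$; by definition it is therefore a $k$-edge-connected subgraph. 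That these subgraphs are moreover \emph{maximal} follows from the completeness argument below.

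For completeness, fix a maximal $k$-edge-connected subgraph $G[C]$ of $G$; the goal is to show that no edge with both endpoints in $C$ is ever deleted and that the recursion eventually reaches exactly~$C$. Since $G[C]$ is a $k$-edge-connected \emph{graph}, there are $k$ edge-disjoint paths inside $C$ between every ordered pair of vertices of $C$ (the connection between edge-disjoint paths and cuts is recalled in the preliminaries), so no edge cut of size less than $k$ can separate two vertices of $C$ in any graph $G'$ with $C \subseteq V(G')$. Now let $S$ be a set returned by Procedure~\ref{proc:kEOut} in line~\ref{kl:search1}; by Lemma~\ref{lemma:kEOut} it is a genuine \EdgeIsolatedOut{(k-1)} component of $u$ in the current graph, so it has at most $k-1$ outgoing edges. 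Deleting those at most $k-1$ edges leaves no directed path from $S$ to the rest of the graph, hence no path from a vertex of $C \cap S$ to a vertex of $C \setminus S$; were $\emptyset \subsetneq C \cap S \subsetneq C$, this would be an edge cut of size less than $k$ separating two vertices of $C$, a contradiction. Thus either $C \subseteq S$ or $C \cap S = \emptyset$, and the symmetric argument on the reverse graph gives the same for any \EdgeIsolatedIn{(k-1)} component $S^R$ obtained in line~\ref{kl:search2}. The edges removed in line~\ref{kl:found} are the boundary edges of the chosen component, so none of them has both endpoints in $C$.

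It remains to handle the other deletions and to argue termination. The edges deleted in line~\ref{kl:cuts-in-sccs} form an edge cut of size at most $k-1$, and those deleted in line~\ref{kl:sccsdel} run between distinct strongly connected components; since $G[C]$ is strongly connected and admits no cut of size less than $k$, none of these edges has both endpoints in $C$. Hence $C$ remains $k$-edge-connected throughout the execution, and, because any search in lines~\ref{kl:search1}--\ref{kl:search2} returns either a superset of $C$ or a set disjoint from $C$, the set $C$ is never split by the algorithm. For progress and termination, note that in each recursive call either the input graph has no cut of size less than $k$ and is reported immediately (line~\ref{kl:output}), or a cut of at most $k-1$ edges is removed (line~\ref{kl:cuts-in-sccs}) or all boundary edges of a nonempty proper vertex subset are removed (line~\ref{kl:found}); in each such case at least one pair of vertices becomes disconnected, so every subsequent recursive call is on a strictly smaller vertex set, which bounds the recursion. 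Since $C$ is never split, at every level of the recursion $C$ is contained in exactly one of the recursed-upon subgraphs; eventually the recursion reaches a subgraph $G[C']$ with $C \subseteq C'$ that has no cut of size less than $k$ and is thus a $k$-edge-connected subgraph containing $C$, whence $C = C'$ by maximality. Therefore $G[C]$ is output in line~\ref{kl:output}, and since $C$ was arbitrary, all maximal $k$-edge-connected subgraphs are reported (and, as noted, they are maximal).

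The step I expect to be the main obstacle is the claim that $C$ lies entirely inside or entirely outside each small \EdgeIsolatedOut{(k-1)} and \EdgeIsolatedIn{(k-1)} component. For $k=2$ this is immediate, since a single strong bridge cannot separate a $2$-edge-connected subgraph; for general $k$ one has to observe that the at most $k-1$ outgoing (resp.\ incoming) edges of such a component form an edge cut of size less than $k$ in the \emph{current} graph and then invoke the fact --- which relies on $G[C]$ being $k$-edge-connected as a standalone graph and not merely on pairwise $k$-edge-connectivity of its vertices in $G$ --- that no cut of size less than $k$ can separate two vertices of~$C$. Everything else is a routine transcription of the $k=2$ argument, using Lemma~\ref{lemma:kEOut} in place of the linear-time strong-bridge routine and Lemma~\ref{lemma:large-set-will-break-k} in place of Lemma~\ref{lemma:large-set-will-break}.
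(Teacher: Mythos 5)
Your proof is correct and follows essentially the same route as the paper's: soundness from the test in line~\ref{kl:cut} on a strongly connected input, and completeness by fixing a maximal $k$-edge-connected subgraph $G[C]$ and checking that no edge with both endpoints in $C$ is ever deleted at any of the three deletion sites, so that the recursion eventually reaches and reports $C$. The one step you flag as the main obstacle --- that $C$ lies entirely inside or outside each discovered \EdgeIsolatedOut{(k-1)}/\EdgeIsolatedIn{(k-1)} component because its at most $k-1$ boundary edges form a cut too small to separate a $k$-edge-connected subgraph --- is exactly the fact the paper asserts (via the $k$ edge-disjoint paths) without spelling out; your version just makes it explicit.
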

 	\begin{proof}
		Whenever the algorithm $kECS$ reports a $k$-edge-connected subgraph in 
		line~\ref{kl:output}, then it is a strongly connected subgraph 
		that does not contain any cut with at most $k-1$ edges, which is by 
		definition a $k$-edge-connected subgraph.
		Thus it suffices to show that $kECS$ reports all the maximal $k$-edge-connected 
		subgraphs. Notice that this also implies that the reported $k$-edge-connected 
		subgraphs are maximal.
		Let $C$ be a maximal $k$-edge-connected subgraph.
		We show that the vertices of $C$ do not get separated by the algorithm, and
		therefore $C$ is reported eventually as a $k$-edge-connected subgraph.
		Since there are $k$ edge-disjoint paths between every pair of vertices in~$C$, 
		any search for either a \EdgeIsolatedOut{(k-1)} or a \EdgeIsolatedIn{(k-1)}
		component of a vertex $u$ (lines~\ref{kl:search1}--\ref{kl:search2}) 
		either returns a superset of~$C$ or 
		fails to identify such a set containing a subset of the vertices of~$C$.
		Furthermore, notice that any deletion of an edge that does not have both endpoints 
		in $C$ does not affect the fact that $C$ is $k$-edge-connected.
		That is, unless an edge with both endpoints in~$C$ is deleted, 
		no cut with at most $k-1$ edges appears in $C$.
		Thus, it remains to show that no edge $(x,y)$ such that $x, y \in C$ is 
		ever deleted throughout the algorithm.
		The edges deleted in line~\ref{kl:found} of the algorithm are 
		incident to a \EdgeIsolatedOut{(k-1)} or a \EdgeIsolatedIn{(k-1)} component.
		Since $C$ is always fully inside or fully outside of such a set, 
		no edge from $C$ is deleted.
		The edges deleted in line~\ref{kl:cuts-in-sccs} are cuts with at most $k-1$
		edges and the edges deleted in line~\ref{kl:sccsdel} before the recursive calls 
		are between separate strongly connected components.
		Since $C$ is $k$-edge-connected, no edges from $C$ are deleted.
		Finally, notice that at each level of recursion at least cut with at most $k-1$
		edges of each strongly connected component of the graph is deleted 
		and the algorithm is recursively executed on each resulting strongly connected
		component. Thus, the recursive calls finally consider
		all strongly connected subgraphs that do not contain 
		cuts with at most $k-1$ edges, including $C$.
	\proofend\end{proof}
	
	\begin{lemma}\label{lemma:timekecs}
		The algorithm $kECS$ runs in $O(m \sqrt{m} \log n)$ time for constant~$k > 2$.
	\end{lemma}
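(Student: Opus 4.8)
The plan is to mirror the proof of Lemma~\ref{lemma:time2ecs} for $k=2$, with two substitutions: the $O(\sqrt{m})$-time subroutine for \EdgeIsolatedOut{1}/\EdgeIsolatedIn{1} components is replaced by Procedure~\ref{proc:kEOut}, and the linear-time strong-bridge computation is replaced by Gabow's algorithm for finding a cut of at most $k-1$ edges in $O(m\log n)$ time~\cite{gabow1995matroid}. First I would account for the searches in lines~\ref{kl:search1}--\ref{kl:search2}: by Lemma~\ref{lemma:kEOut}, each call to Procedure~\ref{proc:kEOut} with $\Delta = \sqrt{m_0}$ costs $O((2k)^{k+1}\sqrt{m_0}) = O(\sqrt{m_0})$ for constant~$k$, it is started once per vertex extracted from a list~$L$, and --- since the initial call inserts all $n$ vertices and every later insertion is the endpoint of a deleted edge --- there are only $O(m_0+n) = O(m_0)$ such extractions overall; the post-processing of a successful search (determining and deleting the incident non-cut edges and inserting their endpoints into~$L$) is charged to the at most $O(k\sqrt{m_0})$ edges of the returned component and to the deleted edges. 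Hence all searches together cost $O(m_0\sqrt{m_0})$.

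Next I would bound the remaining work of one recursive call on a graph $G'$ with $m_{G'}$ edges: lines~\ref{kl:cut} and~\ref{kl:cuts-in-sccs} run Gabow's algorithm in $O(k\,m_{G'}\log n)$ time, and the SCC computations in lines~\ref{kl:sccs1} and~\ref{kl:sccsdel} take $O(m_{G'})$ time, so for constant~$k$ a recursive call spends $O(m_{G'}\log n)$ time outside the searches. Since the recursion descends only into vertex sets whose induced subgraphs have pairwise disjoint edge sets --- distinct SCCs are edge-disjoint and deleting cut edges creates no new edges --- the total of this cost over all calls at any fixed recursion depth is $O(m_0\log n)$.

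The main step, exactly as for $k=2$, is bounding the recursion depth by $O(\sqrt{m_0})$. I would show that every graph handed to a recursive call in line~\ref{kl:recursion-big-sets} either has at most $2k\sqrt{m_0}$ edges, or has at least $\sqrt{m_0}$ fewer edges than $G'$, or $G'$ is already $k$-edge-connected (so the recursion stops). The \texttt{while} loop exits only when, for every remaining vertex, Procedure~\ref{proc:kEOut} finds no \EdgeIsolatedOut{(k-1)} or \EdgeIsolatedIn{(k-1)} component of size at most $\sqrt{m_0}$; Lemma~\ref{lemma:large-set-will-break-k} then guarantees that, on the set $C$ of vertices not absorbed into any small such component, either $G'[C]$ is $k$-edge-connected, or there are two sets $A,B\subset C$ with $|E(G'[A])|,|E(G'[B])| > \sqrt{m_0}$ in different SCCs, or every cut of at most $k-1$ edges separates two such large sets; in the latter two cases the operations in lines~\ref{kl:cuts-in-sccs}--\ref{kl:sccsdel} split $G'[C]$ so that each resulting piece loses more than $\sqrt{m_0}$ edges. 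Moreover every small component that was found has been isolated into its own SCC of at most $(2k-1)(\sqrt{m_0}+1) = O(k\sqrt{m_0})$ edges, hence at most that many vertices. Consequently, along any root-to-leaf path the ``loses at least $\sqrt{m_0}$ edges'' event happens at most $\sqrt{m_0}$ times, and once a call's graph has at most $2k\sqrt{m_0}$ edges (hence at most $2k\sqrt{m_0}$ vertices) the remaining depth is $O(k\sqrt{m_0}) = O(\sqrt{m_0})$ because each non-terminating recursive call removes at least a $(k-1)$-cut and thus disconnects at least one pair of vertices; so the depth is $O(\sqrt{m_0})$. Multiplying the per-level cost $O(m_0\log n)$ by this depth and adding the $O(m_0\sqrt{m_0})$ from the searches gives $O(m_0\sqrt{m_0}\log n)$, and $m_0 = m$ for the outermost call yields the stated bound. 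I expect no genuine obstacle here; the only points that need care are matching the $\max\{\cdot,\cdot\}$ size bound with the \texttt{while}-threshold $2k\sqrt{m_0}$ and the $O(k\sqrt{m_0})$ size of the components returned by Procedure~\ref{proc:kEOut}, and absorbing the factor $(2k)^{k+1}$ into the $O(\cdot)$ for constant~$k$ --- otherwise the argument is a faithful transcription of Lemma~\ref{lemma:time2ecs}.
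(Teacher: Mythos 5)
Your proposal is correct and follows essentially the same route as the paper's own proof: the same $O(m\sqrt{m})$ charging scheme for the searches via the lists $L$, the same $O(km_{G'}\log n)$ per-call cost for Gabow's algorithm summed over edge-disjoint subgraphs at each level, and the same $O(\sqrt{m})$ recursion-depth bound via Lemma~\ref{lemma:large-set-will-break-k} and the $\max\{m_{G'}-\sqrt{m},\,2k\sqrt{m}\}$ size argument. If anything, you are slightly more careful than the paper in stating the per-level cost as $O(m\log n)$ rather than $O(m)$, which is what the final bound actually requires.
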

	\begin{proof}
		Let $\ell(k, \Delta) = (2k-1)(\Delta+1)$ and $\Delta = \sqrt{m}$,
		where $m$ is the number of edges in the input graph.
		First notice that each time we search for a \EdgeIsolatedOut{(k-1)} or a 
		\EdgeIsolatedIn{(k-1)} component, we are searching for a component with 
		$k-1$ outgoing (resp., incoming) edges containing at most $\sqrt{m}$ edges 
		or with less than $k-1$ (resp., incoming) edges and less than 
		$\ell(k, \Delta)$ edges. 
		We can identify if such a component containing a given vertex~$u$ exists
		in time $O(\sqrt{m})$ for constant~$k$ by using the 
		algorithm of Section~\ref{sec:edgeIsolatedK}.
		We initiate such a search from each vertex that appears in the list $L$ of 
		some recursive call of the algorithm. 
 		Initially, we place all vertices into the list $L$.
		Throughout the algorithm we insert into $L$ only vertices that are endpoints of deleted edges.
		Therefore, the number of vertices that are added to the lists~$L$
		throughout the algorithm is $O(m)$.
		Hence, the total time spent on these searches is $O(m\sqrt{m})$ for constant~$k$.
		
		Consider now the time spend in each recursive call without the searches 
		for  \EdgeIsolatedOut{(k-1)} and \EdgeIsolatedIn{(k-1)} components.
		Let $G'$ be the graph for which the recursive call is made and let 
		$m_{G'} = |E(G')|$.
		In each recursive call the algorithm spends $O(k m_{G'} \log n)$ time to search 
		cuts with at most $k-1$ edges in $G'$ in lines~\ref{kl:cut} 
		and~\ref{kl:cuts-in-sccs} and $O(m_{G'})$
		to compute SCCs in lines~\ref{kl:sccs1} and~\ref{kl:sccsdel}.
		Since the subgraphs of different recursive calls at the same recursion depth are
		disjoint, the total time spent at each level of the recursion is $O(m)$.
		We now bound the recursion depth for constant~$k$ with $O(\sqrt{m})$.
		
		We show that the graph passed to each recursive call has at most 
		$\max\{m_{G'}-\sqrt{m}, 2 k \sqrt{m}\}$ edges, or it is a $k$-edge-connected 
		subgraph and thus the recursion stops. This implies a recursion depth of $O(k 
		\sqrt{m})$, i.e., $O(\sqrt{m})$ for constant $k$, as follows.
		If the graph passed to a recursive call has at most $2 k \sqrt{m}$ edges,
		then also the number of vertices of this graph is at most $2 k \sqrt{m}$.
		Therefore, even if the algorithm only removes one cut from every
		strongly connected component in each recursive call, 
		the total recursion depth is at most $O(\sqrt{m})$ for constant~$k$.
		On the other hand, the number of times that the graph passed to
		a recursive call has $\sqrt{m}$ fewer edges than $G'$ is at most $\sqrt{m}$.
		Overall, this implies that the recursion depth is bounded by $O(\sqrt{m})$.
		
		It remains to show the claimed bound on the size of the graph passed to 
		a recursive call in line~\ref{kl:recursion-big-sets}.
		For every \EdgeIsolatedOut{(k-1)} or \EdgeIsolatedIn{(k-1)} component with at most 
		$2 k \sqrt{m}$ edges that is discovered throughout the algorithm, 
		its incident edges are removed and therefore it will be in a separate 
		strongly connected component with at most $2 k \sqrt{m}$ edges.
		Let $C$ be the set of vertices that were not included in any 
		\EdgeIsolatedOut{(k-1)} or \EdgeIsolatedIn{(k-1)} component. By 
		Lemma~\ref{lemma:large-set-will-break-k}
		the subgraph $G'[C]$ either is a $k$-edge-connected subgraph or 
						there are two sets $A$ and $B$ with $|E(A)|, |E(B)| > \sqrt{m}$ that will be separated in line~\ref{kl:cuts-in-sccs}.
		Thus, every graph passed to the recursive call will have at most $\max\{|E(G')|-\sqrt{m}, 2 k \sqrt{m}\}$ edges.
		The lemma follows.
	\proofend\end{proof} 
\makeatletter{}

\subsection{$k$-edge-connected
subgraphs for undirected graphs.}

The problems of computing the $k$-edge-connected 
subgraphs of an undirected graph can be reduced to the equivalent problem for directed graphs in a straightforward way.
More specifically, for a given undirected graph we construct a directed graph with the same vertex set, and replace every undirected edge with two bidirectional edges.
On the resulting digraph the set of vertices of the $k$-edge-connected
subgraphs are equivalent to the set of vertices of the $k$-edge-connected
subgraphs in the original undirected graph.

The complexity of our algorithms is determined by the choice of the parameter $\Delta$ in the algorithm that searches for \EdgeIsolatedOut{(k-1)} and the \EdgeIsolatedIn{(k-1)} components of a vertex. 
The parameter $\Delta$ determines both the depth of the recursion, which is $O(m / \Delta)$, and the time we spend searching for small components, which is $O((m+n) \Delta)$ in total.

The second factor that affects the complexity is the time spent identifying a cut
at every depth of the recursion. 
Note that the time spent searching for a cut will dominate the $O(m)$ time it takes to compute the strongly connected components before executing the recursive call.
This factor is multiplied by the maximum recursion depth in the time complexity of the algorithm.
The digraph on which we executed our algorithm originates from an undirected graph, and we can use this to search for edge cuts
of size at most $k-1$ faster.
Thus, the time complexity of our algorithms is $O(t \cdot (m /\Delta)+ n\Delta)$, where $t$ is the time required to identify a cut of size at most $k-1$ in an undirected graph.

The edge 
cuts of size at most $2$ can be identified in linear time \cite{Galil:1991,3-connectivity:ht}.
It is easy to verify that the optimal choice of $\Delta$ is therefore $m/\sqrt{n}$ for $k = 3$.
For constant $k$, we can compute an edge cut of size at most $(k-1)$ in time $O(m+n\log n)$ \cite{gabow1995matroid}.
We choose $\Delta = m / \sqrt{n}$ for $k$-edge-connected subgraphs
as well as for $3$-edge-connected 
subgraphs.
We obtain the following result.

\begin{theorem}\label{th:kecsU}
	The maximal $k$-edge-connected subgraphs of an undirected graph can be computed in $O((m + n\log n)\sqrt{n})$~time on a undirected graph with $m$ edges and $n$ vertices. For the maximal $3$-edge-connected subgraphs, our algorithm runs in $O(m \sqrt{n})$ time.
\end{theorem}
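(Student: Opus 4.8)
The plan is to reduce the undirected problem to the directed one and re-run Algorithm~$kECS$ of Section~\ref{sec:k-edge-connected-strong-components} with the parameter~$\Delta$ left free. Given an undirected graph with $m$ edges and $n$ vertices, I would build the digraph $G$ on the same vertex set that contains both $(x,y)$ and $(y,x)$ for every undirected edge $\{x,y\}$; then $G$ has $2m$ edges, and a vertex set induces a $k$-edge-connected subgraph of the undirected graph iff it does so in $G$, so it suffices to run $kECS$ on $G$. The algorithm stays unchanged except that every computation of a directed cut of fewer than $k$ edges (lines~\ref{kl:cut} and~\ref{kl:cuts-in-sccs}) is replaced by a computation of an \emph{undirected} cut of fewer than $k$ edges; this is correct because in a bidirected graph a bipartition has fewer than $k$ outgoing directed edges iff the corresponding undirected cut has fewer than $k$ edges.

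Next I would redo the running-time bound of Lemma~\ref{lemma:timekecs} as a function of $\Delta$. As before, from one recursive level to the next the number of edges drops by at least $\Delta$ until a subgraph with only $O(k\Delta)$ edges is reached, so the recursion has depth $O(m/\Delta)$ plus an $O(k\Delta)$-deep tail; since the subgraphs at each level are vertex- and edge-disjoint, the SCC computations cost $O(m)$ per level and the $(<k)$-cut computations cost $O(t)$ per level, where $t$ is the time to find an undirected cut of size $\le k-1$ on an $m$-edge $n$-vertex graph. Summing over the levels yields $O(t\cdot m/\Delta)$ for these parts (the $O(m)$ SCC terms are absorbed). The searches via Procedure~\ref{proc:kEOut} each cost $O((2k)^{k+1}\Delta)=O(\Delta)$ for constant $k$; crucially, their number is $O(kn)$, not $O(m)$, which is where the bidirected structure is used a second time: if $S$ is a \EdgeIsolatedOut{(k-1)} component then, by symmetry of $G$, the number of undirected edges leaving $S$ equals the number of directed edges leaving $S$ and hence is at most $k-1$, so $S$ has at most $2(k-1)$ incident directed edges. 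Removing all of them (line~\ref{kl:found}) therefore inserts only $O(k)$ vertices into~$L$ and increases the number of strongly connected components (edges between distinct SCCs cannot exist in a bidirected graph), which can happen at most $n$ times in total; the same bound applies to the edges removed as $(<k)$-cuts. Hence only $O(kn)$ vertices are ever placed in the lists~$L$, and the total search time is $O(kn\cdot\Delta)=O(n\Delta)$ for constant $k$. Altogether the algorithm runs in $O\bigl(t\cdot(m/\Delta)+n\Delta\bigr)$ time.

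I would then set $\Delta=\Theta(m/\sqrt n)$, which balances the two terms to $O\bigl(t\sqrt n+m\sqrt n\bigr)$. For constant $k$, Gabow's algorithm~\cite{gabow1995matroid} gives $t=O(m+n\log n)$, hence the bound $O((m+n\log n)\sqrt n)$. For $k=3$, a cut of at most two edges (a bridge, or a $2$-edge-cut) can be found in linear time from the $3$-edge-connected components~\cite{Galil:1991,3-connectivity:ht}, so $t=O(m+n)$ and, discarding isolated vertices so that $m\ge n$, the bound is $O(m\sqrt n)$. This proves Theorem~\ref{th:kecsU}.

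I expect the main obstacle to be the choice $\Delta=\Theta(m/\sqrt n)$ together with the $O(k\Delta)$-deep tail of the recursion: unlike the choice $\Delta=\sqrt m$ used for directed graphs, here $\Delta$ may exceed $\sqrt m$, so the tail is not automatically dominated. The cleanest fix I see is to stop refining and run the basic $O(kmn\log n)$-time algorithm as soon as a subgraph has $O(k\Delta)$ edges (hence $O(k\Delta)$ vertices), which costs $O\bigl(k(k\Delta)^2\log(k\Delta)\bigr)=O\bigl(k^3(m^2/n)\log m\bigr)$ there; one then checks that this stays within the claimed bound in the sparse regime $m=O(n^{3/2})$ in which the theorem improves on the $O(n^2\log n)$ algorithm, while the non-improving dense regime can simply fall back to that algorithm. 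A secondary point to nail down is that the degenerate case in which removing a \EdgeIsolatedOut{(k-1)} component's incident edges fails to increase the number of strongly connected components still respects the $O(kn)$ bound on the number of searches.
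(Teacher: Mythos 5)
Your proposal takes essentially the same route as the paper: bidirect the input, rerun $kECS$ with the directed cut computations replaced by undirected ones, derive the running time $O\bigl(t\cdot(m/\Delta)+n\Delta\bigr)$, and set $\Delta=m/\sqrt{n}$ with $t=O(m+n\log n)$ via Gabow for general constant $k$ and $t=O(m)$ for $k=3$. The two points you single out --- that the number of local searches must be bounded by $O(kn)$ rather than $O(m)$, which your symmetry argument (a discovered \EdgeIsolatedOut{(k-1)} component of a bidirected graph has only $O(k)$ incident edges, and each discovery splits off a new strongly connected component, so there are at most $n$ such events) correctly supplies, and the tail of the recursion once subgraphs have $O(k\Delta)$ edges --- are exactly the places where the paper's own brief derivation is silent, so your treatment is, if anything, more careful than the original.
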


\bibliographystyle{plain}
\bibliography{references}
\pagebreak
\end{document}